\documentclass[review,onefignum,onetabnum]{siamart190516}

\usepackage[english]{babel}
\usepackage{amssymb}
\usepackage{enumerate,enumitem}
\usepackage{cite,etoolbox}
\usepackage{multirow,bigdelim,blkarray}

\usepackage[caption=false]{subfig}
\graphicspath{{Figures/}}
\hypersetup{unicode=true}

\nolinenumbers

\crefname{con}{Condition}{Conditions}
\crefalias{enumi}{con}
\newcommand{\change}[1]{#1}
\pdfstringdefDisableCommands{ \def\Cref#1{<#1>} }

\makeatletter 
\pretocmd\@bibitem{\color{black}\csname keycolor#1\endcsname}{}{\fail}
\newcommand\citedcolor[1]{\@namedef{keycolor#1}{\color{blue}}}
\makeatother

\usepackage{lipsum}
\usepackage{amsfonts}
\usepackage{graphicx}
\usepackage{epstopdf}

\ifpdf
  \DeclareGraphicsExtensions{.eps,.pdf,.png,.jpg}
\else
  \DeclareGraphicsExtensions{.eps}
\fi

\newsiamremark{remark}{Remark}
\newsiamremark{hypothesis}{Hypothesis}
\crefname{hypothesis}{Hypothesis}{Hypotheses}
\newsiamthm{claim}{Claim}

\usepackage{amsopn}

\newcommand{\bs}{\boldsymbol}
\newcommand{\bb}{\mathbb}
\newcommand{\mcal}{\mathcal}

\newcommand{\eye}{\bs{I}}
\newcommand{\zero}{\bs{0}}
\newcommand{\one}{\bs{1}}

\newcommand{\lb}{\left(}
\newcommand{\rb}{\right)}
\newcommand{\ls}{\left[}
\newcommand{\rs}{\right]}
\newcommand{\lc}{\left\{}
\newcommand{\rc}{\right\}}
\newcommand{\ld}{\left.}

\newcommand{\lv}{\left\vert}
\newcommand{\rv}{\right\vert}
\newcommand{\lV}{\left\Vert}
\newcommand{\rV}{\right\Vert}

\newcommand{\lcl}{\left\lceil}
\newcommand{\rcl}{\right\rceil}
\newcommand{\LRV}[1]{{\left\vert\kern-0.25ex\left\vert\kern-0.25ex\left\vert #1 \right\vert\kern-0.25ex\right\vert\kern-0.25ex\right\vert}}
\newcommand{\T}{\mathsf{T}}

\newcommand{\nat}{\bb{N}}

\newcommand{\nth}{^\mathsf{th}}

\newcommand{\rank}[1]{\mathsf{Rank}\lc#1\rc}

\newcommand{\supp}[1]{\mathsf{Supp}\lc#1\rc}

\newcommand{\range}[1]{\mathcal{CS}\lc#1\rc}

\newcommand{\matA}{\bs{A}}

\newcommand{\matD}{\bs{D}}

\newcommand{\matW}{\bs{W}}

\newcommand{\bbC}{\bb{C}}

\newcommand{\bbP}{\bb{P}}

\newcommand{\bbR}{\bb{R}}
\newcommand{\bbS}{\bb{S}}

\newcommand{\bbZ}{\bb{Z}}

\newcommand{\calB}{\mcal{B}}
\newcommand{\calC}{\mcal{C}}

\newcommand{\calE}{\mcal{E}}
\newcommand{\calF}{\mcal{F}}

\newcommand{\calH}{\mcal{H}}
\newcommand{\calI}{\mcal{I}}

\newcommand{\calL}{\mcal{L}}
\newcommand{\calM}{\mcal{M}}

\newcommand{\calP}{\mcal{P}}

\newcommand{\calS}{\mcal{S}}
\newcommand{\calT}{\mcal{T}}
\newcommand{\calU}{\mcal{U}}

\newcommand{\calW}{\mcal{W}}
\newcommand{\calX}{\mcal{X}}
\newcommand{\calY}{\mcal{Y}}

\newcommand{\veca}{\bs{a}}

\newcommand{\vecu}{\bs{u}}
\newcommand{\vecv}{\bs{v}}
\newcommand{\vecw}{\bs{w}}
\newcommand{\vecx}{\bs{x}}
\newcommand{\vecy}{\bs{y}}
\newcommand{\vecz}{\bs{z}}

\newcommand{\vecalpha}{\bs{\alpha}}

\newcommand{\matLambda}{\bs{\Lambda}}

\newcommand{\matPhi}{\bs{\Phi}}
\newcommand{\matPsi}{\bs{\Psi}}

\newcommand{\incomp}{\calT_{\mathrm{incomp}}}
\newcommand{\comp}{\calT_{\mathrm{comp}}}
\newcommand{\CS}[1]{\calC\calS\lc#1\rc}
\newcommand{\lar}{\calT_{\mathrm{large}}}
\newcommand{\smal}{\calT_{\mathrm{small}}}
\newcommand{\ER}{Erd\H os-R\' enyi }
\newcommand{\UER}{\mathcal{ER} }
\newcommand{\DER}{\mathcal{DER} }
\newcommand{\Ber}{\mathrm{Ber} }
\begin{document}

\title{Controllability of Network Opinion in Erd\H os-R\' enyi Graphs Using Sparse Control Inputs\thanks{\funding{This work was supported in part by National Science Foundation under grant ENG 60064237 and 1714180, and the U. S. Army Research Office under grant  W911NF-19-1-0365.}}}

\author{Geethu Joseph\thanks{Department of Electrical Engineering and Computer Science, Syracuse University, Syracuse, NY~13244, USA
  (\email{gjoseph@syr.edu},\email{varshney@syr.edu}).}
\and Buddhika Nettasinghe\thanks{School of Electrical and Computer Engineering, Cornell University, Ithaca, NY~14853, USA
  (\email{dwn26@cornell.edu}, \email{vikramk@cornell.edu}).}
\and Vikram Krishnamurthy\footnotemark[3]
\and Pramod K. Varshney\footnotemark[2]}

\maketitle
\begin{abstract}
This paper considers a social network modeled as an \ER random graph. Each individual in the network updates her opinion using the weighted average of the opinions of her neighbors. We explore how an external manipulative agent can drive the opinions of these individuals to a desired state with a limited additive influence on their innate opinions. 
 We show that the manipulative agent can steer the network opinion to any arbitrary value in finite time (i.e., the system is controllable) almost surely when there is no restriction on her influence. However, when the control input is sparsity constrained, the network opinion is  controllable with some probability. We lower bound this probability using the concentration properties of random vectors based on the L\' evy concentration function and small ball probabilities. Further, through numerical simulations, we  compare the probability of controllability in \ER graphs with that of power-law graphs to illustrate the key differences between the two models in terms of controllability.  Our theoretical and numerical results shed light on how controllability of the network opinion depends on the parameters such as the size and the connectivity of the network, and the sparsity constraints faced by the manipulative agent. 
\end{abstract}
\begin{keywords}
	Social network opinion, linear propagation, sparse-controllability, \ER graph, concentration inequalities
	\end{keywords}
\section{Introduction}
\label{sec:intro}


Consider the following problem regarding controllability of network opinion: the opinions propagated over an \ER random graph consisting of $N$ people are influenced by an external agent (referred to as the \textit{manipulative agent} henceforth)~\cite{ghaderi2013opinion}. The goal of the manipulative agent is to influence the opinions of the people so that the network opinion is driven to a desired state. However, the manipulative agent is subject to sparsity constraints: it can only influence a few people in the network at each time.

Our formulation is as follows: Let $k=1,2,\ldots$ denote discrete time. The network opinion at time $k$, denoted by $\vecx_k\in\bbR^N$ follows a DeGroot type linear propagation model~\cite{degroot1974reaching}:
\begin{equation}
\label{eq:system_model}
\vecx_{k}=\bar{\matA}\vecx_{k-1}+\vecu_k,
\end{equation}
 where  $\bar{\matA}\in\bbR^{N\times N}$ is the row-normalized weighted adjacency matrix of the network.  In the most general setting, the randomness of the system $\eqref{eq:system_model}$ is induced by a hierarchical probability measure over the space of graphs with $N$ nodes where the locations of non-zero entries of $\bar{\matA}$ are modeled using an \ER graph and the non-zero entries of each row of $\bar{\matA}$ are drawn from a continuous distribution on the unit simplex. 
The control input from the manipulative agent, $\vecu_k\in\bbR^{N}$ represents her influence at time $k$. We assume that the manipulative agent can influence only one of the predefined (overlapping) groups of people in the network at any time instant, and the size of each such group is small compared to the network size ${N}$. Thus, $\vecu_k$ is $s-$sparse with $s\ll N$ (i.e.,~\textit{budget-constrained}), and its support belongs to a set of admissible support sets $\calU$ (i.e., \textit{pattern-constrained}). 
 
A fundamental question that arises in this context is: \emph{Can the network opinion be driven to a desired value using the budget and pattern constrained inputs within a finite time duration?} The answer is that it is possible with some probability (that arises due to the randomness in \ER model). We use concentration inequalities to derive a lower bound for this probability which is a function of the network size $N$, the sparsity $s$, the edge probability $p$ and the set of admissible support sets $\calU$ for two versions of the \ER graph model: a directed graph and an undirected graph. The main results of the paper are informally stated  below:

\begin{theorem}[Informal statement of main results]
\label{thm:informal}
Consider the linear opinion formation model in \eqref{eq:system_model} where $\bar{\matA}$ is  the row-normalized adjacency matrix of an \ER graph. Assume that the control input $\vecu_{k}$ is $s$-sparse with $0<s\ll N$ and its support~(defined in \Cref{tab:notation}) satisfies  $\supp{\vecu_k}\in\calU$ for some set $\calU$ which depends on the specific sparsity pattern (detailed in \Cref{sec:sparsity}). Further, assume that the  nonzero entries of each row $\bar{\matA}$ is drawn from a continuous distribution on the unit simplex. Then, the following hold:
\begin{enumerate}[label=(\alph*),leftmargin=0.5cm]
\item \emph{Undirected graph:} If the edge probability $p$ satisfies
$( N-s)^{-1} \leq p \leq  1-( N-s)^{-1}$,
then the probability of controllability of the network opinion in \eqref{eq:system_model} is at least $o\lb\bar{Q}(\calU)(1-p)^{sN}\ls1-
e^{-c(p(N-s))}\rs\rb$. 
\item \emph{Directed graph:} If the edge probability $p$ satisfies
$C \frac{\log(N-s)}{N-s} < p \leq  1-C \frac{\log(N-s)}{N-s}$,
then the probability of controllability of the network opinion in \eqref{eq:system_model} is at least  $o\lb\bar{Q}(\calU)(1-p)^{sN}\ls1
-e^{-c(p(N-s))}\rs\rb$, 
\end{enumerate}
where $\bar{Q}(\calU)\geq1$ is an increasing function of $s$ and $N$ which depends on the set $\calU$. Here $C,c>0$ are universal constants.
\end{theorem}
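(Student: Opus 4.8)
The plan is to reduce the controllability question to a rank condition and then estimate the probability that this rank condition holds. Because the dynamics in \eqref{eq:system_model} are linear and time-invariant in $\bar{\matA}$, with the input at time $k$ allowed to be supported on any admissible set $S_k\in\calU$, the network opinion can be driven to an arbitrary target in finite time precisely when one can select supports $S_1,\ldots,S_N\in\calU$ so that the generalized controllability matrix, whose columns are $\bar{\matA}^{\,N-k}\vece_i$ for $i\in S_k$ and $k=1,\ldots,N$, spans $\bbR^N$. Thus the first step is to recast sparse controllability as the existence of an admissible selection making this matrix full rank, i.e.\ making a suitable $N\times N$ minor have nonzero determinant.

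Second, I would diagonalize $\bar{\matA}$. Since the nonzero weights are drawn from a continuous distribution on the simplex, $\bar{\matA}$ has distinct eigenvalues almost surely, with left eigenvectors $\vecw_1,\ldots,\vecw_N$. Expanding $\bar{\matA}^{\,j}\vece_i$ in the eigenbasis turns the full-rank condition into a Vandermonde-times-eigenvector factorization: coordinate $i$ excites mode $m$ exactly when $(\vecw_m)_i\neq0$. Hence controllability reduces to the combinatorial statement that the admissible supports jointly cover every mode, together with a genericity (nonvanishing determinant) condition on the eigenvector entries and the distinct-eigenvalue Vandermonde factor.

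Third, I would split the probability according to the two sources of randomness. Conditioning on the graph topology, the stated edge-probability regimes guarantee that the \ER graph (on the $N-s$ nodes outside a fixed control group) is connected in the undirected case and strongly connected in the directed case with probability at least $1-e^{-c\,p(N-s)}$; irreducibility of $\bar{\matA}$ then forces the left eigenvectors to have generically full support, ruling out topologically uncontrollable modes and producing the connectivity factor. The event that a prescribed collection of $sN$ potential edges incident to the control group is absent---of probability $(1-p)^{sN}$---imposes a block structure that lets the $s$ control coordinates excite distinct modes independently; accounting for the $\bar{Q}(\calU)$ admissible ways of realizing this yields the factor $\bar{Q}(\calU)$.

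Finally, for the continuous randomness I would invoke small-ball and L\'evy concentration estimates. Given a favorable topology, the determinant identified above is a nontrivial polynomial in the edge weights, hence nonzero almost surely; to obtain the quantitative bound that multiplies the topological factors, I would bound the L\'evy concentration function of this determinant, controlling the probability that the weight vector lands in the degenerate set. The main obstacle is precisely the coupling between the discrete topology and the continuous weights: the eigenvectors $\vecw_m$, and therefore the support-covering condition, themselves depend on the realized graph, so the anti-concentration estimate must be made uniform over the conditioning event before integrating back over topologies. Managing this decoupling---and the union over $\calU$ without incurring a crippling union-bound loss---is where the concentration machinery does the real work.
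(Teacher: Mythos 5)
Your reduction to a rank condition, and even your reading of the factors $(1-p)^{sN}$ and $\bar{Q}(\calU)$, are in the right spirit, but the probabilistic core of your argument places the anti-concentration on the wrong source of randomness, and the step it is meant to replace would fail. Because the weight matrix is rank one, $\matA\odot(\vecw\vecw^\T)=\diag{\vecw}\,\matA\,\diag{\vecw}$, so $\bar{\matA}=\matLambda\,\diag{\vecw}\,\matA\,\diag{\vecw}$ with $\matLambda$ and $\diag{\vecw}$ invertible diagonal matrices. Invertible diagonal scalings preserve the supports of left null vectors, so the decisive condition (condition~(b) of the paper's sparse-PBH test, which concerns only $\lambda=0$: no nonzero $\vecz$ with $\vecz^\T\bar{\matA}=\zero^\T$ and $\vecz_{\calS}=\zero$) holds for $\bar{\matA}$ if and only if it holds for the \emph{binary} matrix $\matA$. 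Consequently, conditioned on the topology, your ``degenerate set'' for the weight vector has probability zero or one: the continuous weights cannot genericize away a singular binary adjacency matrix, since every relevant determinant of the weighted matrix is a nonzero multiple of the corresponding binary determinant. The small-ball/L\'evy machinery must therefore be applied to the Bernoulli entries of $\matA$ themselves. This is exactly what the paper does: it lower-bounds the controllability event by the disjoint events that the rows indexed by some $\calI\subseteq\calS$, $\calS\in\calU$, vanish (producing the $(1-p)$ factors and the count $Q(i,\calU)$) while the principal submatrix $\matA_{\calI^c,\calI^c}$ is non-singular, and then invokes a quantitative invertibility theorem for sparse Bernoulli matrices (the LCD/small-ball analysis adapted from Vershynin and Luh--Vu in the symmetric case, and Basak--Rudelson in the directed case). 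Nothing in your outline produces this ingredient.

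Your proposed substitute --- connectivity or irreducibility forcing eigenvectors to have full support --- fails on both counts. In the undirected regime $p\geq(N-s)^{-1}$ the graph is \emph{not} connected with high probability: this is only the giant-component threshold, connectivity requires $p$ of order $\log(N-s)/(N-s)$, and below that threshold isolated vertices exist w.h.p., so the conditioning event you rely on is itself unlikely. And even on a (strongly) connected graph, Perron--Frobenius controls only the leading eigenvector; the remaining eigenvectors, and in particular the left null space --- the object condition~(b) actually constrains --- can vanish on large sets. Connected graphs routinely have singular adjacency matrices (a path on three vertices, any star), so connectivity can never substitute for the invertibility estimate. A smaller point: since $\lv\cup_{\calS\in\calU}\calS\rv=N$ and the input matrix is $\eye$, condition~(a) of the sparse-PBH test holds deterministically, so your Vandermonde mode-covering analysis across all eigenvalues is more than is needed; the entire difficulty is concentrated in the $\lambda=0$ rank condition, i.e., in the invertibility of a random binary submatrix.
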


The key insights gained from \Cref{thm:informal} are as follows: 
\begin{itemize}[leftmargin=0.3cm]
\item The probability of controllability increases with the network size $N$ and sparsity $s$ when all other parameters are kept constant. In particular, the network opinion can be controlled almost surely, if the network size is sufficiently large ($N\to\infty$), irrespective of the sparsity patterns. 

\item The probability bound on controllability of network opinion is small when the network is either loosely connected (small values of $p$ which are close to 0) or densely connected (large values of $p$ which are close to 1), i.e.,~the probability of the system being controllable is larger for moderate values of $p$.

\item The dependence of controllability on the sparsity pattern is captured by the function $\bar{Q}$. This relation allows us to compare the probability of controllability under various popular \change{sparsity structures}  like unconstrained, piece-wise, and block sparsity.
\end{itemize}
%
%
%
The technique that we use to prove our results relies on  rank-related properties of binary random matrices (that model the adjacency matrix of \ER graph). The key tools used to derive these properties are the L\' evy concentration function and small ball probability.  The analysis characterizes the smallest 
singular value of the (unweighted) binary adjacency matrix of an \ER graph, which can be of independent interest. 

\subsection{Practical context of the model and examples} \label{sec:practical}
Controllability of the network opinion has applications in marketing~\cite{nabi2014social}, targeted fake-news campaigns~\cite{shu2019studying}, and political advertising~\cite{cremonini2017controllability}. For such problems, the randomness of the \ER model captures the unknown structure of the underlying social network. The directed graphs represent social networks such as Twitter whereas the undirected graphs represent social networks such as Facebook. 

The DeGroot type opinion propagation model (given in \eqref{eq:system_model}) that we consider is an average based process that has been used widely in the literature~(\cite{golub2012homophily, banerjee2019naive, jackson2010social, wai2016active}) to model consensus and learning in multi-agent systems. Here, each individual in the network averages the opinions of her neighbors according to the non-uniform weights\footnote{The model in \cite{golub2012homophily, banerjee2019naive} uniformly averages the opinions of the neighbors whereas we consider the more general case of averaging based on social reputation. } prescribed by the $i\nth$ row of $\bar{\matA}$.

The sparsity pattern models the limited influence of  the manipulative agent in the following examples. 
  
\begin{enumerate}[leftmargin=0.5cm]
	\item Consider a company that sends one salesperson each to $m$ different parts of a country for marketing their products by offering free samples. At each time instant, the company can afford a total of $s$ free samples (and hence, constraining the sparsity of the control input to be~$s$), and each salesperson can visit at most $s/m$ potential customers (and thus, constraining the sparsity pattern). Hence, the control input $\vecu_k$ at each time instant $k$ is formed by concatenating $m$ sparse vectors. This is an example of the sparsity pattern called \emph{piece-wise sparsity} (\Cref{def:piece}). 
	
	\item Consider a candidate visiting different parts of a country during election as part of a political campaign. At each time instant, she can visit only a particular area and influence the group of people located there. Therefore, the input $\vecu_k$ at each time instant $k$ has nonzero entries occurring in clusters (corresponding to the individuals located in a particular area). This is an example of the sparsity pattern called \emph{block sparsity} or \emph{group sparsity} (\Cref{def:block}).
\end{enumerate}

\subsection{Related work} 
The linear opinion propagation model  in \eqref{eq:system_model} is similar to the widely used DeGroot model~\cite{degroot1974reaching}. For such linear models, several works have explored the problem of controlling opinions in a social network by drawing tools from control theory.  In \cite{rahmani2009controllability}, the authors  explore how a set of individuals (called the leaders or strategic agents) can be used to manipulate the opinions of a fixed set of people who are in their neighborhood. The leaders cannot strategically choose the people whom they can influence, but can design the control inputs suitably. On the contrary, our model is more flexible as the manipulative agent is an external entity for the network, and she can influence the individuals of her choice. Other similar works in \cite{dietrich2016opinion, liu2014control} also follow the model of strategic agents. Another study presented in \cite{dhal2015vulnerability} deals with the network vulnerability where an external agent aims to disrupt the synchronized state (consensus) of a networked system. Unlike our model which limits the number of people influenced by the manipulative agent, the model in  \cite{dhal2015vulnerability} constrains the total energy spent by the agent. This work \cite{dhal2015vulnerability} provides guarantees related to the inverse of controllability Gramian and its associated statistics. Further, a recent study in \cite{wendt2019control} models the control input from the manipulating agent as a local feedback control using a projection of the current network control and presents guarantees on Hurwitz stability of the equivalent feedback system. Moreover, in \cite{zhao2016bounded,zhao2018understanding}, the authors examine how the nodes of a social network can be classified as opinion leaders and opinion followers during the opinion formation process amidst factors such as trustworthiness and uncertainty in decision making. A detailed survey of such models can be found in \cite{aydougdu2017interaction}. Furthermore, the work in \cite{pasqualetti2014controllability} is closely related to ours  where the authors investigate the difficulty of controlling the opinion dynamics with metrics based on the control energy and explore the trade-off between the control energy and the number of control nodes. However, none of the existing works explore controllability of a networked system generated by a specific random graph model under the constraints on the sparsity pattern of the control input (which is the aim of this paper). A key reason for this literature gap is that the Kalman-rank based test for sparse-controllability~(i.e., controllability of a linear dynamical system under sparsity constraints on the inputs) is combinatorial which makes the analysis cumbersome. Recently, a simpler rank-based test which is similar to the classical PBH test~\cite{hautus1970stabilization,kailath1980linear} and equivalent to the Kalman type rank test has been presented in \cite{joseph2020controllability}. Therefore, we build upon the controllability conditions provided in \cite{joseph2020controllability} to derive sufficient conditions for sparse-controllability of network opinion. 

\vspace{0.2cm}
\change{\noindent \textbf{Organization:} \Cref{sec:systemmodel} presents the network opinion dynamics model. \Cref{sec:generalized} introduces the notion of generalized sparse-controllability and derives  the necessary and sufficient conditions for a deterministic linear dynamical system to be sparse-controllable. Based on this result, \Cref{sec:undirected} and \Cref{sec:directed} present the main results of our paper: the bounds on the probability with which the opinion dynamics for an \ER graph (directed or undirected) is sparse-controllable. Finally, \Cref{sec:simulations} presents numerical illustrations that complement and verify the main results.}

\vspace{0.2cm}
\noindent
\textbf{Notation:} Boldface lowercase letters denote vectors, boldface uppercase letters denote matrices, and calligraphic letters denote sets. \Cref{tab:notation} summarizes the other notations used throughout the paper. 
\begin{table}[ht]\caption{Summary of Notation}
\begin{tabular}{r l r l}
\hline\\
$\bbS^{N-1}$ &: Unit Euclidean sphere in $\bbR^N$\\
$\matA_i$ &: $i\nth$ column of $\matA$ &
$\matA_{ij}$ &: the $(i,j)\nth$ entry of $\matA$\\
$\odot$ &: Hadamard product &$\calP(a)$ &: Power set of $\lc1,2,\ldots,a\rc$\\
$\matA_{\calS}$ &\multicolumn{3}{l}{: Submatrix of $\matA$ formed by the columns  indexed by $\calS$}\\
 $\matA_{\calS,:}$ &\multicolumn{3}{l}{: Submatrix of $\matA$ formed by the  rows indexed by $\calS$}\\
 $\supp{\cdot}$ &\multicolumn{3}{l}{: Support of a vector, $\supp{\vecz}=\{i:\vecz_i\neq 0\}$ for any  $\vecz\in\bbR^N$}\\
$\mathrm{dist}(\veca,\calS)$ &\multicolumn{3}{l}{: $\ell_2$ distance of $\veca$ from $\calS$, $\mathrm{dist}(\veca,\calS)\triangleq\underset{\veca'\in\calS}{\inf}\lV\veca-\veca'\rV$}\\
\hline
\end{tabular}
\label{tab:notation}
\end{table}

\section{Opinion Dynamics Model}\label{sec:systemmodel}
\change{We consider a social network with $N$ individuals. The network opinion vector $\vecx_{k}\in\bbR^N$ (with element $\vecx_{k}[i]$ denoting the opinion of the $i\nth$ individual at time $k$) evolves according to \eqref{eq:system_model}, i.e., $\vecx_{k}=\bar{\matA}\vecx_{k-1}+\vecu_k$. Thus, the network opinion dynamics model has two key components: the random graph model that represents the probability distribution from which $\bar{\matA}$ is obtained;  and the sparsity model that imposes restrictions on the additive control inputs $\vecu_k$. We present them below:
}

\subsection{Random graph model}
\label{sec:connectedness}

\change{This subsection describes the probabilistic model for the matrix $\bar{A}$ in \eqref{eq:system_model} in terms of the underlying social network graph.. The underlying network is modeled as a weighted \ER graph~\cite{erdos1959on, bollobas2001random} which can be either an undirected graph (for networks like Facebook) or a directed graph (for networks like Twitter). The formal definitions of the undirected and directed \ER models are as follows:
	\begin{definition}[\ER model] 
		\label{def:ER}
		Let $\matW=\vecw\vecw^\T\in\bbR_+^{N\times N}$ where $\vecw$ is sampled from an arbitrary continuous distribution on $\bbR_+^N$. Also, let $\odot$ denote the Hadamard product of two matrices, and $\Ber(p)$ denote a Bernoulli distribution with parameter $p$. The row-normalized adjacency matrix
		\begin{equation}\label{eq:barA_defn}
		\bar{\matA}=\matLambda(\matA\odot\matW),
		\end{equation}
		with $\matLambda\in\bbR^{N\times N}$ being the normalizing diagonal matrix ($\matLambda_{ii}=\sum_{j=1}^N \lb \matA\odot\matW\rb_{ij}$), follows:
		\begin{enumerate}
			\item the undirected \ER distribution $\UER(N,\matW,p)$, if for $i,j=1,2,\ldots,N$,
			\begin{equation}
			\matA_{ij} \begin{cases}
			\overset{\text{iid}} {\sim} \Ber(p) & \text{ for } j<i\\
			= 0 & \text{ for } j=i
			\end{cases}   \text{ and } \matA_{ij}=\matA_{ji} \hspace{0.4cm}\text{ for } j>i; \label{eq:Erdos_Renyi_undirected}
			\end{equation}
			
			\item 
			the directed \ER  distribution  $\DER(N,\matW,p)$, if for $i,j=1,2,\ldots,N$,
			\begin{equation} \label{eq:Erdos_Renyi_directed}
			\matA_{ij} \begin{cases}
			\overset{\text{iid}} {\sim} \Ber(p) & \text{ for } j\neq i\\
			= 0 & \text{ for } j=i.
			\end{cases} 
			\end{equation}
		\end{enumerate}
	\end{definition}

A few words about \Cref{def:ER}. The matrix $\matA \in\lc0,1\rc^{N\times N}$ specifies the presence or absence of the edges in the graph, $\matW$ assigns positive weights to each existing edge, and $\matLambda$ ensures that the resulting matrix $\bar{\matA}$ is a stochastic matrix. The choice of $\matW$ as a rank 1 matrix (outer product $\vecw\vecw^\T$) makes our model more general than the unweighted \ER model (i.e., $\matW=\one\one^\T$) and useful in practical scenarios that require weighted edges. For example,  the weights can represent the reputation of an individual as discussed in \Cref{sec:practical}. Further, since $\matLambda$ normalizes each row of $\matA\odot\matW$ in \eqref{eq:barA_defn}, the entries of $\matW$ only determine the ratio of weights (i.e., $\matW=\vecw\vecw^\T$ and $\matW=\one\vecw^\T$ correspond to the same system). 

We note that $\bar{\matA}_{ij}$ specifies the trust that the $i\nth$ individual in the network has on her neighbor $j$ (i.e.,~$j$ such that $\matA_{ij} = 1$). Therefore, the first term in \eqref{eq:system_model} models how the neighbors affect the opinion formation. 
}

%
%
%

\subsection{Sparsity model}\label{sec:sparsity}
\change{We now discuss the second term $\vecu_k$ in  \eqref{eq:system_model} which models the influence from the manipulative agent.  The sparsity model refers to the constraints faced by the manipulative agent in the form of restrictions imposed on the support of the control inputs. 
We adopt a generalized sparsity model called the \emph{pattern-and-budget-constraint sparsity~(PBCS) model}.  The PBCS model is characterized by a set $\calU \subseteq\calP(N)$ called the \emph{admissible supports set} such that $\lv\calS\rv= s$ for each $\calS\in\calU$~(from \Cref{tab:notation}, $\calP(N)$ denotes the power set of $\lc 1,2,\ldots,N\rc$). Therefore,
	\begin{equation}
		\label{eq:unconstrained_sparsity_set}
		\calU\subseteq \calU_1\triangleq\lc\calS\subset\{1,2,\ldots,N\}:\lv\calS\rv= s\rc.
	\end{equation}	
	Then, any \emph{admissible control input} $\vecu_k$ chosen by the manipulative agent at any time instant~$k$ satisfies $\supp{\vecu_k}\subseteq \calS$ for some $\calS\in\calU$. Thus, an admissible control input is a vector whose support is a subset of an element in the admissible supports set $\calU$.
Three useful sparsity models that serve as examples of the PBCS model~\change{\cite{foucart2013mathematical,li2016piecewise}} are defined below:


\begin{definition}[Unconstrained sparsity model]\label{def:uncon}
A	PBCS model \eqref{eq:unconstrained_sparsity_set} is called an {unconstrained sparsity model} if the admissible supports set $\calU=\calU_1$ given in ~\eqref{eq:unconstrained_sparsity_set}.
\end{definition}
Thus, the unconstrained sparsity model~(\Cref{def:uncon}) is the least restrictive version of the PBCS model because from \eqref{eq:unconstrained_sparsity_set}, the admissible supports set $\calU_1$ has the largest cardinality among all the versions of the PBCS model.
\begin{definition}[Piece-wise sparsity model]\label{def:piece}
A PBCS model \eqref{eq:unconstrained_sparsity_set} is called a piece-wise sparsity model if the admissible supports set $\calU$ is given by
	\begin{equation*}
		\calU = \calU_2\triangleq \change{ \lc\calS: \calS = \cup_{i=1}^m\calS_i: \calS_i\subseteq \lc (i-1)N/m+j, 1\leq j \leq N/m\rc \text{ and } \lv\calS_i\rv=s/m\rc}.
	\end{equation*}
\end{definition}
The piece-wise sparsity model corresponds to the case where each admissible control input can be expressed as a concatenation of $m$ sparse vectors, each with sparsity of at most $s/m$ (assume that both $N$ and $s$ are divisible by $m$).
\begin{definition}[Block sparsity model]\label{def:block}
	A PBCS model \eqref{eq:unconstrained_sparsity_set} is called a block sparsity model if the admissible supports set $\calU$ is given by
	\begin{equation*}
		\calU = \calU_3\triangleq\change{\lc \calS :\calS= \cup_{i\in\calI}\calS_i: \calS_i\subseteq \{(i-1)m+j, 1\leq j\leq m\} \text{ and } \lv\calI\rv=s/m\rc}.
	\end{equation*}
\end{definition}
The block sparsity model corresponds to the case where the non-zero entries of each admissible control input form clusters of equal size $m\leq N$ (assume that both $N$ and $s$ are divisible by $m$). To give more insights, a brief discussion on the relationship between the block sparse vectors and the piece-wise sparse vectors is presented in \Cref{app:relation}.

Having specified the opinion dynamics model, we next derive the conditions under which the network opinion is controllable under the different sparsity models (defined in \Cref{sec:sparsity}) on graphs sampled from the \ER model given in \Cref{sec:connectedness}.  We start with a generalized sparse-controllability test in the next section. 
}
\section{Generalized Sparse-Controllability Results}\label{sec:generalized}
\change{This section presents necessary and sufficient conditions for sparse-controllability of a general linear dynamical system using control inputs from the PBCS model. In subsequent sections, we explore the probability with which these necessary and sufficient conditions are satisfied by the opinion dynamics model in~\eqref{eq:system_model}. Thus, the main result of this section serves as the starting point to derive the results in \Cref{sec:undirected,sec:directed}.}


\change{As the main result of this section~(necessary and sufficient conditions for sparse controllability of a linear dynamical system) is of interest to the broader field of dynamical systems, we express the result for a more general version of a linear dynamical system using notation that is different from Sec.~\ref{sec:systemmodel} and \eqref{eq:system_model}. More specifically, we consider the following general linear dynamical system:} 
\begin{equation}\label{eq:general_lds}
\vecalpha_k = \matPhi\vecalpha_{k-1}+\matPsi\vecv_{k},\hspace{1cm}k=1,2,\ldots,
\end{equation}
where $\vecalpha_k\in\bbR^N$ denotes the state vector, $\vecv_k\in\bbR^L$ denotes the control input, $\matPhi\in\bbR^{N\times N}$ denotes the state transition matrix, and $\matPsi\in\bbR^{N\times L}$ denotes the input matrix. 
For the system in \eqref{eq:general_lds}, the generalized sparse-controllability result is presented below. In the theorem below, we use $\vecv_k$ to represent a sparse input (see \Cref{sec:sparsity}).
\begin{theorem}[Controllability under PBCS model]
\label{thm:controlpattern}
Consider the linear system \eqref{eq:general_lds} with sparse input vectors $\vecv_k$. Let the admissible supports set be $\calU\subseteq\calP(L)$ with $\lv\calS\rv= s, \forall\calS\in\calU$. Then, for any initial state $\vecalpha_0$ and final state $\vecalpha_K$, there exists a sparse input sequence $\vecv_k$,  which steers the system from the state $\vecalpha_0$ to $\vecalpha_K$ for some finite $K$,  if and only if the following two conditions hold:
\begin{enumerate}[label={(\alph*)},leftmargin=0.5cm]
\item For all $\lambda\in\bbC$, the rank of $\begin{bmatrix}
\lambda\eye-\matPhi & \matPsi_{\calM}
\end{bmatrix}\in\bbR^{N\times (N+\lv\calM\rv)}$ is $N$, where the set $\calM=\cup_{\calS\in\calU}\calS\subseteq \{1,2,\ldots,L\}$. \label{con:control_pattern_a}
\item There exists a set $\calS\in\calU$ such that the rank of  $\begin{bmatrix}
\matPhi & \matPsi_{\calS}
\end{bmatrix} \in\bbR^{N\times (N+\lv\calS\rv)}$ is~$N$.
\label{con:control_pattern_b}
\end{enumerate}  
\end{theorem}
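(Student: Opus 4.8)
The plan is to unroll \eqref{eq:general_lds} and reduce the steering problem to a single condition on a reachable subspace, and then to treat \ref{con:control_pattern_a} and \ref{con:control_pattern_b} as two separate obstructions that together are also sufficient. Iterating the recursion from $\vecalpha_0$ gives, after $K$ steps, $\vecalpha_K=\matPhi^K\vecalpha_0+\sum_{l=0}^{K-1}\matPhi^{l}\matPsi_{\calT_l}\vecz_l$, where $\calT_l\in\calU$ is the support used $l$ steps before the end and $\vecz_l\in\bbR^s$ is free. Hence $\vecalpha_0$ can be steered to $\vecalpha_K$ exactly when $\vecalpha_K-\matPhi^K\vecalpha_0$ lies in the subspace $\calV(\calT_0,\dots,\calT_{K-1})\triangleq\sum_{l=0}^{K-1}\matPhi^{l}\range{\matPsi_{\calT_l}}$ for some horizon $K$ and admissible schedule $\{\calT_l\}$. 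Since $\vecalpha_0,\vecalpha_K$ are arbitrary and $\matPhi^K\vecalpha_0\in\range{\matPhi}$ for $K\geq1$, the system is sparse-controllable if and only if some horizon and schedule achieve $\calV=\bbR^N$; the theorem thus amounts to characterizing when this maximal reachable subspace fills $\bbR^N$.

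For necessity, each condition rules out $\calV=\bbR^N$. If \ref{con:control_pattern_a} fails there is $\lambda$ and a nonzero $\vecy$ with $\vecy\herm(\lambda\eye-\matPhi)=\zero$ and $\vecy\herm\matPsi_\calM=\zero$; because every $\calS\in\calU$ satisfies $\calS\subseteq\calM$, we get $\vecy\herm\matPhi^{l}\matPsi_{\calS}=\lambda^{l}\vecy\herm\matPsi_{\calS}=0$, so $\vecy\herm$ annihilates every $\calV$ while $\vecy\herm\matPhi^{K}\vecalpha_0=\lambda^{K}\vecy\herm\vecalpha_0$, which precludes reaching targets with $\vecy\herm\vecalpha_K\neq\lambda^{K}\vecy\herm\vecalpha_0$ (taking real and imaginary parts when $\lambda\in\bbC$). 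If \ref{con:control_pattern_b} fails then $\range{\matPhi}+\range{\matPsi_{\calS}}\neq\bbR^N$ for every $\calS\in\calU$; but for $K\geq1$ the offset and all $l\geq1$ terms lie in $\range{\matPhi}$, so every reachable state sits in $\bigcup_{\calS\in\calU}\lb\range{\matPhi}+\range{\matPsi_{\calS}}\rb$, a finite union of proper subspaces, which cannot be $\bbR^N$.

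For sufficiency I would build an explicit schedule using the Fitting decomposition $\bbR^N=\calR\oplus\calN$ into $\matPhi$-invariant subspaces on which $\matPhi$ acts invertibly, respectively nilpotently with index $\nu\leq N$. The guiding observation is that powers $l\geq\nu$ annihilate the $\calN$-component of any vector, whereas reaching $\calN$ requires the low powers $l<\nu$. Accordingly, assign the powers $l=0,\dots,\nu-1$ to the support $\calS^\ast$ supplied by \ref{con:control_pattern_b}, and partition the remaining powers $l\geq\nu$ into $\lv\calU\rv$ consecutive blocks, each of length at least $\dim\calR$, one block per support. On $\calN$, \ref{con:control_pattern_b} forces $\range{\matPsi_{\calS^\ast}}$ to surject onto $\calN/\matPhi\calN$, and the chain sum $\sum_{l=0}^{\nu-1}\matPhi^{l}(\cdot)$ then makes the $\calN$-projection of $\calV$ all of $\calN$. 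On $\calR$, the powers $\geq\nu$ kill the nilpotent part, so each block lands purely in $\calR$ and, by invertibility, realizes the full $\matPhi$-cyclic subspace generated by the $\calR$-part of $\range{\matPsi_{\calS}}$; summing over $\calU$ and invoking \ref{con:control_pattern_a} (controllability of $(\matPhi,\matPsi_\calM)$ descends to the quotient onto $\calR$) yields $\calR\subseteq\calV$. Finally $\calR\subseteq\calV$ together with $\Pi_\calN\calV=\calN$ forces $\calV=\bbR^N$.

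The \textbf{main obstacle} is precisely that only one admissible support, not the full matrix $\matPsi_\calM$, may be activated at each time step, so the classical PBH/Kalman reachability of the pair $(\matPhi,\matPsi_\calM)$ does not transfer verbatim: one cannot in general reproduce $\range{\matPsi_\calM}$ at a single power of $\matPhi$. The Fitting split is what dissolves this difficulty, since on the invertible block postponing a support to a later power costs no dimension (each support may take its own block of consecutive powers), while on the nilpotent block the single support $\calS^\ast$ from \ref{con:control_pattern_b} regenerates everything through consecutive low powers. The only remaining work is the routine bookkeeping that the two sub-schedules are compatible and jointly exhaust $\bbR^N$, together with the standard complex-eigenvalue bookkeeping in the necessity argument.
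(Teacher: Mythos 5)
Your proof is correct, and its sufficiency half takes a genuinely different route from the paper's. The paper first reduces sparse-controllability to the existence of a full-row-rank matrix $\begin{bmatrix}\matPhi^{K-1}\matPsi_{\calS_1}&\cdots&\matPsi_{\calS_K}\end{bmatrix}$ with all $\calS_k\in\calU$ (invoking the fact that $\bbR^N$ is not a finite union of proper subspaces), and then argues sufficiency by contraposition: it forms one long schedule in which each of the $P=\lv\calU\rv$ supports is held for $N$ consecutive powers, recognizes the resulting matrix, after a column permutation, as the Kalman controllability matrix of the pair $\lb\matPhi,\begin{bmatrix}\matPhi^{(P-1)N}\matPsi_{\calS_1}&\cdots&\matPsi_{\calS_P}\end{bmatrix}\rb$, and applies the classical Kalman and PBH tests together with an ordering-symmetry argument (a $\lambda=0$ left eigenvector kills the last support, and any support can be placed last) to conclude that failure of controllability forces condition \ref{con:control_pattern_a} or \ref{con:control_pattern_b} to fail. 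You instead prove sufficiency directly and constructively through the Fitting decomposition $\bbR^N=\calR\oplus\calN$: since the Fitting projections commute with $\matPhi$ (a point worth stating explicitly), condition \ref{con:control_pattern_b} projected onto $\calN$ gives $\Pi_{\calN}\range{\matPsi_{\calS^{\ast}}}+\matPhi\calN=\calN$, so holding $\calS^{\ast}$ at powers $0,\dots,\nu-1$ fills $\calN$ in projection, while condition \ref{con:control_pattern_a} projected onto $\calR$, combined with Cayley--Hamilton and the invariance of Krylov subspaces under the invertible restriction $\matPhi|_{\calR}$, shows that cycling the supports through disjoint blocks of $\dim\calR$ consecutive higher powers fills $\calR$; the two statements combine to give $\calV=\bbR^N$ exactly as you say. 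Your necessity half is also more self-contained (you exhibit unreachable targets directly instead of passing through the rank characterization), though its content matches the paper's. What each approach buys: the paper's proof is shorter because it delegates everything to the classical tests, whereas yours explains structurally why the two conditions arise --- condition \ref{con:control_pattern_b} is precisely what fills the nilpotent part of the state space, which only the final (low-power) inputs can reach, and condition \ref{con:control_pattern_a} fills the invertible part, where timing is immaterial --- and it yields an explicit schedule with horizon bound $K\leq\nu+\lv\calU\rv\dim\calR$. One cosmetic remark: your opening claim that sparse-controllability is equivalent to some schedule achieving $\calV=\bbR^N$ is justified only in the easy direction, but this is harmless, since your necessity argument never uses the hard direction and your sufficiency argument needs only the easy one.
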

\begin{proof}
See \Cref{app:control_pattern}.
\end{proof}

Both conditions of \Cref{thm:controlpattern} have to be satisfied simultaneously for the system to be controllable under the sparsity constraints on the input. Condition \ref{con:control_pattern_a} is identical to the classical PBH test for controllability of the reduced  linear system described by the matrix pair $(\matPhi,\matPsi_{\calM})$. Here, $\matPsi_{\calM}$ is the effective control matrix, as the entries of the control inputs corresponding to the complement $\calM^c$ of $\calM$ are always zero. Since all sparse-controllable systems are controllable, necessity of the first condition is straight-forward. Condition \ref{con:control_pattern_b} provides the extra condition to be satisfied by a controllable system (i.e., satisfies condition \ref{con:control_pattern_a}) to be controllable under the sparsity constraints. 

We note that \Cref{thm:controlpattern}  is an extension of the sparse-controllability result~\cite[Theorem 1]{joseph2020controllability} for a more general PBCS model. The original result~\cite[Theorem 1]{joseph2020controllability} is applicable only to the unconstrained sparsity model in \Cref{def:uncon}. 

Next, we apply \Cref{thm:controlpattern} to our stochastic setting and derive probabilistic results on sparse-controllability of network opinion dynamics. Before we launch into those main results, we provide a result on controllability of the network opinion using unconstrained (non-sparse) inputs. \change{
\begin{proposition}\label{prop:control}
	Consider the generalized linear system \eqref{eq:general_lds} and assume $\vecalpha_k=\vecx_k$, $\vecv_k=\vecu_k$, $\matPhi=\bar{\matA}$, and $\matPsi=\eye$ (i.e.,~the systems in  \eqref{eq:general_lds} and \eqref{eq:system_model} are equivalent). If the sparsity $s = N$ (i.e., the manipulative agent can influence any number of people), then the system \eqref{eq:general_lds} (and hence, the system \eqref{eq:system_model}) is controllable. 
\end{proposition}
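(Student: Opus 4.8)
The plan is to exploit the fact that the input matrix is the identity, $\matPsi=\eye$, so that the manipulative agent has full and unconstrained control authority: when $s=N$ the admissible supports set collapses to the single element $\calU=\lc\lc1,2,\ldots,N\rc\rc$, and hence \emph{every} vector in $\bbR^N$ is an admissible input. I would therefore prove controllability directly, in a single step, rather than invoking any rank machinery. Given any initial state $\vecalpha_0=\vecx_0$ and any target $\vecalpha_K=\vecx_K$, I would set $K=1$ and choose $\vecu_1=\vecx_K-\bar{\matA}\vecx_0$. Substituting into \eqref{eq:system_model} gives $\vecx_1=\bar{\matA}\vecx_0+\vecu_1=\vecx_K$, so the system reaches the desired state in one time step, and this $\vecu_1$ is admissible because $s=N$ removes every sparsity restriction.

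For consistency with the framework of this section, I would also record the equivalent derivation through \Cref{thm:controlpattern}, which is what makes this proposition the natural $s=N$ endpoint of the sparse-controllability results. When $s=N$ we have $\calM=\cup_{\calS\in\calU}\calS=\lc1,2,\ldots,N\rc$, so $\matPsi_{\calM}=\eye$. Condition \ref{con:control_pattern_a} then asks that $\begin{bmatrix}\lambda\eye-\bar{\matA} & \eye\end{bmatrix}$ have rank $N$ for every $\lambda\in\bbC$, which holds because the last $N$ columns already form the full-rank submatrix $\eye$. Likewise, taking $\calS=\lc1,2,\ldots,N\rc\in\calU$ in condition \ref{con:control_pattern_b} gives $\matPsi_{\calS}=\eye$, so $\begin{bmatrix}\bar{\matA} & \eye\end{bmatrix}$ contains $\eye$ and has rank $N$. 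Both conditions hold for every realization of $\bar{\matA}$, so \Cref{thm:controlpattern} yields controllability unconditionally.

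The honest assessment is that there is no real obstacle here: the result is immediate precisely because $\matPsi=\eye$ contributes a full-rank identity block to both rank tests, trivializing the PBH-type condition \ref{con:control_pattern_a} as well as the extra sparsity condition \ref{con:control_pattern_b}. The only point deserving a sentence of care is verifying that the one-step input respects the (now vacuous) sparsity constraint, which follows immediately from $s=N$. I expect this proposition to function mainly as a sanity check, confirming that the interesting probabilistic behavior studied in \Cref{sec:undirected,sec:directed} arises entirely from the sparsity budget $s\ll N$ rather than from any intrinsic deficiency of the dynamics $\bar{\matA}$.
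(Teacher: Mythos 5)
Your proposal is correct, and it actually contains the paper's entire proof as its second paragraph: the paper's argument is precisely the observation that substituting $\matPhi=\bar{\matA}$, $\matPsi=\eye$ into the two conditions of \Cref{thm:controlpattern} makes both rank tests trivially satisfied by the identity block, with no further detail given. Where you go beyond the paper is your lead argument, the direct one-step construction $\vecu_1=\vecx_K-\bar{\matA}\vecx_0$, which is more elementary: it avoids invoking \Cref{thm:controlpattern} (itself a nontrivial result proved in the appendix) and yields slightly more information, namely that when $s=N$ the target state is reachable in a single time step for every realization of $\bar{\matA}$, rather than merely in some finite horizon $K$. The trade-off is that the paper's route keeps the proposition formally tied to the rank framework that the subsequent probabilistic results (\Cref{thm:undirected,thm:directed}) build on, whereas your construction stands alone; having both, as you do, is the most informative presentation, and your closing remark correctly identifies the proposition's role as a sanity check isolating the sparsity budget $s\ll N$ as the sole source of the probabilistic behavior.
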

\begin{proof}
	The result follows immediately by substituting $\vecalpha_k=\vecx_k$, $\vecv_k=\vecu_k$, $\matPhi=\bar{\matA}$, and $\matPsi=\eye$ in the two conditions of Theorem~\ref{thm:controlpattern}.
\end{proof}
\Cref{prop:control} indicates that the opinion dynamics model in \eqref{eq:system_model} is always controllable when an arbitrary number of people can be influenced by the manipulative agent. Therefore, the non-trivial problem in this context is the controllability analysis in the sparse regime~(i.e.,~$s\ll N$) that we deal with in the subsequent sections. 
}

\section{Sparse Controllability of Opinions in an Undirected Graph}\label{sec:undirected}
\change{This section studies the probability with which the opinion dynamics system  \eqref{eq:system_model} satisfies the sparse-controllability conditions specified in \Cref{thm:controlpattern} when the underlying graph is sampled from the undirected \ER model $\UER(N, \matW, p)$ defined in \Cref{def:ER}. The main result of this section is a lower bound on the probability of controllability that is followed by a discussion of the insights that it yields. }

To state the result, we define the function $Q:\{0,1,\ldots,s\}\times \calP(N)\to\nat$ 
as follows: 
\begin{equation}
Q(t,\calU) \triangleq \lv \lc \calI\subseteq \calS: \calS\in\calU \text{ and } \lv\calI\rv=t\rc\rv.\label{eq:Q_defn}
\end{equation}
where $\nat$ is the set of natural numbers.
We recall that $\calU$ is the admissible supports set of the control input. Also, if $\calU^{(1)}\subset\calU^{(2)}$, we obtain that  $Q(t,\calU^{(1)})< Q(t,\calU^{(2)})$. 
Therefore, $Q(t,\calU)$ that counts the number of $t$-sized subsets of $\calS\in\calU$ and it can be considered as a measure of the flexibility of the sparsity pattern. If the sparsity model is less pattern-constrained, the size of $\calU$ increases which in turn increases $Q(t,\cdot)$. Intuitively, if the control input is less constrained, then the system is more likely to be controllable~\change{(as \Cref{prop:control} also suggests)}. This dependence of the probability of controllability on $\calU$ is captured by the function $Q$ as given by the following result. 

\change{The main result of this section is a lower bound (in terms of $Q$ in \eqref{eq:Q_defn} and the parameters of the undirected \ER model in \eqref{eq:Erdos_Renyi_undirected}) on the probability that the network opinion is controllable. In essence, \Cref{thm:undirected} suggests that the probability that the opinion dynamics model in \Cref{eq:system_model} is sparse controllable increases linearly  with the function $Q(\cdot,\calU)$ and exponentially with the network size $N$.}
\begin{theorem}[Sparse controllability of network opinion in undirected \ER graphs]\label{thm:undirected}
	\change{Consider the network opinion model  \eqref{eq:system_model}  where the constraints on control input $\vecu_k$ at each time $k$ are modeled by the PBCS model in \eqref{eq:unconstrained_sparsity_set} with the admissible supports set $\calU$ and the additional constraint $\lv\cup_{\calS\in\calU}\calS\rv=N$. Let the weighted adjacency matrix $\bar{\matA}$ be sampled from the undirected \ER model $\UER(N,\matW, p)$ given in \eqref{eq:Erdos_Renyi_undirected}.} Assume that
	\begin{equation}
	\label{eq:condition_unidrectedER}
	( N-s)^{-1} \leq p \leq  1-( N-s)^{-1}.
	\end{equation}
	Then, the network opinion of the system can be steered to any desired value from any initial network opinion in finite time, with probability at least $q$ where
	\begin{equation}\label{eq:undirected}
	q= \sum_{i=0}^s Q(i,\calU)(1-p)^{i(2N-i-1)/2} \ls1-C\exp\lb-c(p(N-i))^{1/32}\rb\rs,
	\end{equation}
	for some constants $C,c>0$, and $Q$ is as defined in \eqref{eq:Q_defn}.
\end{theorem}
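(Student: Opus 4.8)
The plan is to convert the probabilistic controllability question into a rank statement about the binary adjacency matrix $\matA$, and then to control that rank with an anti-concentration estimate for sparse symmetric random matrices. First I would invoke \Cref{thm:controlpattern} with $\matPhi=\bar{\matA}$, $\matPsi=\eye$ and $L=N$. The extra hypothesis $\lv\cup_{\calS\in\calU}\calS\rv=N$ forces $\calM=\{1,\ldots,N\}$, so $\matPsi_\calM=\eye$ and $\begin{bmatrix}\lambda\eye-\bar{\matA} & \eye\end{bmatrix}$ has rank $N$ for every $\lambda\in\bbC$; hence \ref{con:control_pattern_a} holds deterministically and controllability is equivalent to \ref{con:control_pattern_b}, the existence of $\calS\in\calU$ with $\begin{bmatrix}\bar{\matA} & \eye_\calS\end{bmatrix}$ of rank $N$. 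A vector $\vecy$ annihilates the rows of this matrix iff $\vecy^\T\bar{\matA}=\zero$ and $\vecy$ is supported on $\calS^c$, so the rank is $N$ iff the rows of $\bar{\matA}$ indexed by $\calS^c$ are independent, i.e.\ $\bar{\matA}_{\calS^c,:}$ has full row rank $N-s$. Since $\bar{\matA}=\matLambda(\matA\odot\vecw\vecw^\T)$ and the row/column scalings by $\matLambda$ and by the positive weights $\vecw$ are invertible on any set of non-isolated rows, the rank of $\bar{\matA}_{\calS^c,:}$ equals that of the binary matrix $\matA_{\calS^c,:}$. Thus controllability holds iff $\exists\,\calS\in\calU$ with $\matA_{\calS^c,:}$ of full row rank $N-s$.

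Second, I would organize the probability by the (random) set $\calI$ of isolated vertices. A row of $\matA_{\calS^c,:}$ vanishes exactly when its vertex is isolated, so full row rank forces $\calS\supseteq\calI$; hence $\calI$ must be \emph{coverable}, $\calI\subseteq\calS$ for some $\calS\in\calU$. I would condition on the disjoint events $\{\calI=\calJ\}$. The number of coverable sets of size $i$ is exactly $Q(i,\calU)$, while the probability that a fixed $i$-set is isolated is $(1-p)^{i(2N-i-1)/2}$, since the number of potential edges incident to $i$ vertices in an undirected graph on $N$ vertices is $\binom{i}{2}+i(N-i)=i(2N-i-1)/2$. On $\{\calI=\calJ\}$ with $\calJ$ coverable I pick $\calS\supseteq\calJ$; as the columns indexed by $\calJ$ are zero and $\calS^c\subseteq\calJ^c$, full row rank of $\matA_{\calS^c,:}$ is guaranteed once the principal submatrix $\matA_{\calJ^c,\calJ^c}$ --- the adjacency matrix of the induced \ER graph on the $N-i$ non-isolated vertices --- is nonsingular, because the rows of a nonsingular matrix remain independent after deleting some of them. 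The events $\{\calJ\text{ isolated}\}\cap\{\matA_{\calJ^c,\calJ^c}\text{ nonsingular}\}$ are disjoint across coverable $\calJ$ (nonsingularity of the core forces $\calI=\calJ$) and, since the edges incident to $\calJ$ and those inside $\calJ^c$ are independent, factor as a product, yielding
\begin{equation*}
\bbP(\text{controllable})\ \geq\ \sum_{i=0}^{s} Q(i,\calU)\,(1-p)^{i(2N-i-1)/2}\,\bbP\big(\matA_{\calJ^c,\calJ^c}\ \text{nonsingular}\big),
\end{equation*}
where the last factor depends only on $N-i$ and $p$.

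The crux is a high-probability lower bound, of the form $1-C\exp(-c(p(N-i))^{1/32})$, on the probability that the symmetric binary \ER matrix on $m=N-i$ vertices is nonsingular (equivalently, that its corank equals its number of isolated vertices, i.e.\ a lower bound on its smallest singular value once isolated vertices are excluded). I would prove this by bounding, uniformly over $\vecy\in\bbS^{m-1}$, the Lévy concentration / small-ball probability $\bbP(\lV\matA\vecy\rV\ \text{small})$. The standard route is an $\varepsilon$-net argument splitting $\bbS^{m-1}$ into compressible and incompressible vectors: for structured $\vecy$ one uses a union bound over a sparse net together with the sparsity of $\matA$, while for incompressible $\vecy$ one applies a Littlewood--Offord/Esseen-type anti-concentration inequality to show each coordinate of $\matA\vecy$ is spread out, so $\matA\vecy$ avoids $\zero$ with overwhelming probability. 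This step is the main obstacle, and it is where the symmetry of $\matA$ bites: the entries above and below the diagonal are tied, the rows are not independent, and the resulting quadratic dependence must be decoupled --- e.g.\ by conditioning on a sub-block and iterating the concentration estimate --- with each iteration costing a power in the exponent, which is the origin of the weak rate $1/32$. I expect this to occupy the bulk of the argument and to be precisely where the tools advertised in the introduction are deployed.

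Finally, I would assemble the pieces: symmetry over the coverable $i$-sets replaces the inner sum by $Q(i,\calU)$ times the common per-set probability, and substituting the concentration bound for the nonsingularity factor reproduces exactly the bracketed expression in \eqref{eq:undirected}. The hypothesis \eqref{eq:condition_unidrectedER} is what keeps $p(N-i)$ in the range where the anti-concentration estimate is informative, so it enters only through the crux step.
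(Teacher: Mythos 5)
Your proposal is correct and follows essentially the same route as the paper's proof: reduce controllability to condition \ref{con:control_pattern_b} of \Cref{thm:controlpattern}, pass from $\bar{\matA}$ to the binary matrix $\matA$ via the invertible diagonal scalings, decompose over coverable sets of isolated vertices (yielding the disjoint events, the factor $Q(i,\calU)$, the edge-count exponent $i(2N-i-1)/2$, and the factorization by independence), and invoke a $1-C\exp\lb-c(p(N-i))^{1/32}\rb$ invertibility bound for the symmetric sparse binary core, exactly as in \Cref{app:undirected}. Your sketch of that invertibility bound (compressible/incompressible split, L\'evy concentration and small-ball estimates, decoupling of the symmetric dependence as the source of the weak exponent $1/32$) is precisely the strategy of the paper's \Cref{thm:rankundirected}, proved in \Cref{app:rank_undirected}.
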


\begin{proof}
See \Cref{app:undirected}.
\end{proof} 


We note that in \eqref{eq:undirected}, the exponent $1/32$ is not the sharpest possible power of $pN$ and it can be improved. The crux of the result is that there exists a small constant $0<\tilde{c}<1$ such that the probability of controllability exceeds  $1-C\exp\lb-c(pN)^{\tilde{c}}\rb$. Also, comparing \Cref{thm:undirected} and \Cref{thm:informal}, we note that $\bar{Q}(\calU)=\sum_{i=0}^s Q(i,\calU)$ in the informal statement of the main results in \Cref{sec:intro}.
The other implications of \Cref{thm:undirected} that highlight its importance in practical contexts are discussed below. 
\subsection{Dependence on parameters}\label{sec:dependence}
\change{In this subsection, we explore the effect of the parameters of the opinion dynamics model~\eqref{eq:system_model} such as the sparsity~$s$, edge probability~$p$, network size $N$ on the lower bound $q$ given in \Cref{thm:undirected}.}
\begin{itemize}[leftmargin=0.3cm]
\item \emph{Sparsity $s$:} The dependence of $q$ on the \change{sparsity} $s$ is only through the first summation term in \eqref{eq:undirected} and it increases with the sparsity $s$. This observation is intuitive because larger sparsity $s$ implies less restrictions on the control inputs, and thus, it leads to a higher probability of controllability.

\item \emph{Edge probability $p$: } \change{The sufficient condition \eqref{eq:condition_unidrectedER} for sparse controllability of opinion dynamics includes a range of values of the edge probability~$p$ that depends on both the network size $N$ and sparsity $s$}. This suggests that when the network is highly connected~\change{(i.e.~$p \approx 1$)} or sparsely connected~\change{(i.e., $p \approx 0$)}, it is difficult to \change{control} the \change{network opinion} by influencing a few number of people. This is intuitive as it is not possible to influence the opinion of the network if \change{$p\approx0$} as the people in the network do not influence each other \change{significantly due to lack of connections among them}. Also, when $p$ is close to 1, we see that $\bar{\matA}$ \change{has approximately low rank} with high probability, and hence the network opinion cannot be driven to all the vectors in the null space of $\bar{\matA}$. This is because from \eqref{eq:system_model},
\begin{equation}\label{eq:sparse_recovery}
\vecx_k = \bar{\matA}^k\vecx_0+\sum_{i=1}^{k-1}\bar{\matA}^{k-i}\vecu_i+\vecu_k,
\end{equation}
where the first two terms belong to the column space of $\bar{\matA}$. Therefore, the sparsity of $\vecu_k$ should be greater than the nullity of $\bar{\matA}$ to ensure controllability.

\item \emph{Network size $N$:} As the network size $N$ increases, the \change{lower} bound on the probability of controllability $q$ (in \eqref{eq:undirected}) increases. This is due to the following:
\begin{equation*}
q\geq  Q(0,\calU)\ls1-C\exp\lb-c(p(N))^{1/32}\rb\rs=1-C\exp\lb-c(p(N))^{1/32}\rb,
\end{equation*}
since from \eqref{eq:Q_defn}, $Q(0,\calU)=1$ for any set $\calU$. \change{This observation does not immediately imply that the probability of controllability increases with $N$. However,} as $N\to\infty$, and $s=o(N^{b})$, for some $b\in(0,1)$, the value $q$ goes to unity. This implies that the network opinion is controllable, almost surely, for any admissible supports set~$\calU$. This result is intuitive because the \ER model ensures that the expected number of neighbors of every person on the network is $p(N-1)$. Thus, as the network size increases, the people who can be influenced by the manipulative people, are connected to more people in the network. Consequently, the opinion of an asymptotically large network is  controllable with probability one.

\item \emph{Giant connected component:} If $p < 1/N$, the undirected \ER graph almost surely has no connected component of size $O(\log(n))$ whereas if $p > 1/N$, the undirected \ER graph has a unique giant connected component containing a positive fraction of the nodes almost surely~\cite{bollobas1984evolution,jackson2010social}. We note that the lower bound in \eqref{eq:condition_unidrectedER} is larger than this threshold value ($1/N$) of $p$ required for the almost sure existence of a unique giant component in the graph, and thus, our results hold only when a unique giant connected component exists almost surely. \change{Here, we do not make explicit assumptions on the connected components, and the almost sure existence of a unique giant component follow automatically from~\eqref{eq:condition_unidrectedER}. }

\item \emph{\change{Versions of the PBCS model}:}
\change{
We note that all piece-wise and block sparse vectors~(\Cref{def:piece,def:block}) belong to the set of unconstrained sparse vectors~(\Cref{def:uncon}). Thus,  if the network opinion is controllable using piece-wise sparse or block sparse control inputs, it is controllable with unconstrained sparse control inputs. Also, the block sparsity can be seen as a special case of piece-wise sparsity with a common support for all blocks (see \Cref{app:relation}). Therefore, we have\footnote{
Using straightforward computations, we get $Q(i,\calU_1)=\binom{N}{i}, $
\begin{equation*}
 Q(i,\calU_2) = \sum_{\substack{0\leq j_1, \ldots, j_m\leq \frac{s}{m}\\j_1+j_2\ldots+j_m=i}}\prod_{l=1}^m\binom{\frac{N}{m}}{j_l} \text{ and } Q(i,\calU_3) \!=\!
\sum_{l=\lcl \frac{im}{N} \rcl}^{\min\lc i,\frac{sm}{N}\rc}\binom{m}{l}\lb\frac{N}{m}\rb^l\binom{l\lb\frac{N}{m}-1\rb}{i-l}.
\end{equation*}
}
\begin{equation}\label{eq:sparsity_pattern_order}
Q(i,\calU_3)\leq Q(i,\calU_2)\leq Q(i,\calU_1)
\end{equation}
where $\calU_1,\calU_2,\calU_3\subseteq \calP(N)$ correspond to the three different versions of the PBCS model given in \Cref{def:uncon,def:piece,def:block}.
Thus, out of the three models, the unconstrained sparse vectors offer the highest probability of controlling network opinion, followed by the piece-wise sparse vectors.
}

\end{itemize}

\subsection{Design of sparse inputs}
\Cref{thm:rankundirected} deals with the existence of a set of sparse vectors such that the network opinion can be driven from any initial state $\vecx_0\in\bbR^N$ to any final state $\vecx_f\in\bbR^N$. The next important question is the design of this set of sparse vectors. The problem can be cast as a sparse vector recovery problem using \eqref{eq:sparse_recovery} where we solve for $\lc\vecu_i\rc_{i=1}^k$~\cite{sefati2015linear,sriram2020control,kafashan2016relating}. To elaborate, \cite[Theorem 3]{joseph2020controllability} ensures that we need at most $N$ control input vectors (i.e., $k=N$) to drive the system from any arbitrary initial state to the desired state. Therefore, we consider the following sparse vector recovery problem:
$\vecx_f-\matA^N\vecx_0 = \begin{bmatrix}
\matA^{N-1} & \matA^{N-2} & \ldots \eye
\end{bmatrix}\tilde{\vecu}$,
where $\tilde{\vecu} = \begin{bmatrix}
\vecu_1^\T & \vecu_2^\T &\hdots &\vecu_N^\T
\end{bmatrix}^T\in\bbR^{N^2}$ is a piece-wise sparse vector with $N$ blocks and each block being $s-$sparse. Then, the control input $\tilde{\vecu}$ can be estimated using the piece-wise sparse recovery algorithms~\cite{li2016piecewise,sriram2020control}. 

\section{Sparse Controllability of Opinions in  a Directed Graph}\label{sec:directed}
\change{In this section, we extend \Cref{thm:undirected} in \Cref{sec:undirected} to the directed \ER model $\DER(N,\matW,p)$ specified in \Cref{def:ER}. The main result of this section (\Cref{thm:directed}) establishes a lower bound on the probability of sparse-controllability of network opinion in a directed \ER graph. This probability bound increases linearly with the counting function $Q(\cdot,\calU)$ in \eqref{eq:Q_defn} (which quantifies the complexity of the sparsity model $\calU$) and exponentially with the network size~$N$. 
}
\begin{theorem}[Sparse controllability of network opinion in directed \ER graphs]\label{thm:directed}
	\change{Consider the network opinion model  \eqref{eq:system_model}  where the constraints on control input $\vecu_k$ at each time $k$ are modeled by the PBCS model in \eqref{eq:unconstrained_sparsity_set} with the admissible supports set $\calU$ and the additional constraint $\lv\cup_{\calS\in\calU}\calS\rv=N$. Let the weighted adjacency matrix $\bar{\matA}$ be sampled from the directed \ER model $\DER(N,\matW,p)$ given in \eqref{eq:Erdos_Renyi_directed}.} Assume that
	\begin{equation}\label{eq:pcon_directed}
	C (N-s)^{-1}\log(N-s) < p \leq  1-C (N-s)^{-1}\log(N-s).
	\end{equation}
	Then, the network opinion of the system can be steered to any desired value from any initial network opinion in finite time, with probability at least $q$ where 
	\begin{equation}\label{eq:directed}
	q= \sum_{i=0}^s Q(i,\calU)(1-p)^{i(N-1)}\ls1-\exp\lb-c(p(N-i))\rb\rs,
	\end{equation}
	for some constants $C,c>0$, and $Q$ is as defined in \eqref{eq:Q_defn}.
\end{theorem}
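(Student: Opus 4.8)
The plan is to reduce the probabilistic controllability statement to a full-rank condition on the underlying binary adjacency matrix $\matA$, and then to lower-bound the probability of that rank condition by decomposing over the set of all-zero rows of $\matA$, exploiting the independence of rows that is special to the directed model.

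First I would invoke \Cref{thm:controlpattern} with $\matPhi=\bar{\matA}$, $\matPsi=\eye$, and $\calM=\cup_{\calS\in\calU}\calS$. The extra hypothesis $\lv\cup_{\calS\in\calU}\calS\rv=N$ forces $\calM=\{1,\ldots,N\}$, so $\matPsi_{\calM}=\eye$ and condition~\ref{con:control_pattern_a} holds deterministically for every $\lambda$ (the identity block alone has rank $N$). Hence sparse-controllability is equivalent to condition~\ref{con:control_pattern_b}: the existence of some $\calS\in\calU$ for which $\begin{bmatrix}\bar{\matA} & \eye_{\calS}\end{bmatrix}$ has rank $N$. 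A short linear-algebra step shows this holds iff the retained row block $\bar{\matA}_{\calS^c,:}$ has full row rank $N-s$. Writing $\bar{\matA}=\matD_1\matA\matD_2$ with invertible diagonal factors $\matD_1,\matD_2$ (valid on the rows that are not identically zero, which are the only rows we ever retain), row and column scaling preserves rank, so the whole event reduces to: there exists $\calS\in\calU$ with $\rank{\matA_{\calS^c,:}}=N-s$, now stated purely in terms of the unweighted directed \ER matrix from \eqref{eq:Erdos_Renyi_directed}.

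Next I would use that the rows of $\matA$ are mutually independent, each carrying $N-1$ i.i.d.\ $\Ber(p)$ entries. Let $\calZ$ be the (random) set of all-zero rows. The key sufficiency claim is: if $\calZ$ is \emph{coverable}, i.e.\ $\calZ\subseteq\calS$ for some $\calS\in\calU$ (equivalently $\calZ$ is one of the $Q(\lv\calZ\rv,\calU)$ subsets counted by \eqref{eq:Q_defn}), and the surviving block $\matA_{\calZ^c,:}$ has full row rank $N-\lv\calZ\rv$, then the system is controllable --- because for any admissible $\calS_0\supseteq\calZ$ the retained rows $\calS_0^c\subseteq\calZ^c$ form a subfamily of a linearly independent set, hence are themselves independent and give $\rank{\matA_{\calS_0^c,:}}=N-s$. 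I would then organize the bound as a disjoint sum over the value of $\calZ$: across coverable $i$-subsets $\calI$, the events $\{\calZ=\calI\}\cap\{\rank{\matA_{\calI^c,:}}=N-i\}$ are pairwise disjoint, and since full row rank already forces those $N-i$ rows to be nonzero, each such event coincides with $\{\text{rows }\calI\text{ zero}\}\cap\{\rank{\matA_{\calI^c,:}}=N-i\}$ and, by row independence, factors as $(1-p)^{i(N-1)}\,\bbP\ls\rank{\matA_{\calI^c,:}}=N-i\rs$. Dropping the non-coverable configurations and those with $\lv\calZ\rv>s$ (a legitimate lower bound) and summing the $Q(i,\calU)$ coverable $i$-subsets for each $i=0,\ldots,s$ reproduces exactly the structure of \eqref{eq:directed}.

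The main obstacle is the remaining rank estimate: for the wide matrix $\matM=\matA_{\calI^c,:}$ of size $(N-i)\times N$ with independent directed-\ER rows, I must show $\bbP\ls\rank{\matM}=N-i\rs\ge 1-\exp\lb-c\,p(N-i)\rb$ under the two-sided density condition \eqref{eq:pcon_directed}. I would establish this as a separate lemma (the directed analogue of the rank estimate underlying \Cref{thm:undirected}), and this is where the L\'evy concentration function and small-ball probabilities enter. The cleanest route exposes the rows one at a time and bounds the probability that a fresh $\Ber(p)$ row lands in the span of the previously revealed rows; a small-ball / Littlewood--Offord estimate controls $\bbP\ls\langle\text{row},\vecv\rangle=0\rs$ uniformly over the normal directions $\vecv$, the lower bound $p>C\log(N-s)/(N-s)$ eliminates zero rows (the dominant obstruction near the connectivity threshold), and the symmetric upper bound $p\le 1-C\log(N-s)/(N-s)$ handles the dense regime by passing to the complementary graph with $\Ber(1-p)$ entries. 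Because the rows are genuinely independent here --- unlike the symmetric coupling in the undirected case --- the concentration is far sharper, which is precisely why the directed exponent is the clean $p(N-i)$ rather than the $(p(N-i))^{1/32}$ appearing in \Cref{thm:undirected}. Combining this lemma with the disjoint decomposition yields the stated bound $q$.
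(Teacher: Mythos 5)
Your overall architecture coincides with the paper's proof: condition~\ref{con:control_pattern_a} of \Cref{thm:controlpattern} holds trivially because $\calM=\{1,\ldots,N\}$ makes $\matPsi_{\calM}=\eye$; condition~\ref{con:control_pattern_b} reduces to full row rank of $\bar{\matA}_{\calS^c,:}$; the weights are stripped off by diagonal scaling (your factorization $\bar{\matA}=\matD_1\matA\matD_2$ is exactly the content of \Cref{lem:Hadamard}); and your disjoint decomposition over the exact zero-row set $\calZ=\calI$, with coverable sets counted by $Q(i,\calU)$ and the factor $(1-p)^{i(N-1)}$ obtained from row independence, is precisely the paper's chain of steps through \eqref{eq:Ei_relax} and \eqref{eq:AI=01}. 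The one place you diverge is the final rank estimate, and that is where your proposal has a genuine gap.

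The paper never proves a rank bound for the wide matrix $\matA_{\calI^c,:}$. It relaxes once more, in \eqref{eq:E_relax2}, to the event that the \emph{square} principal submatrix $\matA_{\calI^c,\calI^c}$ is non-singular --- which is itself the adjacency matrix of a directed \ER graph on $N-i$ vertices --- and then invokes \Cref{thm:rank_directed}, an immediate corollary of Basak--Rudelson, as a black box. You instead propose to prove $\bbP\lc\rank{\matA_{\calI^c,:}}=N-i\rc\geq 1-\exp(-cp(N-i))$ from scratch by exposing rows one at a time, claiming that a Littlewood--Offord estimate controls $\bbP\lc\langle\text{row},\vecv\rangle=0\rc$ \emph{uniformly} over normal directions $\vecv$. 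No such uniform estimate exists: if the normal to the span of the previously exposed rows happens to be a coordinate vector $\vece_j$, then $\bbP\lc\langle\text{row},\vece_j\rangle=0\rc=1-p$, which is nowhere near exponentially small, so the row-by-row union bound collapses. Repairing this requires showing that the normals of the partial row spans are, with high probability, incompressible vectors with large LCD --- exactly the compressible/incompressible decomposition and small-ball machinery that the paper carries out in \Cref{app:rank_undirected} for the symmetric case and imports via Basak--Rudelson for the directed case. Fortunately the gap is closable by citation rather than by new work: full row rank of $\matA_{\calI^c,:}$ is implied by non-singularity of its square submatrix $\matA_{\calI^c,\calI^c}$ (and the hypothesis \eqref{eq:pcon_directed} guarantees $C\log(N-i)/(N-i)\leq p\leq 1-C\log(N-i)/(N-i)$ for every $i\leq s$), so your needed lemma follows from, rather than needing to replace, \Cref{thm:rank_directed}. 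With that substitution your argument becomes the paper's proof.
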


\begin{proof}
See \Cref{app:directed}.
\end{proof}
We make similar observations regarding the result as those discussed in \Cref{sec:undirected} about \Cref{thm:undirected}. Also, we note that the range of edge connectedness $p$ is shorter for directed graphs compared to undirected graphs. Here, we require the graph to be connected with higher probability which is $\log (N-s)$ times more than the undirected graph. For the same edge probability, the undirected graph is likely to have more number of connections and hence, it is more controllable.
Further, the bound on the probability of controllability is of similar order for directed graphs and undirected graphs, when the other parameters are kept the same. Therefore, the direction of information flow (uni-directional in directed graphs vs bi-directional in undirected graphs) does not have a significant effect on the probability bound.


\section{Numerical Experiments}\label{sec:simulations}
\begin{figure}
\begin{center}
\subfloat[Undirected graph with $N=12$.]
{
\includegraphics[width=5.7cm]{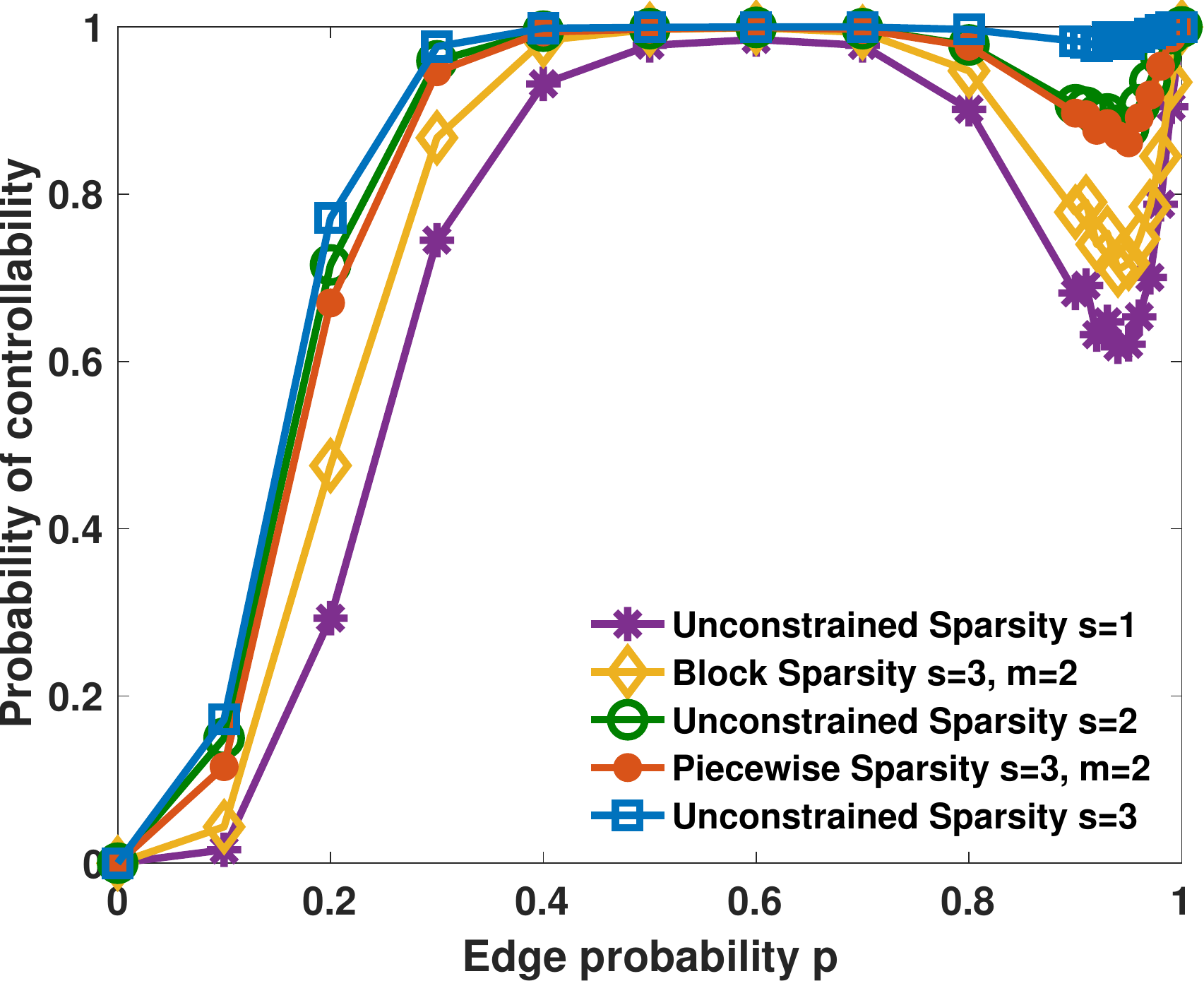}
\label{fig:p_undirected12}
}
\subfloat[Directed graph with $N=12$.]
{
\includegraphics[width=5.7cm]{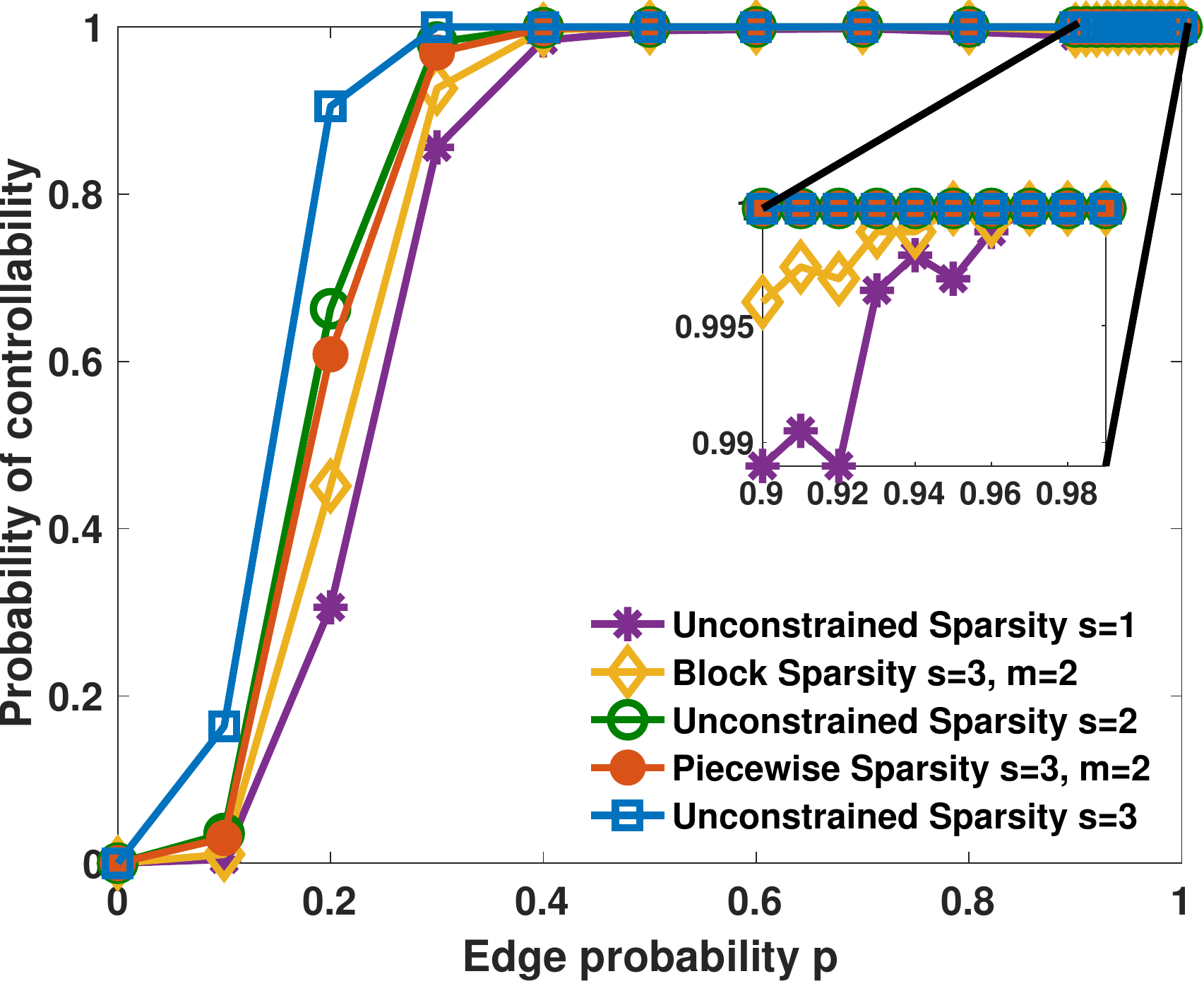}
\label{fig:p_directed12}
}

\vspace{-0.3cm}
\subfloat[Undirected graph with $N=20$.]
{
\includegraphics[width=5.7cm]{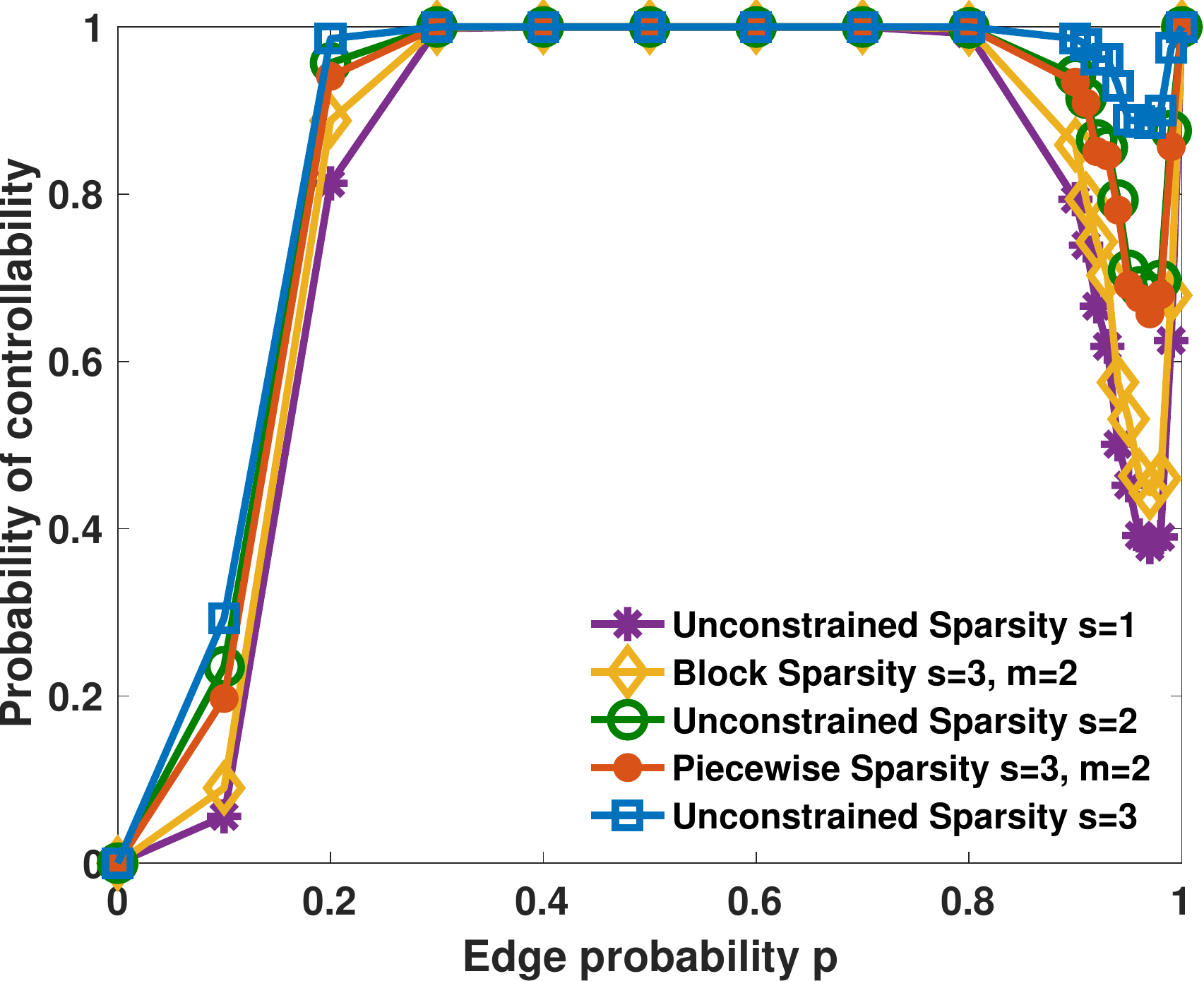}
\label{fig:p_undirected20}
}
\subfloat[Directed graph with $N=20$.]
{
\includegraphics[width=5.7cm]{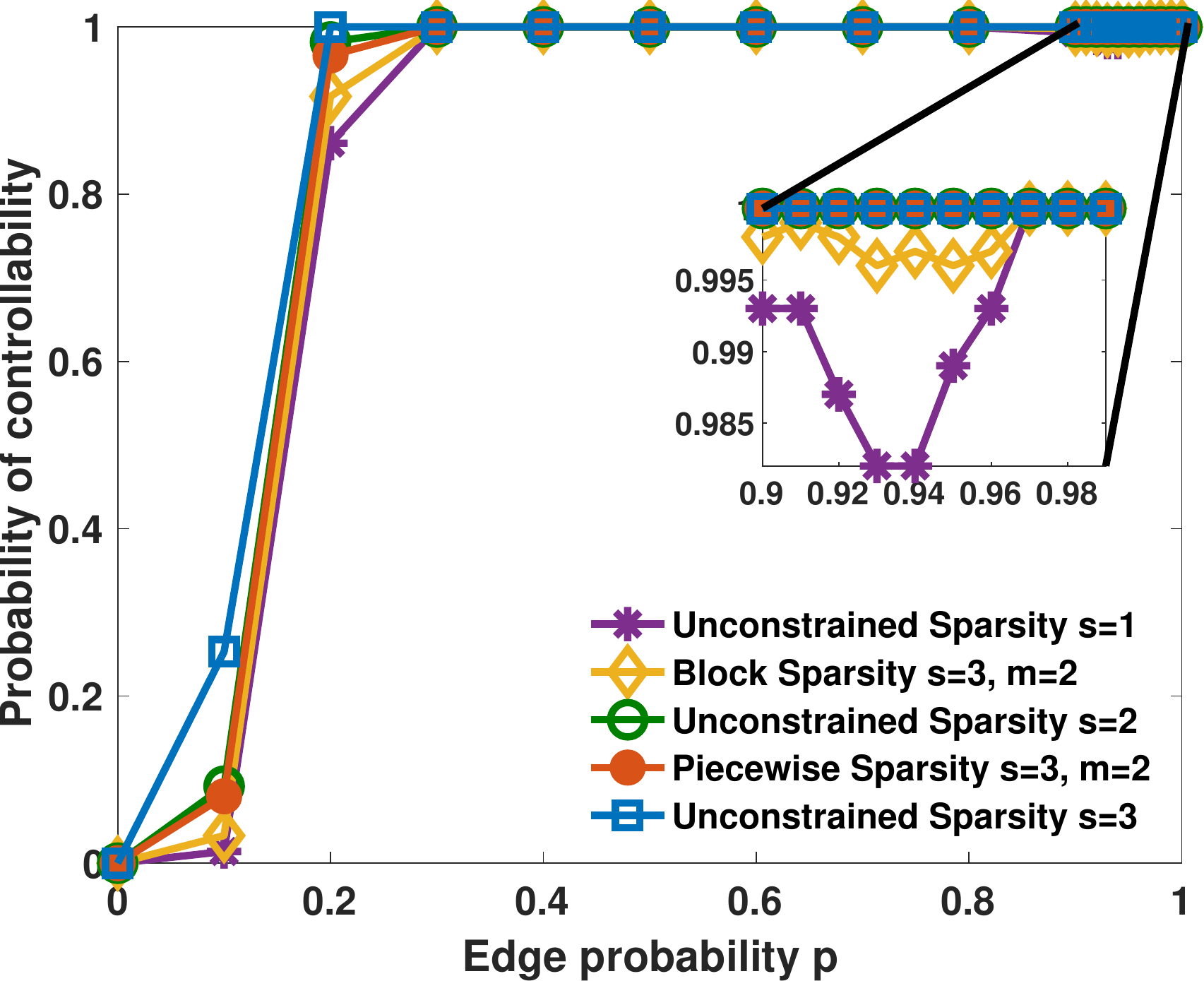}
\label{fig:p_directed20}
}

\vspace{-0.3cm}
\subfloat[Undirected graph with $N=50$.]
{
\includegraphics[width=5.7cm]{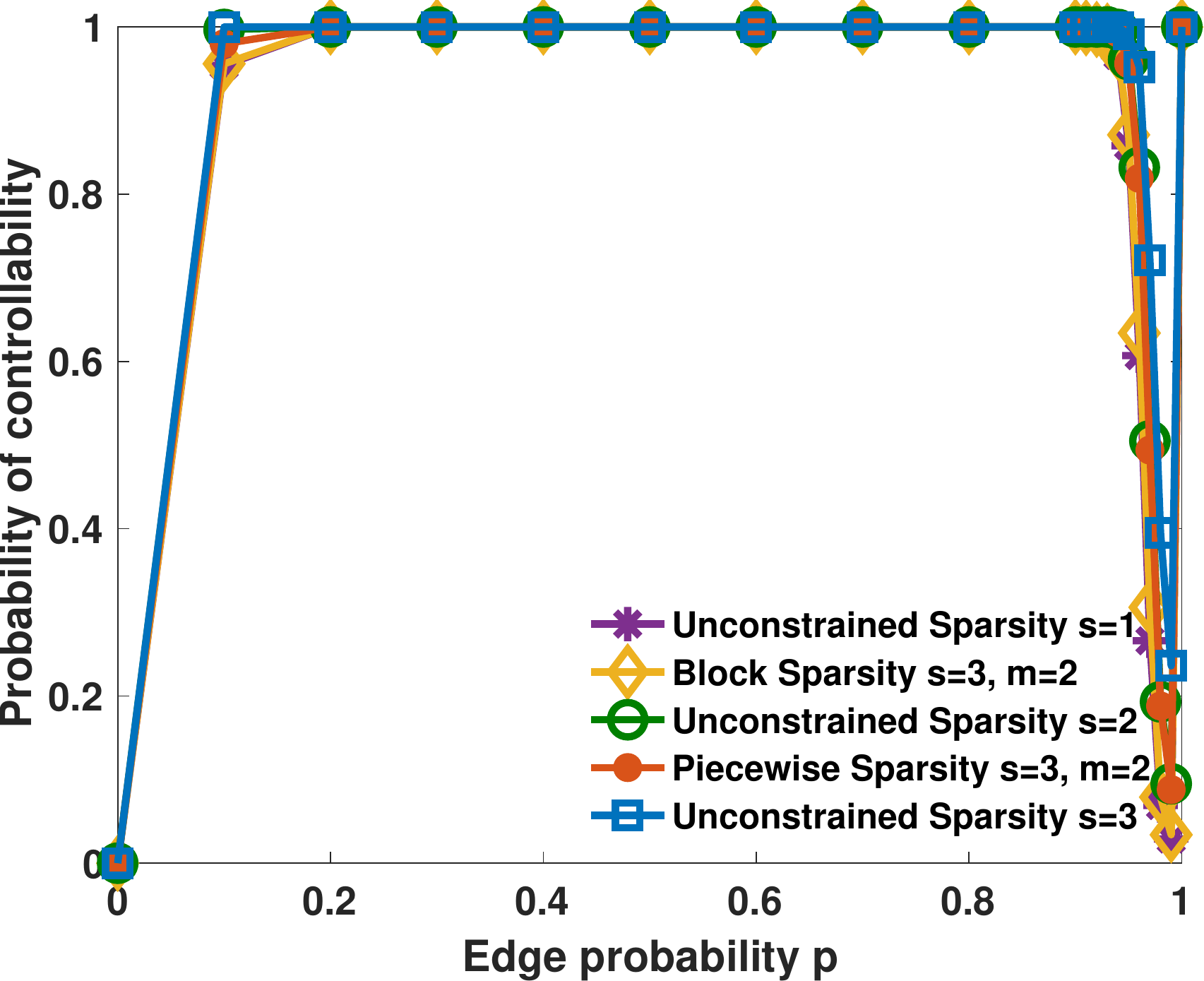}
\label{fig:p_undirected50}
}
\subfloat[Directed graph with $N=50$.]
{
\includegraphics[width=5.7cm]{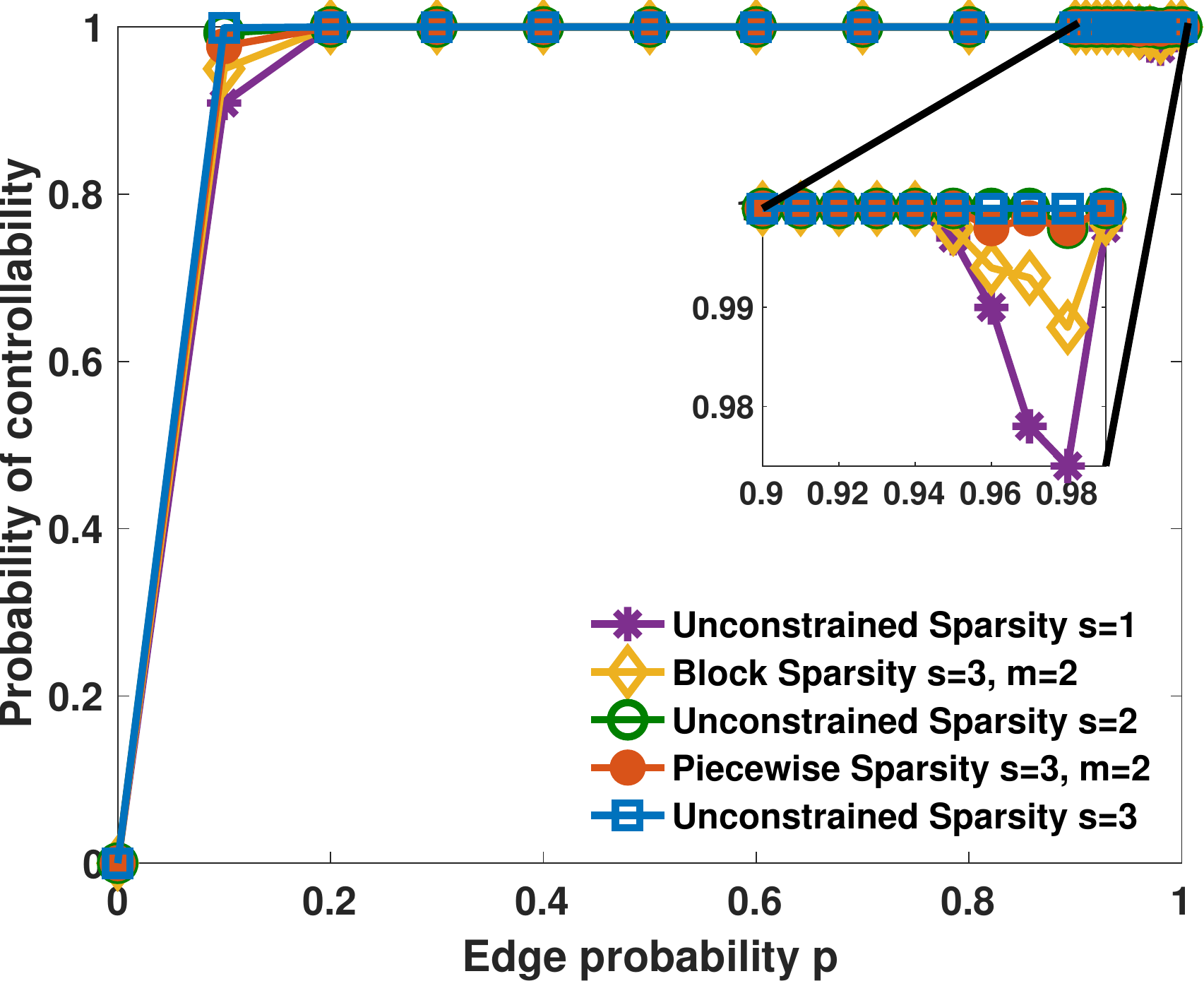}
\label{fig:p_directed50}
}
\end{center}

\vspace{-0.3cm}
\caption{Variation of the probability of controllability of the network opinion with edge probability $p$. The figures show that the probability of controllability grows with sparsity $s$ and its variation with edge probability $p$ is lower bounded by the relationship given in \Cref{thm:undirected,thm:directed}.}
\label{fig:varyp}
\vspace{-0.7cm}

\end{figure}

\begin{figure}
\begin{center}
\subfloat[Undirected graph with $p=0.2$.]
{
\includegraphics[width=5.7cm]{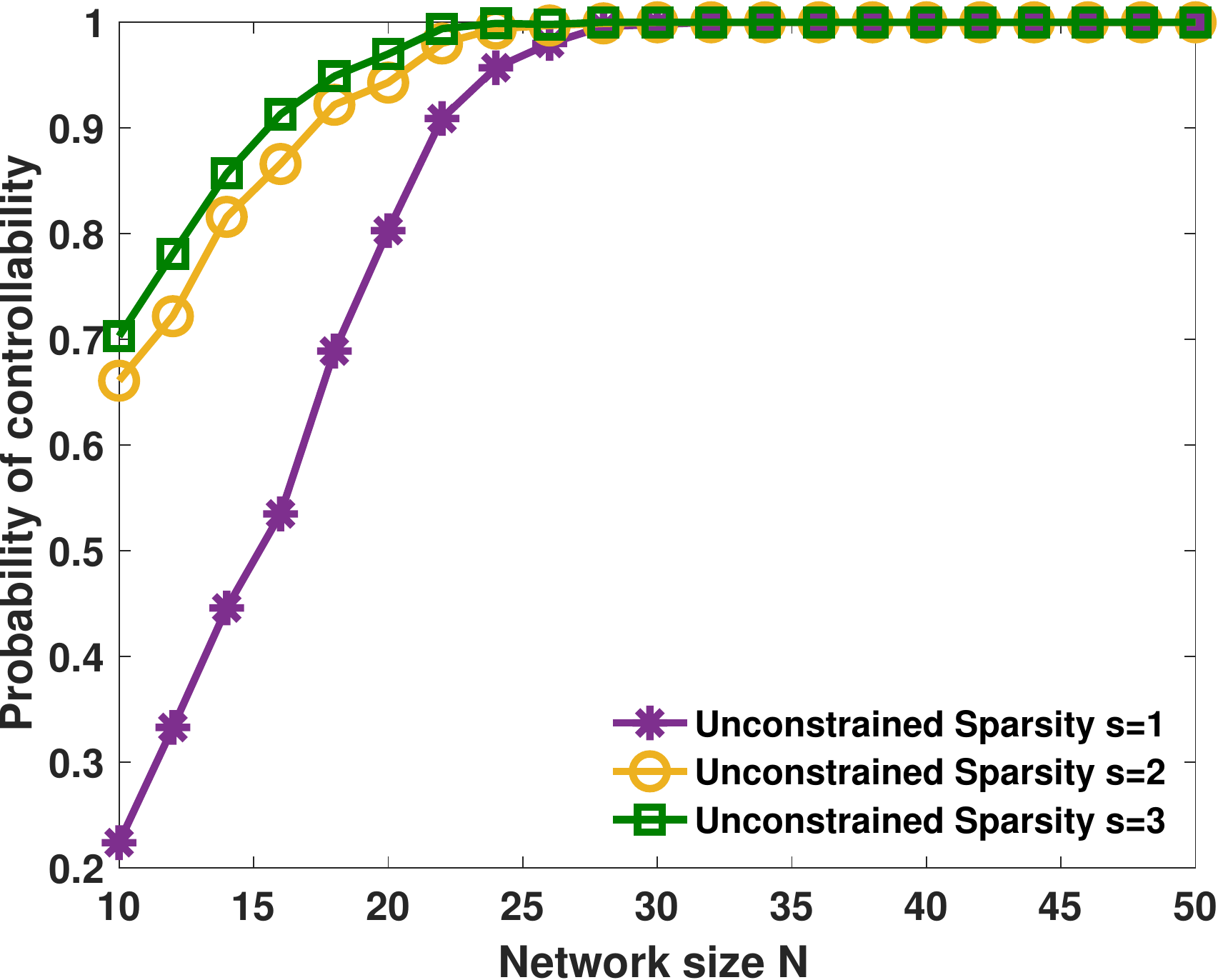}
\label{fig:N_undirected}
}
\subfloat[Directed graph with $p=0.2$.]
{
\includegraphics[width=5.7cm]{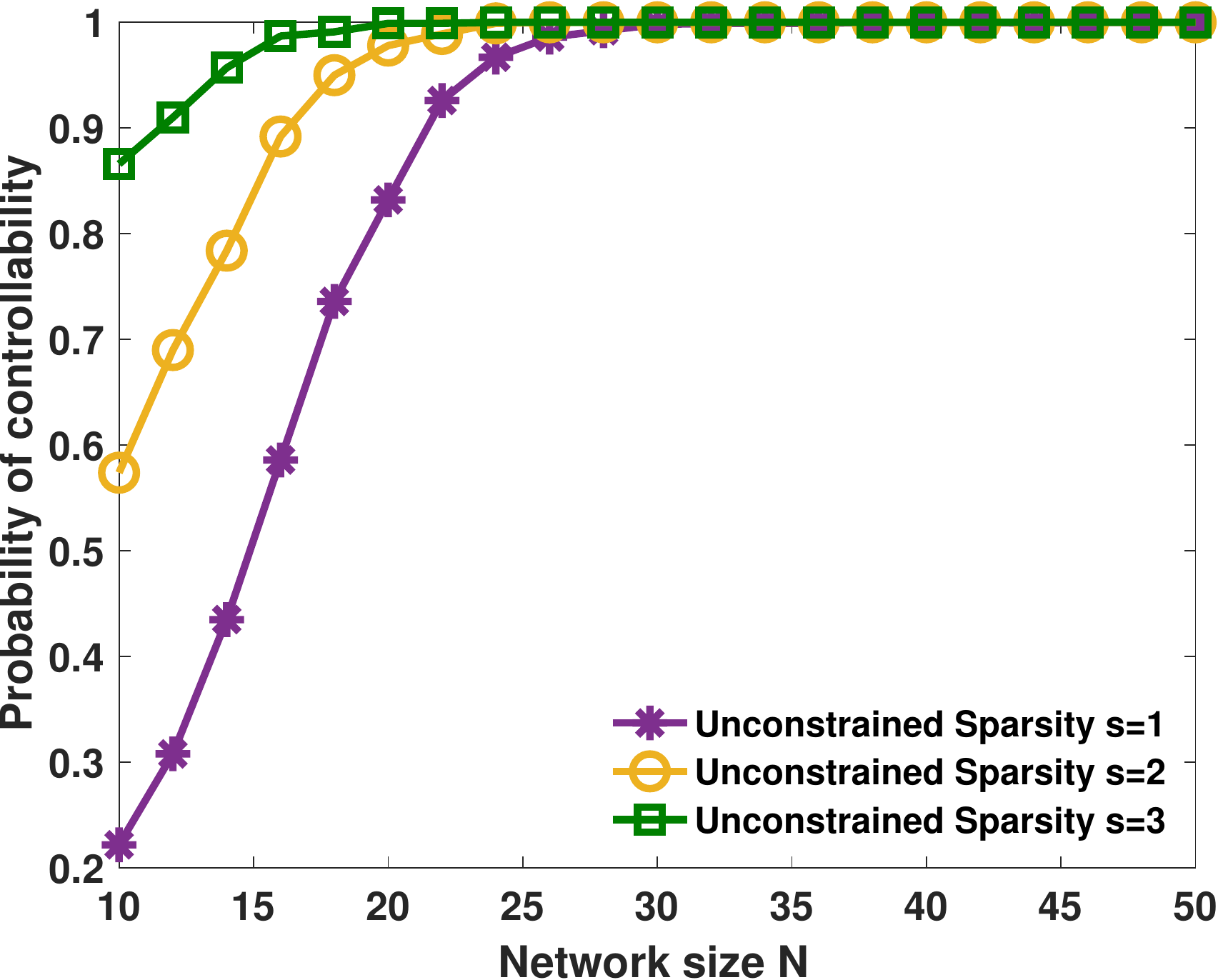}
\label{fig:N_directed}
}
\end{center}

\vspace{-0.3cm}
\caption{Variation of the probability of controllability of the network opinion with network size $N$. The figures confirm that the probability of the network opinion being not controllable decreases exponentially with the network size $N$, as given by \Cref{thm:undirected,thm:directed}.}
\label{fig:varyN}
\vspace{-0.7cm}

\end{figure}

\begin{figure}
\begin{center}
\subfloat[Undirected power-law graph with $N=24$.]
{
\includegraphics[width=5.7cm]{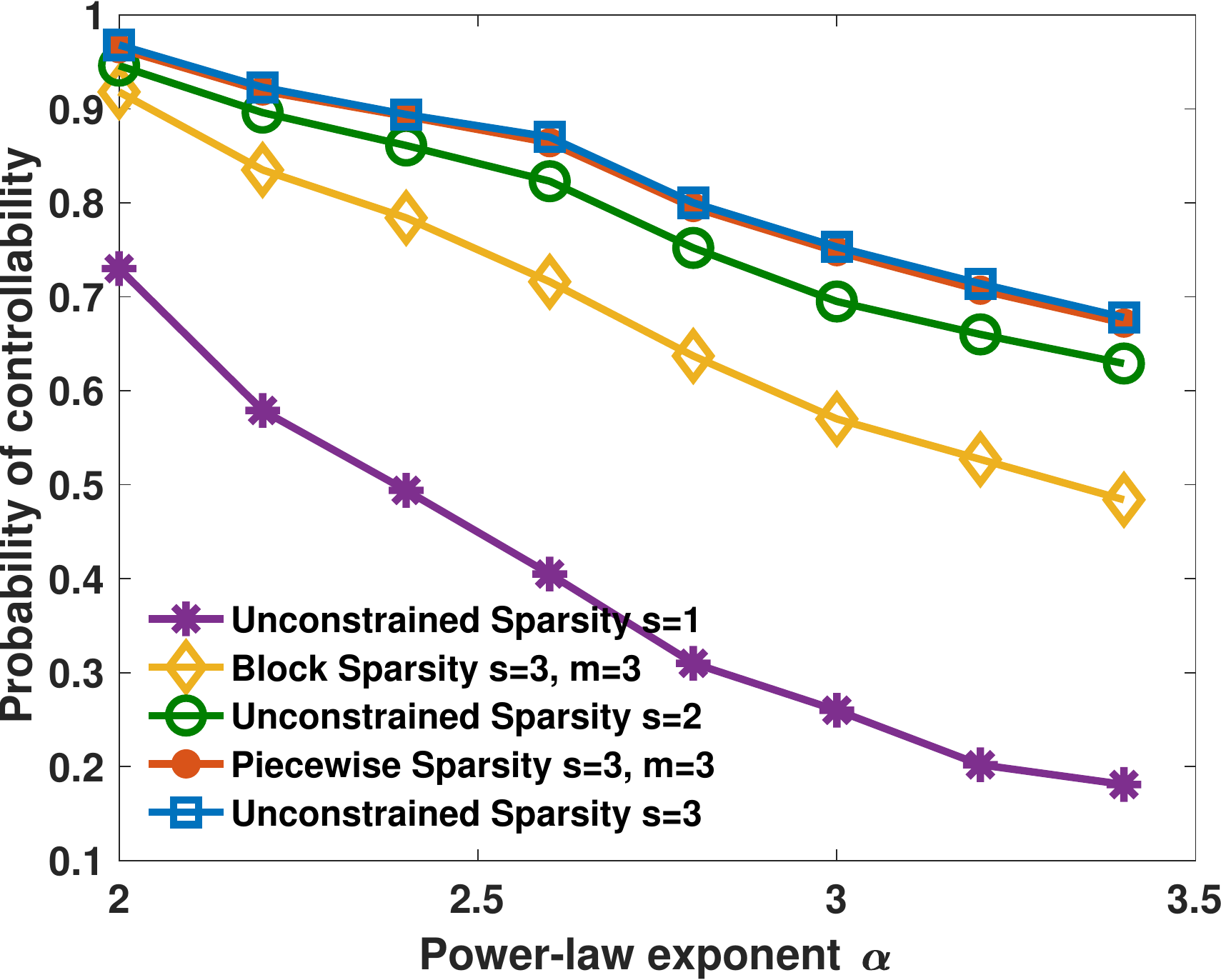}
\label{fig:alpha_undirected_powerlaw}
}
\subfloat[Undirected power-law graph with $\alpha =2.5$.]
{
\includegraphics[width=5.7cm]{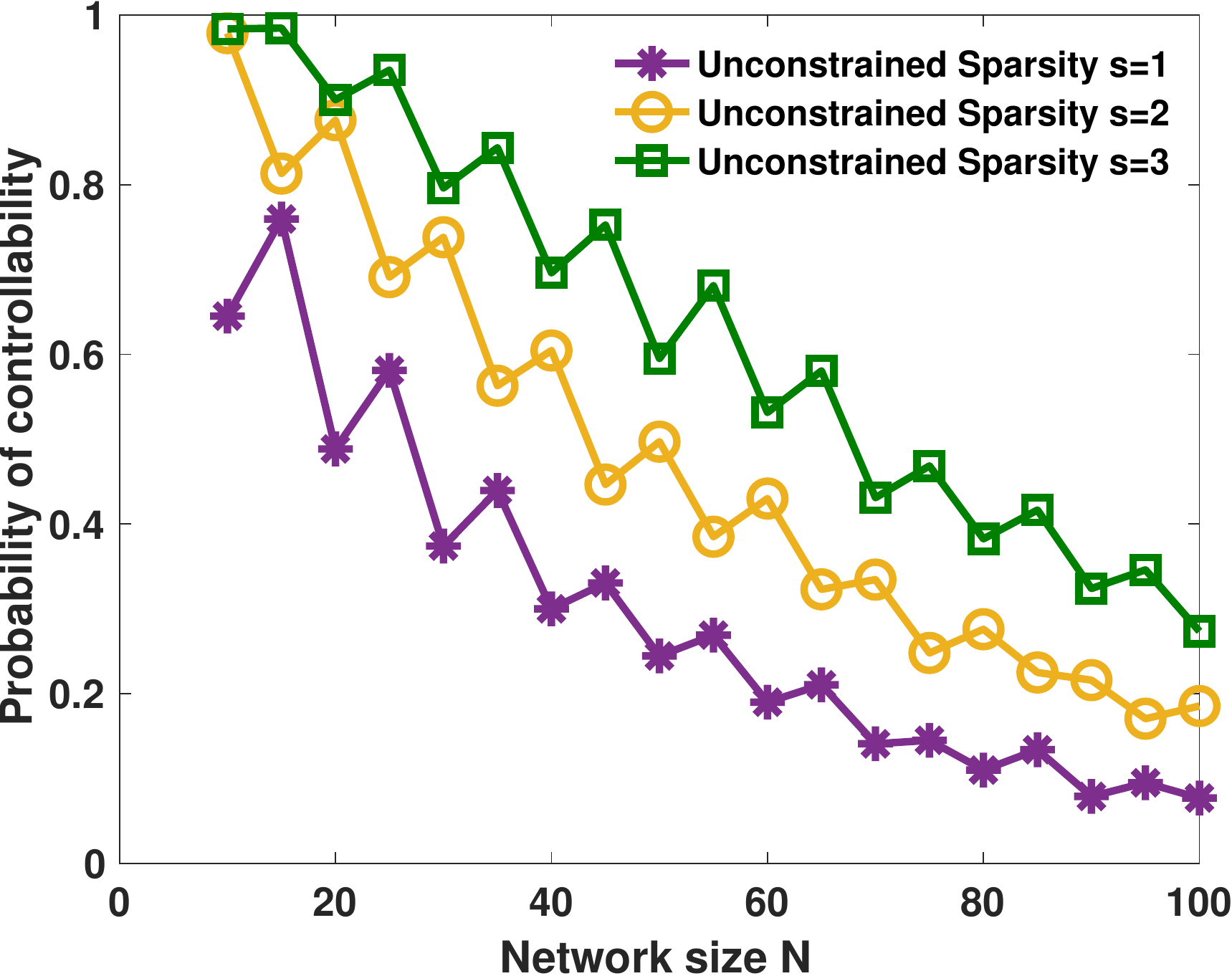}
\label{fig:N_undirected_powerlaw}
}
\end{center}
\caption{Variation of the probability of controllability of the network opinion of the power-law graph model with the power-law exponent~$\alpha$ and the network size $N$. The figures show that the probability of controllability in the power-law model is significantly different from that of the \ER model.}
\label{fig:powerlaw}
\end{figure}

To give additional insights, we compute the probability of controllability of network opinion using \Cref{thm:controlpattern} via numerical experiments, and compare the results with the bounds in \Cref{thm:undirected,thm:directed}.  Also, we numerically evaluate the probability of controllability for a different model of social networks called a power-law model~\cite{barabasi1999emergence, newman2005power, mitzenmacher2004brief} in order to understand how probability of controllability varies for different models.\footnote{\change{In this paper, a power-law graph refers to an undirected graph where the probability~$p(k)$ that a uniformly sampled node has $k$ neighbors~(i.e.,~degree distribution evaluated at $k$) is proportional to $k^{-\alpha}$ for a fixed value of the power-law exponent $\alpha > 0$} \change{(the term power-law graph is used in some literature to refer to a directed graph with both in- and out-degree distributions following a power-law though we do not deal with such directed power-law graphs in this paper)}. It has been shown that power-law degree distributions arise naturally from simple and intuitive generative processes such as preferential attachment whose power-law exponent $\alpha$ lies in the range from $2$ to $3$~\cite{barabasi1999emergence, newman2005power, mitzenmacher2004brief}. Hence, they have been widely compared with \ER graphs in the social network literature~\cite{lerman2016, nettasinghe2018your}. The key difference between the two graph models (\ER and power-law) lies in the degree distribution~(\ER graphs have Poisson degree distributions as opposed to power-law degree distributions) which is a key structural property of networks with implications in epidemic spreading, stability, friendship paradox and perception bias etc.~\cite{alipourfard2020friendship, pastor2015epidemic, cohen2000resilience}.}
\subsection{Probability of controllability of the \ER model}\label{sec:ermodel}
To evaluate the probability of controllability of the \ER model,  we simulated 1000 independent realizations of both undirected and directed \ER graphs each (for each value of $N$ and $p$). The fraction of the realizations that satisfy the two conditions of \Cref{thm:controlpattern} (with $\matPhi$ as the adjacency matrix and  $\matPsi=\eye$) is the estimate of probability of the opinion being controllable. 

As we mentioned in \Cref{sec:practical}, the \ER model captures the unknown structure of the underlying social network. Several real world networks such as high-school romantic partner networks have been shown to be similar to \ER model~\cite{bearman2004chains}. Further, although the \ER model might not capture all the characteristics of other social networks, it provides the simplest and most analytically tractable approximation for such networks (for example, the emergence of giant connected components)~\cite{jackson2010social}. In this context, our numerical results in this section help to better understand the effect of the parameters of the \ER model on another such sociologically important phenomena, namely controllability of opinions in social networks. 
The key observations from the numerical results are as follows:
\begin{itemize}[leftmargin=0.3cm]
\item \emph{Sparsity and PCBS model: }  \Cref{fig:varyp,fig:varyN} confirm that as sparsity $s$ increases, the probability of controllability grows. This trend is in agreement with the bounds in \Cref{thm:undirected,thm:directed} which also capture the monotonically increasing nature of the probability of controllability with $s$. Also, for $s=3$, the unconstrained sparse vectors offer the highest probability of controlling network opinion, followed by the piece-wise and block sparse vectors. This order verifies  the relation given by \eqref{eq:sparsity_pattern_order}.

\item \emph{Edge probability $p$: } \Cref{fig:varyp} shows that the probability of controllability first increases with $p$, reaches its maximum value, and then decreases. Also, the probability of controllability is one when $p$ is close to 1. For comparison, we note that the bounds on the probability of controllability (given by \Cref{thm:undirected,thm:directed})  approximately scale as $(1-p)^{Ns}(1-\exp((pN)^{\alpha})$, for $\alpha>0$. This bound is zero when $p=0$, then increases with $p$ to attain a maximum value and diminishes thereafter. Thus, both the bounds in \Cref{thm:undirected,thm:directed} and the curves in \Cref{fig:varyp} show similar behaviors. However, as $p$ approaches 1, the bound decreases, whereas the probability of controllability estimated in \Cref{fig:varyp} suddenly increases when $p=1$. This difference in behavior is because the values of $p$ close to 1 lie outside the regime of the edge probability for which \Cref{thm:undirected,thm:directed} hold. Also, this change in probability of controllability is not surprising because when $p=1$, the adjacency matrix becomes $\matA=\one\one^\T-\eye$ which is a deterministic full rank matrix. Therefore, both the conditions of \Cref{thm:controlpattern} are satisfied by the system for all values of $s$ and all sparsity patterns. Hence, the probability of controllability is $1$. 

\item \emph{Network size $N$: } \Cref{fig:varyN} indicates that as the network size $N$ grows, the probability of the system not being controllable decreases exponentially. This observation corroborates the dependence of $N$ on the probability of controllability given by \Cref{thm:undirected,thm:directed}.  Also, \Cref{thm:undirected,thm:directed} imply the opinion of an asymptotically large network is  controllable, almost surely, and the asymptotic behavior is attained in the regime $N>30$ when $p=0.2$. This observation is confirmed from \Cref{fig:varyp}  that reveals that as $N$ becomes larger, the network opinion is controllable with high probability for a wider range of edge probability $p$ values.

\item\emph{Undirected and directed graphs:} \Cref{fig:varyp,fig:varyN} show that the probability of controllability is larger for directed graphs compared to undirected graphs, in all settings. This is an additional insight which is not evident from \Cref{thm:undirected,thm:directed}.
\end{itemize}
\subsection{Probability of controllability of the power-law model}\label{sec:powerlaw}
The aim of this subsection is to show that power law networks behave very differently from \ER networks.  Recall that an \ER network has a Poisson degree distribution, whereas a power-law network has a degree distribution of the form~${p(k) = Ck^{-\alpha}}$ where $C$ is the normalizing constant and $\alpha>0$ is the power-law exponent. The simulation results presented below for power-law networks show that our theoretical results do not hold for this case, and there is a strong motivation to extend the results of this paper to other random graph models in future work.

To evaluate the probability of controllability for a power-law model, we simulated 1000 independent realizations of undirected power-law graphs using the so called configuration model~\cite{newman2003structure} (for each value of the network size~$N$ and power-law exponent~$\alpha$). More specifically, the configuration model generates $k$~half-edges for each of the $N$ nodes in the graph where~$k$ is the number obtained by rounding the realizations sampled independently from the power-law distribution~i.e.,~$k \sim Ck^{-\alpha}$ where $C$ is the normalizing constant and $\alpha>0$ is the power-law exponent. Then, each half-edge is connected to another randomly selected half-edge avoiding parallel edges and self-loops, yielding a graph with a power-law degree distribution. Finally, the fraction of the realizations that satisfy the conditions of \Cref{thm:controlpattern} is used as the estimate of the probability of controllability. The results are presented in \Cref{fig:powerlaw} where the definition of the labels are the same as those in \Cref{fig:varyp,fig:varyN} (see \Cref{sec:ermodel}).

 \Cref{fig:alpha_undirected_powerlaw} shows that the probability of controllability decreases monotonically with power-law exponent $\alpha$ for all considered sparsity models. This observation is intuitive because a smaller power-law exponent $\alpha$ implies that the network has larger number of high-degree nodes, making it easier to control. This is different from the non-monotone relation observed in \Cref{fig:varyp} for \ER graphs. However, it should also be noted that the parameter $p$ of the \ER model and the parameter $\alpha$ of the power-law model convey different information: $p$ is the probability of the presence of an edge whereas $\alpha$ is directly related to the degree of nodes. 
Further, \Cref{fig:N_undirected_powerlaw} shows that the probability of controllability decreases with the number of nodes $N$ in power-law model indicating an opposite behavior to the \ER model shown in \Cref{fig:N_undirected}. Also, unlike the \ER model, the variation of the probability with $N$ is not smooth. However, the cause of non-smoothness of the curve is not obvious, and we defer it as future work. To sum up, these differences suggest that the probability of controllability is an inherent property of the model. 

\section{Conclusion}
This paper analyzed controllability of network opinions modeled using a linear propagation framework with the additive influence of a sparsity constrained manipulative agent. The linear propagation was modeled using an \ER graph for two cases: the undirected and directed graphs. At every time instant, the  agent can influence only a small (compared to the network size) number of people chosen according to a predefined sparsity pattern. The main results were \Cref{thm:undirected} and \Cref{thm:directed} for undirected and directed graphs, respectively. They provide lower bounds on the probability with which the manipulative agent is able to drive the network opinion to any desired state starting from an arbitrary network opinion. Our results indicate that in both cases, the probability increases with the network size, and the opinions on an asymptotically large network is almost surely controllable. 

One limitation of our results (\Cref{thm:undirected,thm:directed}) is that they are useful only if $s\ll N$. Generalizing the results for all values of sparsity $s$ is deferred to future work. Also,  relaxing the rank one assumption on the weight matrix 
and exploring the controllability of opinions on other random graph models~(e.g.~power-law model, stochastic block model) also remain as interesting future directions. 
\newpage

\appendix
\crefalias{section}{appendix}
\crefalias{subsection}{appendix}

\section{Relation between Piece-wise and Block Sparsity Models}\label{app:relation}
In this section, we assert that the block sparse vectors can be seen as a special case of the piece-wise sparse vectors. This observation implies that if a system is controllable using block sparse control inputs, then it is controllable using piece-wise sparse control inputs (see \Cref{sec:dependence}). 

For any given block sparse vector $\vecz\in\bbR^N$,  we can rearrange its entries as follows:
\begin{center}
\begin{blockarray}{c[c]}
		\ldelim\{{4}{*}[Block $1$] & $\vecz_1$  \\
		&$\vecz_2$\\
		&$\vdots$ \\
		&$\vecz_m $\\
		 \ldelim\{{3}{*}[Block $2$]&$\vecz_{m+1}$ \\
		&$\vdots$ \\
		&$\vecz_{2m}$ \\
		$\vdots$& $\vdots$ \\
		\ldelim\{{3}{*}[Block $\frac{N}{m}$]&$\vecz_{N-m+1}$ \\
		&$\vdots$ \\
		&$\vecz_{N}$
	\end{blockarray} $\to$  \begin{blockarray}{[c]c}
		 $\vecz_1$  &\rdelim\}{4}{*}[Block 1] \\
		 $\vecz_{m+1}$\\
		$\vdots$ \\
		$\vecz_{N-m+1}$ \\
		 $\vecz_{2}$ &\rdelim\}{3}{*}[Block $2$] \\
		$\vdots$ \\
		$\vecz_{N-m+2}$ \\
		$\vdots$& $\vdots$ \\
		$\vecz_{m} $&\rdelim\}{3}{*}[Block $m$.] \\
		$\vdots $\\
		$\vecz_{N}$
	\end{blockarray}
	\end{center}
The rearranged vector on the right-hand side is a piece-wise sparse vector with $m$ blocks, each with at most sparsity $s/m$ and the same support. Since shuffling the entries of the control input does not change the controllability-related properties of the system in \eqref{eq:system_model}, controllability under block sparsity can be seen as a special case of that under piece-wise sparsity. We illustrate this idea using the example below:

\begin{example} 
Consider the case where $N=6, s=2$ and $m=2$. A block sparse vector with these parameters has $N/m=3$ blocks out of which only $s/m=1$ block has nonzero entries. \change{Therefore, for different choices of support, the block sparse vectors can be rearranged to piece-wise sparse vectors with  $m=2$ blocks of size $N/m=3$ where each block has at most $s/m=1$ nonzero entries as shown below:
\begin{equation*}
 \begin{bmatrix}
a\in\bbR\\
b\in\bbR\\
0\\
0\\
0\\
0
\end{bmatrix}\to \begin{bmatrix}
a\in\bbR\\
0\\
0\\
b\in\bbR\\
0\\
0
\end{bmatrix},\hspace{1cm}\begin{bmatrix}
0\\
0\\
a\in\bbR\\
b\in\bbR\\
0\\
0
\end{bmatrix}\to \begin{bmatrix}
0\\
a\in\bbR\\
0\\
0\\
b\in\bbR\\
0
\end{bmatrix}, \hspace{1cm}\begin{bmatrix}
0\\
0\\
0\\
0\\
a\in\bbR\\
b\in\bbR
\end{bmatrix}\to  \begin{bmatrix}
0\\
0\\
a\in\bbR\\
0\\
0\\
b\in\bbR
\end{bmatrix}.
\end{equation*}}
However, the rearranged vectors only form a small subset of the set of piece-wise sparse vectors. The excluded piece-wise sparse vectors take the following forms: 
\begin{equation*}
\lc \begin{bmatrix}
a\in\bbR\\
0\\
0\\
0\\
b\in\bbR\\
0
\end{bmatrix}, \begin{bmatrix}
a\in\bbR\\
0\\
0\\
0\\
0\\
b\in\bbR
\end{bmatrix}, \begin{bmatrix}
0\\
a\in\bbR\\
0\\
b\in\bbR\\
0\\
0
\end{bmatrix}, \begin{bmatrix}
0\\
a\in\bbR\\
0\\
0\\
0\\
b\in\bbR
\end{bmatrix}, \begin{bmatrix}
0\\
0\\
a\in\bbR\\
b\in\bbR\\
0\\
0
\end{bmatrix}, \begin{bmatrix}
0\\
0\\
a\in\bbR\\
0\\
b\in\bbR\\
0
\end{bmatrix}\rc.
\end{equation*}
Thus, the larger set of piece-wise sparse vectors is less restricted than the set of block sparse vectors.
\end{example}

\section{Proof of \Cref{thm:controlpattern}}
\label{app:control_pattern}

At a high level, the proof has three  main steps:
\begin{enumerate}[leftmargin=0.5cm]
\item \label{step:A}We first prove that controllability as defined in \Cref{thm:controlpattern} is equivalent to the following: there exist an integer $K>0$ and a matrix $\tilde{\matPsi}\in\calB_{(K)}$ such that 
\begin{equation}\label{eq:eq_control}
\rank{\tilde{\matPsi}}=N,
\end{equation} where $\calB_{(K)} = \lc \tilde{\matPsi}\in \bbR^{N\times Ks} : \tilde{\matPsi} = \begin{bmatrix}
\matPhi^{K-1}\matPsi_{\calS_1} & \matPhi^{K-2}\matPsi_{\calS_2}  \ldots \matPsi_{\calS_K}
\end{bmatrix}, \calS_k\in\calU\rc.$
\item Next, we show that when one of the conditions, either \Cref{con:control_pattern_a} or \Cref{con:control_pattern_b} of \Cref{thm:controlpattern} does not hold, the condition given in Step \ref{step:A} is violated. This is equivalent to showing that \Cref{con:control_pattern_a,con:control_pattern_b} of \Cref{thm:controlpattern} are necessary for our notion of controllability to hold.
\item Finally, we show that when the condition given in Step \ref{step:A} does not hold, \Cref{con:control_pattern_a,con:control_pattern_b} of \Cref{thm:controlpattern} are not true simultaneously. Thus, we show the sufficiency part of the result.
\end{enumerate}
We present the proof for the above steps in the following subsections:
\subsection{An equivalent condition}
To characterize controllability of the system as defined in \Cref{thm:controlpattern}, we consider the following equivalent representation:
\begin{align*}
\vecalpha_{K}-\matPhi^K\vecalpha_0 = \sum_{k=1}^K \matPhi^{K-k}\matPsi\vecv_k=\sum_{k=1}^K \matPhi^{K-k}\matPsi_{\calS_k}\vecv_{k,\calS_k},
\end{align*}
where $\calS_k\in\calU$ is the support of $\vecv_k$ and $\matPsi_{\calS_k}\in\bbR^{N\times\lv\calS_k\rv}$ is the submatrix of $\matPsi$ with columns indexed by $\calS_k$. Therefore, the system is controllable as defined in \Cref{thm:controlpattern} iff the set $\calW_{(K)}=\bbR^N$ for some finite $K$ with
$\calW_{(K)} \triangleq \cup_{\tilde{\matPsi}\in\calB_{(K)}}\;\range{\tilde{\matPsi}}$,
where $\range{\cdot}$ denotes the column space of a matrix. However, a vector space over an infinite field ($\bbR^N$ in this case) cannot be a finite union of its proper subspaces~\cite[Chapter 1]{friedland2018linear}. Therefore, $\range{\tilde{\matPsi}}=\bbR^N$, for some $\tilde{\matPsi}\in\calB$, and {Step \ref{step:A} in the proof outline is completed}.
\subsection{Necessity}
We consider the following two cases:
\begin{enumerate}[label=(\roman*),leftmargin=0.5cm]
\item Suppose that \Cref{con:control_pattern_a} in  \Cref{thm:controlpattern} does not hold. Then, from the classical PBH test  for controllability, the linear dynamical system defined by the state transition matrix-input matrix pair $(\matPhi,\matPsi_{\calM})$ is not controllable. Therefore, controllability matrix $\tilde{\matPsi}_{(K)}= \begin{bmatrix} \matPhi^{K-1}\matPsi_{\calM} & \matPhi^{K-2}\matPsi_{\calM} & \ldots & \matPsi_{\calM}
\end{bmatrix}$ does not have full row rank for any finite $K$.
Further, all matrices in $\calB_{(K)}$ are submatrices of $\tilde{\matPsi}_{(K)}$, and therefore, \eqref{eq:eq_control} is violated any $\tilde{\matPsi}\in\calB_{(K)}$.

\item Suppose \Cref{con:control_pattern_b} in  \Cref{thm:controlpattern} does not hold. Then, for every index set $\calS\in\calU$, there exists a nonzero vector $\vecz$ such that $\vecz^\T\matPsi_{\calS}=\zero$ and $\vecz^\T\matPhi=\zero$. This implies that for any finite $K$, there exists a vector $\vecz$ such that $\vecz^\T\tilde{\matPsi}=\zero$, for all $\tilde{\matPsi}\in\calB_{(K)}$. Therefore, \eqref{eq:eq_control} is violated any $\tilde{\matPsi}\in\calB_{(K)}$.
\end{enumerate}
Hence, we proved the necessity of the conditions given by \Cref{thm:controlpattern}.
\subsection{Sufficiency}
Suppose that \eqref{eq:eq_control} is not true for any integer $K>0$ and $\tilde{\matPsi}\in\calB_{(K)}$. We consider
\begin{multline*}
\tilde{\matPsi}^* =
[ \begin{matrix}
\matPhi^{ P  N-1 }\matPsi_{\calS_1} &  \matPhi^{  P  N-2} \matPsi_{\calS_1} &\ldots 
  & \matPhi^{( P -1 ) N} \matPsi_{\calS_1}  \end{matrix}\\
\begin{matrix}
\ldots & \matPhi^{( P -1) N-1}\matPsi_{\calS_2}&\ldots \matPhi^{( P -2) N}\matPsi_{\calS_2} &\ldots
\end{matrix}\\
\begin{matrix}
\ldots &  \matPhi^{N-1}\matPsi_{\calS_{P}}&\ldots&\matPsi_{\calS_{P}}
\end{matrix}],
\end{multline*}
where $P=\lv\calU\rv$ and $\cup_{i=1}^P\calS_{i}=\calU$. 
Since \eqref{eq:eq_control} does not hold for any finite $K$ and  $\tilde{\matPsi}\in\calB_{(K)}$, the matrix $\tilde{\matPsi}^*$ does not have full row rank. Next, we can rearrange the columns of $\tilde{\matPsi^*}$ to get the following matrix which has the same rank as that of $\tilde{\matPsi}^*$:
$\begin{bmatrix}
\matPhi^{N-1}\matPsi^* & \matPhi^{N-2}\matPsi^* & \ldots & \matPsi^*
\end{bmatrix}$, where $\matPsi^*\in\bbR^{N\times  P s}$ is defined as 
$\matPsi^*\triangleq\begin{bmatrix}
\matPhi^{( P -1)N}\matPsi_{\calS_1} & 
\matPhi^{( P -2)N}\matPsi_{\calS_2}
\ldots &
\matPsi_{\calS_{ P }}
\end{bmatrix}$.
Thus, using classical Kalman rank test for controllability without any constraints, the linear dynamical system defined by the state transition matrix-input matrix pair $(\matPhi,\matPsi^*)$ is not controllable. Then, the classical PBH test for controllability without any constraints, implies that the matrix $\begin{bmatrix}
\matPhi-\lambda\eye & \matPsi^*
\end{bmatrix}\in\bbR^{N\times N+\tilde{K}s}$ has rank less than $N$, for some $\lambda\in\bbC$. Therefore, there exists a vector $\vecz\neq \zero\in\bbR^N$ such that $\vecz^\T\matPhi=\lambda\vecz^\T$ and $\vecz^\T\matPsi^*=\zero$. However, we have
\begin{equation*}
\zero  = \vecz^\T\matPsi^*  = \vecz^\T \begin{bmatrix}
\lambda^{ (P-1)N} \matPsi_{\calS_1} & 
\lambda^{ (P-2)N} \matPsi_{\calS_2}
\ldots &
\matPsi_{\calS_{P}}
 \end{bmatrix}.
\end{equation*}
So either $\lambda=0$ and $\vecz^\T\matPsi_{\calS_{P}}=\zero$, or, if $\lambda$ is nonzero, then  $\vecz^\T\matPsi_{\calM}=\zero$. Since the ordering of the index sets in $\calU$ does not matter, we conclude that either $\lambda=0$ and $\vecz^\T\matPsi_{\calS}=\zero$ for all $\calS\in\calS$, or,  $\vecz^\T\matPsi_{\calM}=\zero$ for some $\lambda\in\bbC$. Therefore, the two conditions of \Cref{thm:controlpattern} do not hold simultaneously. Thus, the proof is complete. 

\hfill\qed
\section{Proof of \Cref{thm:undirected}}\label{app:undirected}
\Cref{thm:controlpattern} provides necessary and sufficient conditions under which a system is controllable using sparse inputs. Therefore, the key idea of the proof of \Cref{thm:undirected} is to derive the probability with which the conditions of \Cref{thm:controlpattern} hold when $\matPhi$ and $\matPsi$ in \Cref{thm:controlpattern} are set to be $\bar{\matA}$ and $\eye$, respectively. The main tools used in the proof are the rank properties of the Hadamard product and a random symmetric binary matrix as stated below:

\begin{lemma}[Invertibility of Hadamard product]\label{lem:Hadamard}
For any matrix $\matA\in\bbR^{N\times N}$ and a vector $\vecw\in\bbR^N$, the Hadamard product $\matA\odot\lb\vecw\vecw^\T\rb$ is invertible if and only if $\matA$ is invertible and all entries of $\vecw$ are nonzero.
\end{lemma}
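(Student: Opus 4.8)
The plan is to recognize that the Hadamard product against a rank-one matrix $\vecw\vecw^\T$ is nothing more than a two-sided diagonal conjugation, which reduces the entire claim to a determinant factorization. Concretely, first I would introduce the diagonal matrix $\matD = \diag{\vecw}\in\bbR^{N\times N}$ and observe the entrywise identity
\begin{equation*}
\lb\matA\odot\lb\vecw\vecw^\T\rb\rb_{ij} = \matA_{ij}\,w_i w_j = \lb\matD\matA\matD\rb_{ij},
\end{equation*}
so that $\matA\odot\lb\vecw\vecw^\T\rb = \matD\matA\matD$ exactly. This is the only structural observation needed, and it is precisely where the rank-one hypothesis on the weight matrix is used: a general weight matrix would not collapse into a product of diagonal factors.

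Next I would take determinants and exploit multiplicativity together with the fact that $\matD$ is diagonal:
\begin{equation*}
\det\lb\matA\odot\lb\vecw\vecw^\T\rb\rb = \det\lb\matD\rb^2 \det\lb\matA\rb = \lb\prod_{i=1}^N w_i\rb^{\!2}\det\lb\matA\rb.
\end{equation*}
The product $\matA\odot\lb\vecw\vecw^\T\rb$ is invertible if and only if this determinant is nonzero. Since the right-hand side is a product of two factors, it is nonzero if and only if both factors are nonzero, i.e.\ $\prod_{i} w_i\neq 0$ (equivalently, every entry of $\vecw$ is nonzero) \emph{and} $\det\lb\matA\rb\neq 0$ (equivalently, $\matA$ is invertible). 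This establishes both directions of the equivalence simultaneously.

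Honestly, there is no serious obstacle here: once the identity $\matA\odot\lb\vecw\vecw^\T\rb=\matD\matA\matD$ is in place, the result is a one-line determinant computation. The only point deserving a moment's care is to state the determinant factorization as a genuine \emph{iff}, noting that a product of scalars vanishes exactly when one of its factors does; both the invertibility of $\matA$ and the nonvanishing of each $w_i$ must hold together, and neither alone suffices. If one preferred to avoid determinants entirely, an equally clean alternative would be to argue that $\matD$ is invertible precisely when all $w_i\neq 0$, and that a product of three matrices is invertible iff each is invertible, which again yields the claim directly.
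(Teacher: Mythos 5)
Your proof is correct and takes essentially the same route as the paper's: the paper also writes $\matA\odot\lb\vecw\vecw^\T\rb$ as a two-sided diagonal conjugation (its $\bar{\matW}\matA\bar{\matW}$ is your $\matD\matA\matD$) and then concludes via the determinant factorization $\det\lb\matD\rb^2\det\lb\matA\rb$. There is nothing to add or fix.
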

\begin{proof}
\change{Let $\bar{\matW}\in\bbR^{N\times N}$ be a diagonal matrix with the entries of $\vecw$ along its diagonal. Then it follows that
$\matA\odot(\vecw\vecw^\T) = \bar{\matW}\matA\bar{\matW}$,
where $\odot$ denotes the Hadamard product. Therefore, 
\begin{equation*}
\det\lc\matA\odot(\vecw\vecw^\T)\rc = \det\lc\bar{\matW}\rc^2\det\lc\matA\rc =\det\lc\matA\rc\prod_{i\in[N]}\vecw_i^2.
\end{equation*}
Thus, $\det\lc\matA\odot(\vecw\vecw^\T)\rc\neq 0$ if and only if $\det\lc\matA\rc\neq0$ and all the entries of $\vecw$ are nonzero. Hence, the proof is complete.}
\end{proof}

\begin{theorem}\label{thm:rankundirected}
Let $\matA\in\lc0,1\rc^{N\times N}$ be the adjacency matrix of an undirected \ER graph with the edge probability $p$. Then, there exist finite positive constants $C$ and $c$ such that, for
$N^{-1}\leq p\leq 1-N^{-1}$,
the following holds:
\begin{equation*}
\bbP\lc\matA \text{ is non-singular}\rc\geq 1-C\exp\lb-c(pN)^{1/32}\rb.
\end{equation*}
\end{theorem}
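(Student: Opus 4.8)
The plan is to prove the quantitatively stronger statement that the smallest singular value $\sigma_{\min}(\matA)$ is strictly positive on an event of the stated probability; since $\lc\matA\text{ singular}\rc\subseteq\lc\inf_{\vecx\in\bbS^{N-1}}\lV\matA\vecx\rV=0\rc$, it suffices to exhibit a uniform lower bound on $\lV\matA\vecx\rV$ over the unit sphere. Following the compressible/incompressible dichotomy, I would write $\bbS^{N-1}=\comp\cup\incomp$, where $\comp$ consists of the unit vectors within a small $\ell_2$-distance of the sparse vectors and $\incomp$ of the remaining ``spread'' vectors, with the sparsity level and distance thresholds tuned as functions of $p$ and $N$. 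Bounding $\bbP(\inf_{\comp}\lV\matA\vecx\rV=0)$ and $\bbP(\inf_{\incomp}\lV\matA\vecx\rV=0)$ separately then yields the claim by a union bound.

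For the compressible part I would run an $\epsilon$-net argument. For a single fixed direction $\vecx$, each coordinate $\lan\matA_{i,:},\vecx\ran$ of $\matA\vecx$ has a small L\'evy concentration function, controlled through the Bernoulli small-ball estimate, and tensorizing this single-coordinate bound shows that $\lV\matA\vecx\rV$ avoids $0$ except with probability exponentially small in $pN$. Because $\comp$ admits an $\epsilon$-net of controlled cardinality and $\lV\matA\rV\lesssim\sqrt{pN}$ holds with exponentially high probability in this regime, the estimate transfers from the net to all of $\comp$, giving $\bbP(\inf_{\comp}\lV\matA\vecx\rV=0)\le C\exp(-c\,pN)$.

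For the incompressible part I would use the invertibility-via-distance reduction: for spread $\vecx$, the quantity $\lV\matA\vecx\rV$ is bounded below by $\tfrac{1}{\sqrt N}\,\mathrm{dist}(\matA_k,H_k)$ for a positive fraction of columns $k$, where $H_k$ is the span of the remaining columns, so the task reduces to bounding the small-ball probability of the column $\matA_k$ against the normal direction of $H_k$. This probability is small precisely when that normal is itself incompressible, so the crux is an a priori structural fact that the normal vector is spread; I would establish it by a secondary decomposition of its coordinates into the large and small parts $\lar$ and $\smal$ and a further application of the small-ball machinery. The obstacle specific to our setting is that $\matA$ is symmetric, so its columns are dependent and $\matA_k$ shares entries with the submatrix spanning $H_k$; I would resolve this by a decoupling/block-splitting of the index set, conditioning on one diagonal block so that a genuinely independent off-diagonal block of entries remains available to drive each small-ball step.

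The hardest part is exactly this symmetry-induced dependence: the geometric arguments above are designed for matrices with independent columns, and adapting them forces one to isolate an independent sub-block at every stage where a conditioning or small-ball estimate is invoked, while simultaneously controlling the degradation of the Bernoulli small-ball constant in the sparse regime $p\to N^{-1}$. The repeated conditioning and the tensorization losses through the resulting bilinear dependence are what accumulate into the sub-optimal power $(pN)^{1/32}$; sharper bookkeeping would improve the exponent, as the remark following the theorem anticipates.
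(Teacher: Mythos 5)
Your outline reproduces the architecture of the paper's own proof: the same compressible/incompressible splitting of the sphere, the same invertibility-via-distance reduction (\Cref{lem:distance}) for the spread vectors, and the same resolution of the symmetry obstacle --- condition on the principal submatrix $\matA_{\mathrm{sub}}$ so that the off-diagonal row/column $\veca$ stays independent, lower bound the column-to-subspace distance by $\vecz^{\T}\veca$ for a null vector $\vecz$ of $\matA_{\mathrm{sub}}$, show that $\vecz$ is spread (the paper quantifies ``spread'' via the regularized LCD and \Cref{lem:compress}), and finish with the small-ball estimate of \Cref{prop:small}. So the route is the right one, and it is the paper's route.

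There is, however, one step that fails as written, and it conceals the one idea your sketch is missing. You assert that $\lV\matA\rV\lesssim\sqrt{pN}$ holds with exponentially high probability and use this to transfer the net estimate to all of $\comp$. For the \emph{uncentered} adjacency matrix this is false: since $\lV\matA\rV\geq\one^{\T}\matA\one/N=2\lv E\rv/N$, the norm concentrates near the mean degree $p(N-1)$, so $\lV\matA\rV\asymp pN\gg\sqrt{pN}$ throughout most of the range $N^{-1}\leq p\leq 1-N^{-1}$; only the centered matrix $\matA-p(\one\one^{\T}-\eye)$ has norm $O(\sqrt{pN})$. The missing ingredient is exactly what the paper isolates as \Cref{lem:connect}: split the mean $p(\one\one^{\T}-\eye)$ into a multiple of the identity (absorbed as a spectral shift $\lambda$) and the rank-one piece $p\one\one^{\T}$, whose contribution $p(\one^{\T}\vecx)\one$ for unit $\vecx$ lives in the one-dimensional family $\calY=\lc\kappa\one:\lv\kappa\rv\leq pN\rc$; one then runs the net argument for the \emph{centered} matrix uniformly over an auxiliary net of $\calY$, whose cardinality is absorbed by the $\exp(-cpN)$ probability. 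This same centering is what unlocks the dense regime you do not address: for $p>1/2$ the entrywise Bernoulli anti-concentration degenerates, and the paper passes to the complement graph using that $\matA-p(\one\one^{\T}-\eye)$ and $-\bigl[\tilde{\matA}-(1-p)(\one\one^{\T}-\eye)\bigr]$ are equidistributed, where $\tilde{\matA}$ is the adjacency matrix of an \ER graph with edge probability $1-p\leq 1/2$. Finally, note that your ``a priori structural fact'' that normal vectors are spread is itself proved through level-set estimates for the LCD that are stated for centered sparse matrices, so without this reduction both your compressible-part transfer and your incompressible-part structural step are unavailable.
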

\begin{proof}
See \Cref{app:rank_undirected}.
\end{proof}

Clearly, \Cref{con:control_pattern_a} of \Cref{thm:controlpattern} holds with probability 1. So we focus on \Cref{con:control_pattern_b} of \Cref{thm:controlpattern}. Let event $\calE$ denote that event that \Cref{con:control_pattern_b} of \Cref{thm:controlpattern} holds, i.e., $\calE$ is given by
\begin{equation}\label{eq:E_defn}
\calE = \lc\exists \calS\in\calU:\rank{\begin{bmatrix}
\bar{\matA} & \eye_{\calS}
\end{bmatrix}}=N\rc.
\end{equation} 
In the following, we derive a lower bound on the probability $\bbP\lc\calE\rc$ which is also a lower bound on the probability with which the network opinion is controllable under given constraints.

For a given index set $\calS\in\calU$, we rearrange the columns of the matrix in \eqref{eq:E_defn} as
\begin{equation*}
\rank{\begin{bmatrix}
\bar{\matA} & \eye_{\calS}
\end{bmatrix}} = \rank{\begin{bmatrix}
\bar{\matA}_{\calS,:} & \eye\\
\bar{\matA}_{\calS^c,:} & \zero
\end{bmatrix}},
\end{equation*}
where $\calS^c=[N]\setminus\calS$ and $\lv\calS^c\rv=N-s$. Also, $\bar{\matA}_{\calS,:}\in\bbR^{ s \times N}$ and $\bar{\matA}_{\calS^c,:}\in\bbR^{ N-s\times N}$ are the submatrices of $\bar{\matA}$ formed by rows indexed by $\calS$ and $\calS^c$, respectively. Consequently, \eqref{eq:E_defn} can be further simplified as follows:
$\calE = \lc\exists \calS\in\calU:
\bar{\matA}_{\calS^c,:}
 \text{ is full row rank}\rc$.

We note that $\calE$ depends on the rank of a non-square matrix $\bar{\matA}_{\calS^c}$. However, since \Cref{lem:Hadamard} and \Cref{thm:rankundirected} deals with the invertibility of square matrices, we first lower bound $\bbP\{\calE\}$ in terms of probabilities with which certain square matrices are invertible. For this, we notice that
\begin{equation}
\calE \supseteq\lc\exists \calS\in\calU,\calI\subseteq\calS:\bar{\matA}_{\calI^c,:}\text{ is full row rank}\rc\label{eq:E_relax1}.
\end{equation}
where $\calI^c=[N]\setminus \calI\supseteq\calS^c$, and $\bar{\matA}_{\calI^c,:}\in\bbR^{N-\lv\calI\rv\times N}$ is the submatrix of $\bar{\matA}$ formed by rows indexed by $\calI^c$. Here, \eqref{eq:E_relax1} follows because if  all rows of $\bar{\matA}_{\calI^c,:}$ are linearly independent, then, all rows of the submatrix $\bar{\matA}_{\calS^c,:}$  of $\bar{\matA}_{\calI^c,:}$ are also linearly independent. Next, we further bound \eqref{eq:E_relax1} as follows:
\begin{equation}\label{eq:E_relax2}
\calE\supseteq \Big\{\exists \calS\in\calU,\calI\subseteq\calS:\bar{\matA}_{\calI,:}=\zero\text{ and }\bar{\matA}_{\calI^c,\calI^c} \text{ is non-singular}\Big\},
\end{equation}
where $\bar{\matA}_{\calI,:}\in\bbR^{\lv\calI\rv\times N}$ is the submatrix of $\bar{\matA}$ formed by rows indexed by $\calI$, and  $\bar{\matA}_{\calI^c,\calI^c}\in\bbR^{N-\lv\calI\rv\times N-\lv\calI\rv}$ is the (symmetric) principal submatrix of $\bar{\matA}$ formed by the rows indexed by $\calI^c$ and the corresponding columns. Therefore, we have
\begin{equation*}
\bbP\lc\calE\rc\geq\bbP \Big\{\exists \calS\in\calU,\calI\subseteq\calS:\bar{\matA}_{\calI,:}=\zero\text{ and }\bar{\matA}_{\calI^c,\calI^c} \text{ is non-singular}\Big\}.
\end{equation*}
Hence, $\bbP\{\calE\}$ now depends on the invertibility of the symmetric square matrix $\bar{\matA}_{\calI^c,\calI^c}$. 

Next, \change{we note that $
\bar{\matA}=\matLambda(\matA\odot\matW)$ where the invertible diagonal matrix $\matLambda\in\bbR_+^{N\times N}$ normalizes the rows of $\matA\odot\matW$ and $\matLambda_{ii}=\min\lc 1,
\frac{1}{\sum_{j=1}^N\matA_{ij}}\rc$. 
Since} $\matLambda$ is an invertible diagonal matrix, we deduce that
\begin{equation*}
\bbP\lc\calE\rc\geq\bbP \Big\{\exists \calS\in\calU,\calI\subseteq\calS:\matA_{\calI,:}\odot\matW_{\calI,:}=\zero\text{ and }\matA_{\calI^c,\calI^c}\odot\matW_{\calI^c,\calI^c}\text{ is non-singular}\Big\}.
\end{equation*}
The entries of $\matW_{\calI^c,\calI^c}$ are sampled from a continuous distribution, they are nonzero with probability one. Thus, \Cref{lem:Hadamard} leads to the following:
\begin{equation}
\bbP\{\calE\} \geq\bbP\Big\{\exists \calS\in\calU,\calI\subseteq\calS:\matA_{\calI,:}=\zero
\text{ and }\matA_{\calI^c,\calI^c}\text{ is non-singular} \Big\}\label{eq:prob_E}.
\end{equation}

Further, \eqref{eq:prob_E} can be further simplified using \Cref{thm:rankundirected}. For this, we rewrite the right-hand side of \eqref{eq:prob_E}  as $
\bbP\{\calE\} \geq\bbP\lc\cup_{i=0}^{s}\calE_i\rc,$
where we define $\calE_i$ as follows:
\begin{equation}\label{eq:Ei_defn}
\calE_i\triangleq \Big\{\exists \calS\in\calU,\calI\subseteq\calS:\lv\calI\rv=i,\matA_{\calI,:}=\zero
\text{ and }\matA_{\calI^c,\calI^c}\text{ is non-singular}\Big\}.
\end{equation}
However, when $\matA_{\calI^c,\calI^c}$ is invertible, all rows of $\matA$ indexed by $\calI^c$ are nonzero. Therefore, $\calE_i$ denote the event that $\matA$ has exactly $i$ zero rows (indexed by $\calI$), and the remaining rows are linearly independent. Consequently, the events $\lc\calE_i\rc_{i=0}^s$ are disjoint, and so the union bound holds with equality. Therefore, we obtain
\begin{equation}\label{eq:E_relax3}
\bbP\{\calE\} \geq  \bbP\lc\bigcup_{i=0}^{s}\calE_i\rc=\sum_{i=0}^{s}\bbP\lc\calE_i\rc.
\end{equation}

Now, we simplify $\bbP\lc\calE_i\rc$ by summing over all possible values of $\calI$ (corresponding to zero rows of $\bar{\matA}$) as follows:
\begin{equation}\label{eq:Ei_relax}
\bbP\lc\calE_i\rc= \sum_{\substack{\calI\subseteq\calS:\calS\in\calU,\\\lv\calI\rv=i}}\!\bbP\lc\matA_{\calI,:}=\zero\rc\bbP\lc\matA_{\calI^c,\calI^c}\text{ is non-singular}\rc,
\end{equation}
which we obtain  using the fact that the entries of $\matA_{\calI,:}$ and $\matA_{\calI^c,\calI^c}$ are independent.

The condition $\matA_{\calI,:}=\zero$ holds when all the independent Bernoulli variables in $\matA_{\calI,:}$ are zeros. The number of independent random variables is 
$(N-1)+(N-2)+\ldots+(N-i)=i(N-(i+1)/2)$.
Therefore, we have
\begin{equation}\label{eq:AI=0}
\bbP\lc\matA_{\calI,:}=\zero\rc=(1-p)^{i(N-(i+1)/2)}.
\end{equation}

Further, the entries of $\matA_{\calI^c,\calI^c}\in\bbR^{N-i\times N-i}$ have the same distribution as that of $\matA$. Thus, we apply  \Cref{thm:rankundirected} to get
\begin{equation}\label{eq:AIC}
\bbP\lc\matA_{\calI^c,\calI^c}\text{ is non-singular}\rc\geq 1-C\exp\lb-cp(N-i)^{1/32}\rb,
\end{equation} 
where $c>0$ is universal constant.
Combining \eqref{eq:Ei_relax}, \eqref{eq:AI=0}, and \eqref{eq:AIC}, we get that
\begin{equation}\label{eq:Ei_final}
\bbP\lc\calE_i\rc\geq Q(i,\calU)p^{i(N-(i+1)/2)} \ls1-C\exp\lb-cp(N-i)^{1/32}\rb\rs,
\end{equation}
where $Q$ is as defined in the statement of the theorem (see \eqref{eq:Q_defn}). Finally, we complete the proof by combining \eqref{eq:E_relax3} and \eqref{eq:Ei_final}.

\hfill\qed

\section{Proof of \Cref{thm:directed}}\label{app:directed}
The proof technique used here is similar to that of \Cref{thm:undirected}. However, since \Cref{thm:rankundirected} does not hold in this case, we use an an analogous theorem for directed graphs which is as follows:
\begin{theorem}\label{thm:rank_directed}
Let $\matA\in\lc0,1\rc^{N\times N}$ be the adjacency matrix of a directed \ER graph with the edge probability $p$. Let $\matD$ be a real valued diagonal
matrix independent of $\matA$ with $\lV\matD\rV \leq R\sqrt{pN}$ where $R\geq1$. Then, there exist finite positive constants $C$ and $c$ that depend on $R$ such that for
$C \frac{\log N}{N}\leq p\leq 1-C \frac{\log N}{N}$,
it holds that
$\bbP\lc\matA+\matD \text{ is non-singular}\rc\geq 1-\exp\lb-cpN\rb$.
\end{theorem}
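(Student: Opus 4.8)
The plan is to show that $\matM \triangleq \matA+\matD$ has no nonzero null vector with probability at least $1-e^{-cpN}$, i.e.\ that $\inf_{\vecx\in\bbS^{N-1}}\lV\matM\vecx\rV>0$ with high probability. Since the off-diagonal entries of $\matA$ are independent $\Ber(p)$ variables, $\matA_{kk}=0$, and $\matD$ is a fixed diagonal independent of $\matA$, both the rows and the columns of $\matM$ are independent, and $\matD$ merely adds a deterministic shift to a single coordinate of each row (resp.\ column). I would first record a high-probability ``good event'' $\calG$ on which $\lV\matA-p(\one\one^\T-\eye)\rV\leq C_0\sqrt{pN}$ (a standard spectral-norm bound for sparse Bernoulli matrices valid once $p\geq C\log N/N$); combined with the hypothesis $\lV\matD\rV\leq R\sqrt{pN}$ this keeps $\lV\matM\rV$ on the fluctuation scale $\sqrt{pN}$, which is what the covering arguments below require. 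Following the Rudelson--Vershynin paradigm, I would split the sphere into compressible vectors $\comp(\delta,\rho)$ (those within $\ell_2$-distance $\rho$ of a $\delta N$-sparse vector) and incompressible vectors $\incomp(\delta,\rho)$, and bound the probability of a null vector in each class separately.

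For the compressible part the governing (and tightest) events are those in which a null vector concentrates near a single coordinate direction, forcing a row/column of $\matM$ to be atypically small. For a fixed well-spread $\vecx$, each coordinate $\lan \matM_{i,:},\vecx\ran$ is a weighted independent Bernoulli sum with small Lévy concentration function, so $\lV\matM\vecx\rV$ is bounded below except with probability $e^{-cpN}$, and the fixed diagonal shift from $\matD$ can only improve this anti-concentration. Covering $\comp(\delta,\rho)$ by a net of cardinality $\exp(C\delta N\log(e/\delta))$ and taking a union bound, I would choose $\delta=\delta(R)$ small enough to absorb the net entropy into the $e^{-cpN}$ tail. The lower threshold $p\geq C\log N/N$ enters precisely here: the most dangerous compressible directions are the $N$ coordinate vectors, each failing with probability $\approx e^{-pN}$, and the union bound over them closes only when $pN\gtrsim\log N$, which is also what pins down the final rate $1-e^{-cpN}$.

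For the incompressible part I would use the invertibility-via-distance reduction: if some $\vecx\in\incomp(\delta,\rho)$ lies in the kernel, then for a constant fraction of indices $k$ the distance from the $k\nth$ column $\matM_k$ to the span $H_k$ of the remaining columns must be small. Fixing $k$, let $\vecn_k$ be a unit normal to $H_k$; since $\vecn_k$ depends only on the other columns it is independent of $\matM_k$, whence $\mathrm{dist}(\matM_k,H_k)\geq\lv\lan\matM_k,\vecn_k\ran\rv$, and $\lan\matM_k,\vecn_k\ran$ is a sum of independent Bernoulli terms plus the single deterministic $\matD_{kk}(\vecn_k)_k$ shift. Its small-ball probability is controlled by the Lévy concentration function of the coefficient vector $\vecn_k$; a separate structural step shows that $\vecn_k$ is itself incompressible with high probability, making this concentration small and forcing $\mathrm{dist}(\matM_k,H_k)>0$. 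Averaging over $k$ then yields a bound at least as strong as $e^{-cpN}$, so the compressible case remains the binding one.

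The main obstacle is the incompressible analysis in the sparse regime $p\approx\log N/N$: establishing that the random normal $\vecn_k$ is genuinely spread, rather than concentrated on a few coordinates, is delicate for sparse matrices, and the anti-concentration estimates for $\Ber(p)$ entries degrade as $p\to0$ or $p\to1$. This is exactly why the hypothesis is the symmetric window $C\log N/N\leq p\leq 1-C\log N/N$. Balancing the net entropy against the $e^{-cpN}$ tail near the lower threshold, and verifying that the fixed diagonal $\matD$ (bounded by $R\sqrt{pN}$, so constants depend on $R$) never destroys these anti-concentration bounds, are the remaining points requiring care.
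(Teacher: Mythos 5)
The paper does not prove this statement from first principles: its entire proof is the observation that the claim is an immediate corollary of \cite[Theorem 1.11]{basak2017invertibility}, which is stated precisely for a sparse Bernoulli matrix plus a diagonal shift $\matD$ with $\lV\matD\rV\leq R\sqrt{pN}$ in the window $C\log N/N\leq p\leq 1-C\log N/N$. Your proposal instead sketches a re-derivation of that result via the Rudelson--Vershynin compressible/incompressible scheme, which is indeed the paradigm of the cited work; but as a standalone proof it has a genuine gap exactly where the sparse regime makes the problem hard.

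Concretely, in your compressible step you cover $\comp$ by a net of cardinality $\exp\lb C\delta N\log(e/\delta)\rb$ with $\delta=\delta(R)$ a \emph{constant}, and you use a per-vector tail $e^{-cpN}$. In the regime the theorem must cover, $p\asymp\log N/N$, that tail is $e^{-c\log N}=N^{-c}$, only polynomially small, while the net entropy is $e^{\Theta(N)}$; the union bound fails by an exponential margin, and no choice of the constant $\delta$ repairs it. The argument has to stratify compressible vectors by effective support size $k$ and prove tails that improve with $k$ (roughly $e^{-cpkN}$ for a unit vector spread over $k$ coordinates), so that the level-$k$ entropy $k\log(eN/k)$ is beaten for every $1\leq k\leq\delta N$ once $pN\geq C\log N$; carrying this out, including the vectors dominated by a few large coordinates, is one of the main technical contributions of \cite{basak2017invertibility}, not a bookkeeping step. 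Likewise, your incompressible step rests on the claim that the random unit normal $\vecn_k$ to the span of the remaining columns is itself spread; you flag this as ``delicate,'' but for sparse Bernoulli columns it is the second substantial theorem one would have to prove, and the sketch gives no mechanism for it. Either close these two steps --- which essentially amounts to reproducing the cited paper --- or do what the paper does and invoke \cite[Theorem 1.11]{basak2017invertibility} directly.
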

\begin{proof}
The result is an immediate corollary of \cite[Theorem 1.11]{basak2017invertibility}.
\end{proof}

Using the arguments similar to those in the proof of \Cref{thm:undirected}, we see that all steps of the proof in  \Cref{app:undirected} until \eqref{eq:Ei_relax}  hold in this case. Hence, continuing from there, the condition $\matA_{\calI,:}=\zero$ holds when all the independent Bernoulli variables in $\matA_{\calI,:}$ are zeros. The number of independent random variables is $i(N-1)$.
Therefore, we have
\begin{equation}\label{eq:AI=01}
\bbP\lc\matA_{\calI,:}=\zero\rc=(1-p)^{i(N-1)}.
\end{equation}
Further, the entries of $\matA_{\calI^c,\calI^c}\in\bbR^{N-i\times N-i}$ have the same distribution as that of $\matA$. Thus, we apply  \Cref{thm:rank_directed} to get
\begin{equation}\label{eq:AIC1}
\bbP\lc\matA_{\calI^c,\calI^c}\text{ is non-singular}\rc\geq 1-\exp\lb-cp(N-i)\rb,
\end{equation} 
where $c>0$ is universal constant.
Combining \eqref{eq:Ei_relax}, \eqref{eq:AI=01}, and \eqref{eq:AIC1}, we get that
\begin{equation}\label{eq:Ei1}
\bbP\lc\calE_i\rc\geq Q(i,\calU)(1-p)^{i(N-1)} \ls1-\exp\lb-cp(N-i)\rb\rs,
\end{equation}
where $\calE_i$ is defined in \eqref{eq:Ei_defn} and $Q$ is as defined in the statement of the theorem (see \eqref{eq:Q_defn}).

Finally,  we complete the proof by combining \eqref{eq:Ei1} and \eqref{eq:E_relax3} (as we mentioned in the beginning of the proof, \eqref{eq:E_relax3} holds in this case).

\hfill\qed

\emph{Remark: } We note that the bound in \Cref{thm:directed} is not as tight as the result in \Cref{thm:undirected} because of the bound in \eqref{eq:E_relax2} used in the proof of \Cref{thm:undirected,thm:directed} (see \Cref{fig:varyp,fig:varyN}). To be specific, for both cases, we claim that $\calF_1 \supseteq \calF_2$ where
\begin{align*}
\calF_1&\triangleq\lc\exists \calS\in\calU,\calI\subseteq\calS:\bar{\matA}_{\calI^c,:}\text{ is full row rank}\rc\\
\calF_2&\triangleq \Big\{\exists \calS\in\calU,\calI\subseteq\calS:\bar{\matA}_{\calI,:}=\zero\text{ and }\bar{\matA}_{\calI^c,\calI^c} \text{ is non-singular}\Big\}.
\end{align*}
We recall that $\bar{\matA}_{\calI^c,:}\in\bbR^{N-\lv\calI\rv\times N}$ and $\bar{\matA}_{\calI,:}\in\bbR^{\lv\calI\rv\times N}$ are the submatrices of $\bar{\matA}$ formed by rows indexed by $\calI^c=\{1,2,\ldots,N\}\setminus\calI$ and $\calI$, respectively. Also, $\bar{\matA}_{\calI^c,\calI^c}\in\bbR^{N-\lv\calI\rv\times N-\lv\calI\rv}$ is the principal submatrix of $\bar{\matA}$ formed by the rows indexed by $\calI^c$ and the corresponding columns. To understand the difference between the directed and the undirected graph cases, we define another event $\calF_3$ as follows:
\begin{equation*}
\calF_3\triangleq \Big\{\exists \calS\in\calU,\calI\subseteq\calS:\bar{\matA}_{\calI,:}=\zero
\text{ and }\bar{\matA}_{\calI^c,:} \text{ is full row rank}\Big\}.
\end{equation*}
Clearly, $\calF_1\supseteq\calF_3\supseteq\calF_2$. However, for undirected  graphs, $\bar{\matA}$ is a symmetric matrix, and so if $\bar{\matA}_{\calI,:}=\zero$, we have $
\begin{bmatrix}
\bar{\matA}_{\calI,:}\\
\bar{\matA}_{\calI^c,:}
\end{bmatrix} = \begin{bmatrix}
\zero &\zero\\
\zero&\bar{\matA}_{\calI^c,\calI^c}
\end{bmatrix}$.
Hence, when $\bar{\matA}_{\calI,:}=\zero$, we have $\bar{\matA}_{\calI^c,:} = \begin{bmatrix}
\zero&\bar{\matA}_{\calI^c,\calI^c}
\end{bmatrix}$, and therefore, $\bar{\matA}_{\calI^c,:}$ has full row rank if and only if $\bar{\matA}_{\calI^c,\calI^c}$ has full row rank. Further, since $\bar{\matA}_{\calI^c,\calI^c}$ is a square matrix, this is equivalent to $\bar{\matA}_{\calI^c,\calI^c}$ being non-singular. Hence, $\calF_3=\calF_2$ for the undirected graph case. However, for directed graphs, $\calF_3\supset\calF_2$, and thus, the bound is not as tight as the bound for the undirected case.

\hfill\qed

 \section{Proof of \Cref{thm:rankundirected}}\label{app:rank_undirected}
\change{The probability with which a symmetric random matrix with iid, zero mean and unit variance above-diagonal entries (i.e., the entries in the upper triangular portion of a matrix other than the diagonal entries) is invertible is studied in \cite{vershynin2014invertibility}. 
Our result is a generalization of \cite[Theorem 1.5]{vershynin2014invertibility} which is modified to handle the adjacency matrix of an undirected \ER graph with edge probability $p$. Our analysis is based on the concentration of inner product using small ball probabilities whereas t\cite[Theorem 1.5]{vershynin2014invertibility} uses the concentration of quadratic forms using small ball probabilities. We start by introducing some notation and useful results from the literature. }
\subsection{Toolbox}\label{sec:toolbox_rank_undirected}
In this section, we present a concept called small ball probability which  describes the spread of a distribution in space. The results on small ball probabilities requires us to define two other quantities called  L\' evy concentration function and least common denominator (LCD). The definition of the L\' evy concentration function is as follows:
\begin{definition}[L\' evy function]\label{def:levy}
The L\' evy concentration of a random vector $\vecx\in\bbR^N$ for any $\epsilon>0$ is defined as
$\calL(\vecx,\epsilon ) = \underset{\vecz\in\bbR^N}{\sup}\bbP\lc\lV\vecx-\vecz\rV\leq\epsilon \rc$.
\end{definition}
Thus, the  L\'evy concentration function measures the small ball probabilities, namely, the likelihood that the random vector
$\mathbf{x}$  enters a small ball of radius $\epsilon$ in the space. A useful result on L\' evy concentration which we will use to define the LCD is as follows:
\begin{lemma}\label{lem:delta0}
Let $\xi$ be a random variable with unit variance and finite fourth moment, and $\zeta\in\lc0,1\rc$ be another random variable independent of $\xi$ such that $p=\bbP\lc\zeta=1\rc$. Then, there exist constants $0<\delta_0,\epsilon<1$ such that the L\'evy function (in \Cref{def:levy}) satisfies 
$\calL(\zeta\xi,\epsilon)\leq 1-\delta_0 p$.
\end{lemma}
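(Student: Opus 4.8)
\emph{Plan.} The first move is to peel off $\zeta$ by conditioning, reducing the claim to an anti-concentration bound for $\xi$ alone. Since $\zeta\xi$ is scalar, the L\'evy function is $\calL(\zeta\xi,\epsilon)=\sup_{z\in\bbR}\bbP\lc\lv\zeta\xi-z\rv\leq\epsilon\rc$. Conditioning on $\zeta$ and using independence of $\zeta$ and $\xi$, for every $z\in\bbR$ one has
\begin{equation*}
\bbP\lc\lv\zeta\xi-z\rv\leq\epsilon\rc = (1-p)\,\mathbf{1}\lc\lv z\rv\leq\epsilon\rc + p\,\bbP\lc\lv\xi-z\rv\leq\epsilon\rc .
\end{equation*}
Bounding the indicator by $1$ and $\bbP\lc\lv\xi-z\rv\leq\epsilon\rc$ by $\calL(\xi,\epsilon)$, and taking the supremum over $z$, I would get $\calL(\zeta\xi,\epsilon)\leq(1-p)+p\,\calL(\xi,\epsilon)$. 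Hence it suffices to exhibit $\epsilon,\gamma\in(0,1)$ with $\calL(\xi,\epsilon)\leq 1-\gamma$; then $\calL(\zeta\xi,\epsilon)\leq 1-\gamma p$ and the lemma follows with $\delta_0=\gamma$.

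\emph{The core anti-concentration bound.} To control $\calL(\xi,\epsilon)$ I would symmetrize, which conveniently removes the supremum over $z$. As $\calL$ is translation invariant, assume $\bbE\xi=0$, so $\bbE\xi^2=1$; write $M\triangleq\bbE\xi^4<\infty$. Fix $z$ and set $q=\bbP\lc\lv\xi-z\rv\leq\epsilon\rc$. With $\xi'$ an independent copy of $\xi$, the triangle inequality gives $\lc\lv\xi-z\rv\leq\epsilon\rc\cap\lc\lv\xi'-z\rv\leq\epsilon\rc\subseteq\lc\lv\xi-\xi'\rv\leq 2\epsilon\rc$, so $q^2\leq\bbP\lc\lv\xi-\xi'\rv\leq 2\epsilon\rc$ by independence. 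The symmetrized variable $\eta=\xi-\xi'$ is location-free with $\bbE\eta^2=2$ and, on expanding and using $\bbE\xi=0$, $\bbE\eta^4=2M+6$. Applying the Paley--Zygmund inequality to the nonnegative variable $\eta^2$ with parameter $1/2$ yields $\bbP\lc\eta^2>1\rc\geq(2M+6)^{-1}$. Taking $\epsilon=1/2$ gives $\bbP\lc\lv\eta\rv\leq 2\epsilon\rc\leq 1-(2M+6)^{-1}$, so $q\leq\sqrt{1-(2M+6)^{-1}}$ uniformly in $z$. Thus $\calL(\xi,1/2)\leq 1-\gamma$ with $\gamma\triangleq 1-\sqrt{1-(2M+6)^{-1}}$, closing the reduction with $\delta_0=\gamma$ and $\epsilon=1/2$.

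\emph{Main obstacle.} The subtle point is that unit variance by itself does not bound $\calL(\xi,\epsilon)$ away from $1$: a variable equal to $0$ with probability $1-t$ and to $\pm t^{-1/2}$ each with probability $t/2$ has unit variance, yet places mass $1-t\to 1$ in a tiny interval around the origin as $t\to 0$. The finite fourth moment is precisely what forbids this (here the fourth moment is $t^{-1}\to\infty$), and this is why $\delta_0$ and $\epsilon$ are allowed to depend on $M$. What matters for the application is that they do \emph{not} depend on $p$ or $z$, which is exactly the uniformity the small-ball and least-common-denominator estimates in the proof of \Cref{thm:rankundirected} rely on. The only bookkeeping is checking $\gamma\in(0,1)$, which holds because $M\geq(\bbE\xi^2)^2=1$ forces $2M+6\geq 8$, hence $0<\gamma<1$.
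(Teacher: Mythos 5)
Your proof is correct, but it takes a genuinely different route from the paper, which does not actually prove this lemma: the paper's entire proof is a citation of \cite[Lemma 3.2]{rudelson2009smallest} (anti-concentration of a unit-variance, finite-fourth-moment variable) together with \cite[Remark 6.4]{luh2018sparse} (the extension to the sparsified variable $\zeta\xi$). What you have done is reprove both cited facts from scratch. Your conditioning identity, which yields $\calL(\zeta\xi,\epsilon)\leq(1-p)+p\,\calL(\xi,\epsilon)$, is precisely the content outsourced to the Luh--Vu remark, and your symmetrization-plus-Paley--Zygmund argument is the standard proof behind the Rudelson--Vershynin lemma. The individual steps all check out: the decomposition over $\{\zeta=0\}$ and $\{\zeta=1\}$ is exact; $q^2\leq\bbP\lc\lv\xi-\xi'\rv\leq2\epsilon\rc$ follows from independence and the triangle inequality; after centering, $\bbE\eta^2=2$ and $\bbE\eta^4=2M+6$; Paley--Zygmund with $\theta=1/2$ gives $\bbP\lc\eta^2>1\rc\geq(2M+6)^{-1}$; and $M\geq(\bbE\xi^2)^2=1$ keeps $\delta_0\in(0,1)$. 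One point worth stating explicitly in your write-up: after you invoke translation invariance to center $\xi$, your $M$ is the fourth \emph{central} moment of the original variable, which is finite exactly when $\bbE\xi^4<\infty$, so the hypothesis still suffices. What your version buys is self-containedness and fully explicit constants ($\epsilon=1/2$ and $\delta_0=1-\sqrt{1-(2M+6)^{-1}}$) that depend only on the fourth central moment and, crucially, not on $p$ or $z$ --- exactly the uniformity the small-ball and LCD machinery in the proof of \Cref{thm:rankundirected} requires; what the paper's citation buys is brevity. Your counterexample also correctly isolates why the fourth-moment hypothesis cannot be dropped, a point the paper never discusses.
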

\begin{proof}
The proof follows from~\cite[Lemma 3.2]{rudelson2009smallest} and \cite[ Remark 6.4]{luh2018sparse}.
\end{proof}

We need some other definitions to introduce the concept of LCD. Let $\bbS^{N-1}\subset\bbR^N$ denote the unit Euclidean sphere. We define a subset of $\bbS^{N-1}$ parameterized by $\rho\in(0,1)$ based on sparsity as 
\begin{equation}
\incomp(N,\rho)\triangleq \Big\{\vecx\in\bbS^{N-1}: \nexists\;\vecy\in\bbR^N \text{such that} 
\ld  \lV\vecy\rV_0\leq  \frac{N}{(pN)^{1/16}}, \lV\vecx-\vecy\rV\leq \rho\rc.\label{eq:incomp}
\end{equation}
The set $\incomp(N,\rho)$ represents the set of incompressible vectors, \change{i.e., the} vectors that are not close to sparse vectors with at most $ \frac{N}{(pN)^{1/16}}$ nonzero entries. 
\begin{definition}[Regularized LCD{~\cite[Definition 6.3]{luh2018sparse}}]\label{def:rlcd}
Let $\alpha\in(0,1)$, $\vecx\in\incomp(N,\rho)$ and $\bbZ$ be the set of integers. We define the regularized LCD of $(\vecx,\alpha)$ as 
\begin{equation*}
\hat{D}(\vecx,\alpha)=\underset{\substack{\calI\subset [N]:\lv\calI\rv\leq\left\lceil\alpha N\right\rceil}}{\max} D\lb \vecx_{\calI}/\lV\vecx_{\calI}\rV\rb \text{ with }  D(\vecx) = \inf \lc \theta>0: \mathrm{dist}\lb \theta\vecx,\bbZ^N\rb<\gamma\rc,
\end{equation*}
where $\gamma = (\delta_0p)^{-1/2}\sqrt{\log_+\lb\sqrt{\delta_0p}\theta\rb}$ and $D(\vecx)$ is called the LCD of $\vecx$ and $\delta_0$ is given by \Cref{lem:delta0}.
\end{definition}
Here, $D(\vecx)$ is  the generalization of the least common multiple to real valued numbers. If all the entries of $\vecx$ are rational numbers, then $D(\vecx)$ is the least common multiple of the denominators of the entries of $\vecx$, i.e., $D(x)$ is the smallest integer $\theta$ such that $\theta\vecx\in\bbZ^N$. This quantity $D(\vecx)$ bounds the small ball probabilities of projections, $\vecx^\T\veca$. 
The quantitative relation between $\calL\lb\vecx^\T\veca,\sqrt{p}\epsilon\rb$ and $D(\vecx)$ is provided next. 
\begin{proposition}[{\cite[Proposition 6.5]{luh2018sparse}}]\label{prop:small}
Let $\veca\in\bbR^N$ be a random vector with independent entries $\veca_i=\zeta_i\xi_i$ where $\bbP\lc\zeta_i=1\rc=1-\bbP\lc\zeta_i=0\rc=p$, and $\xi_i$ is a random variable with unit variance and finite fourth moment. Also, $\zeta_i$ and $\xi_i$ are independent random variables. Then, for any $\vecx\in\bbS^{N-1}$  and $\epsilon>0$, the L\'evy function satisfies
$\calL\lb\vecx^\T\veca,\sqrt{p}\epsilon\rb\leq C_1\lb \epsilon +\frac{1}{\sqrt{p}D(\vecx)}\rb$,
where $D$ is the LCD given by \Cref{def:rlcd}.
\end{proposition}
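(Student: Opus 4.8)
The quantity to control is the Lévy concentration of the linear form $S:=\vecx^\T\veca=\sum_{i=1}^N x_i\zeta_i\xi_i$ at the scale $\sqrt p\,\epsilon$, which is natural because $\mathrm{Var}(S)$ is of order $p$, so $\sqrt p$ is the standard-deviation scale of $S$. The plan is a Halász-type argument: trade the small-ball probability for an integral of the characteristic function $\phi(t)=\bbE\,e^{itS}$, factorize $\phi$ using independence of the coordinates, and then read off the two terms of the bound from the behaviour of $\phi$ near the origin and beyond the least-common-denominator scale. Here \Cref{lem:delta0} is the single-coordinate input, and \Cref{def:rlcd} supplies the arithmetic quantity that governs the decay of $\phi$ away from $t=0$.

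First I would invoke Esseen's inequality, $\calL(S,\delta)\le C\delta\int_{-1/\delta}^{1/\delta}|\phi(t)|\,dt$, with $\delta=\sqrt p\,\epsilon$. By independence $\phi(t)=\prod_i\bbE\,e^{itx_i\zeta_i\xi_i}$; symmetrizing with an independent copy $\tilde\veca$ and using $1-u\le e^{-u}$ gives $|\phi(t)|^2\le\exp\!\big(-\sum_i\bbE[1-\cos(tx_i(\zeta_i\xi_i-\tilde\zeta_i\tilde\xi_i))]\big)$. The Bernoulli factors carry the dependence on $p$: on the event that exactly one of $\zeta_i,\tilde\zeta_i$ equals $1$—an event of probability $2p(1-p)\asymp p$ in the sparse regime—the symmetrized increment reduces to a single $\pm\xi_i$, which is precisely the anticoncentration quantified by \Cref{lem:delta0}. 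Combining these yields $|\phi(t)|\le\exp\!\big(-c\,p\sum_i\bbE_\xi[1-\cos(tx_i\xi_i)]\big)$.

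The next step converts the trigonometric sum into the structure measured by the LCD. Using $1-\cos(2\pi u)\gtrsim\|u\|_\bbT^2$, where $\|\cdot\|_\bbT$ is distance to the nearest integer, together with the unit-variance, finite-fourth-moment hypothesis on $\xi_i$ to compare $\bbE_\xi\|\frac{t}{2\pi}x_i\xi_i\|_\bbT^2$ with $\|\frac{t}{2\pi}x_i\|_\bbT^2$ on the relevant frequencies, one obtains $|\phi(t)|\le\exp(-c\,p\,\mathrm{dist}(\frac{t}{2\pi}\vecx,\bbZ^N)^2)$. Two regimes then feed the two terms. Near the origin $\mathrm{dist}(\frac{t}{2\pi}\vecx,\bbZ^N)\asymp|t|$ since $\|\vecx\|=1$, so $|\phi(t)|\le e^{-cpt^2}$ is Gaussian; its integral over $\bbR$ is $\asymp1/\sqrt p$, and multiplication by $\delta=\sqrt p\,\epsilon$ produces the first term $\epsilon$. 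By the defining property of the regularized LCD in \Cref{def:rlcd}, $\mathrm{dist}(\frac{t}{2\pi}\vecx,\bbZ^N)$ stays above the threshold $\gamma$—so $|\phi|$ remains negligible—for all frequencies up to the scale set by $D(\vecx)$; only beyond that scale can $|\phi|$ recur toward $1$, and the accumulated contribution of that regime to the Esseen integral, after multiplication by $\delta$, is exactly the second term $1/(\sqrt p\,D(\vecx))$.

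The delicate part—and the reason the statement is quoted from \cite[Proposition 6.5]{luh2018sparse} rather than reproved here—is the middle step. For Rademacher $\xi_i$ the passage to $\mathrm{dist}(\theta\vecx,\bbZ^N)$ is immediate, but for a general unit-variance, finite-fourth-moment $\xi_i$ one must lower bound $\bbE_\xi[1-\cos(tx_i\xi)]$ uniformly, and it is precisely here that the fourth-moment control and the \emph{regularized} LCD of \Cref{def:rlcd}—with its maximization over coordinate subsets $\calI$ of size at most $\lceil\alpha N\rceil$ and the logarithmic cutoff inside $\gamma$—are needed to keep the estimate stable as $p\to0$. Carrying the constants so that the bound holds uniformly over the admissible range of $p$ is the bookkeeping I would defer to that reference.
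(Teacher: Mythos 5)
The paper itself contains no proof of this proposition: it is quoted verbatim as Proposition~6.5 of Luh--Vu (\cite{luh2018sparse}), so there is no in-paper argument to compare against. Your sketch reproduces the standard Ess\'een/Hal\'asz-plus-LCD argument that underlies the cited result---symmetrization of the characteristic function, conditioning on exactly one of $\zeta_i,\tilde\zeta_i$ being active to extract the factor $p$ in the exponent, passage from the trigonometric sum to $\mathrm{dist}\lb\theta\vecx,\bbZ^N\rb$, and splitting of the frequency integral---and it is structurally sound; you also correctly identify that the uniform lower bound on $\bbE_\xi\ls 1-\cos(tx_i\xi)\rs$ under only unit-variance and fourth-moment hypotheses is the delicate step that justifies deferring to the reference. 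One imprecision worth fixing: the term $1/(\sqrt{p}\,D(\vecx))$ does not come from ``accumulated contributions of the regime beyond $D(\vecx)$'' in the Ess\'een integral. In the standard proofs one never integrates past the LCD scale, precisely because $|\phi|$ can fully recur there and that regime cannot be controlled; instead one uses monotonicity of the L\'evy function in its second argument to reduce to the case $\epsilon\geq 1/(\sqrt{p}\,D(\vecx))$, so that the integration range $|t|\leq 1/(\sqrt{p}\,\epsilon)$ stays below the LCD scale where the distance lower bound is valid, and the bound $C_1\max\lc\epsilon,\,1/(\sqrt{p}\,D(\vecx))\rc\leq C_1\lb\epsilon+1/(\sqrt{p}\,D(\vecx))\rb$ follows. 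This is a cosmetic repair to the heuristic, not a gap in the overall structure.
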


To state the other results used in the proof, we define a subset of $\incomp(N,\rho)$ in \eqref{eq:incomp} based on the regularized LCD (see \Cref{def:rlcd}) as follows:
\begin{equation}
\lar  (N,\rho)\triangleq \Big\{\vecx\in\incomp(N,\rho): 
 \hat{D}\lb \vecx,(pN)^{-1/16}\rb > \exp\lb(pN)^{1/32}\rb\Big\},\label{eq:lar}
\end{equation}
where $\hat{D}$ is the regularized the least common denominator (see \Cref{def:rlcd}). 

The following result shows that, with high probability, the eigenvectors of $\matA$ (defined in \Cref{thm:rankundirected}) belong to the set $\lar(N,\rho)  $.
\begin{lemma}\label{lem:compress}
There exist positive constants $C$ and $c$  such that if  $ N^{-1}<p<1-N^{-1}$, then for any $\lambda\in\bbR$, the following concentration inequality holds: 
\begin{equation*}
\bbP\Bigg\{\exists \vecx\in \bbS^N \setminus\lar(N,\rho) : 
\lV\lb\matA-\lambda\eye\rb\vecx\rV=0\Bigg\}
\leq \exp(-cpN).
\end{equation*}
Here, $\matA\in\bbR^{N\times N}$ and $p$ are defined in \Cref{thm:rankundirected}. Also, $\bbS^{N-1}\subset\bbR^N$ is the unit Euclidean sphere, $\lar(N,\rho)$ is defined in \eqref{eq:lar}, and
$\rho = C^{-\left\lfloor\frac{\log1/(8p)}{\log\sqrt{pN}}\right\rfloor}$.
\end{lemma}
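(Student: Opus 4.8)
The plan is to show that, with the stated probability, every unit null vector of $\matA-\lambda\eye$ is incompressible and has large regularized LCD, i.e.\ lies in $\lar(N,\rho)$. Writing $\comp(N,\rho)=\bbS^{N-1}\setminus\incomp(N,\rho)$ for the compressible vectors and $\smal(N,\rho)=\incomp(N,\rho)\setminus\lar(N,\rho)$ for the incompressible vectors of small LCD, we have the partition $\bbS^{N-1}\setminus\lar(N,\rho)=\comp(N,\rho)\cup\smal(N,\rho)$, so it suffices to bound
\begin{equation*}
\bbP\lc\exists \vecx\in\bbS^{N-1}\setminus\lar(N,\rho):\lb\matA-\lambda\eye\rb\vecx=\zero\rc
\end{equation*}
by treating the two regions separately and applying a union bound. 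In each region I would in fact establish the stronger quantitative statement that $\inf\lV\lb\matA-\lambda\eye\rb\vecx\rV$ over the region is bounded below by a positive quantity (of order $\sqrt p$ up to polynomial factors) with probability at least $1-\exp(-cpN)$, since the exact null event is contained in this small-ball event.

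For the compressible region I would use an $\epsilon$-net argument. Every $\vecx\in\comp(N,\rho)$ lies within $\rho$ of a vector supported on at most $N/(pN)^{1/16}$ coordinates, so $\comp(N,\rho)$ admits a net whose cardinality is governed by $\binom{N}{N/(pN)^{1/16}}$ together with a net on each low-dimensional sphere. For a fixed $\vecx$ I would bound $\bbP\lc\lV\lb\matA-\lambda\eye\rb\vecx\rV\leq t\sqrt p\rc$ by tensorizing the single-variable estimate of \Cref{lem:delta0}: for each index $i$ outside the (approximate) support, $\lb\lb\matA-\lambda\eye\rb\vecx\rb_i$ is an inner product of $\vecx$ with a row of independent $\Ber(p)$ variables whose L\'evy concentration is at most $1-\delta_0 p$, and these rows are genuinely independent for compressible $\vecx$ since they involve the disjoint off-support block of $\matA$. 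Summing over the $\Theta(N)$ such coordinates gives a per-vector bound of order $\exp(-cpN)$, and a high-probability bound on the relevant operator norm transfers the net estimate to all of $\comp(N,\rho)$; the value of $\rho$ in the statement is chosen precisely so that the net cardinality is absorbed in the sparse regime.

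For the incompressible small-LCD region I would decompose $\smal(N,\rho)$ into level sets according to the dyadic value of the regularized LCD $\hat D\lb\vecx,(pN)^{-1/16}\rb$, which ranges over $[D_0,\exp((pN)^{1/32})]$ by definition of $\smal$ and $\incomp$. On each level set the vectors are close to a rescaled integer lattice, yielding a net whose cardinality grows with the LCD value $D$, while \Cref{prop:small} bounds the single-coordinate L\'evy concentration of $\vecx^\T\veca$ by $C_1\lb\epsilon+1/(\sqrt p D)\rb$. Choosing the net mesh and the radius $\epsilon$ of order $1/D$ and tensorizing this small-ball estimate over the rows balances the growth of the net against the decay of the per-vector probability; the tensorized estimate carries an additive error of order $\exp(-cpN)$, which both produces the final bound and fixes the cutoff $\exp((pN)^{1/32})$ on the LCD beyond which vectors are placed in $\lar(N,\rho)$ rather than controlled here. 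A union bound over the $\mathrm{poly}$ many dyadic scales completes this region, and here the regularized (rather than plain) LCD is what makes the estimate uniform across the sparse regime $p\gtrsim N^{-1}$.

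The main obstacle is the symmetry of $\matA$: its rows are dependent, so the single-coordinate small-ball estimates of \Cref{lem:delta0,prop:small} cannot be tensorized directly as for matrices with i.i.d.\ entries. I would resolve this with the standard decoupling device of splitting $[N]$ into two halves $J$ and $J^c$: fixing a net point and exposing only the off-diagonal block $\matA_{J,J^c}$ conditionally on the remaining entries, the quantities $\{\lb\matA\vecx\rb_i\}_{i\in J}$ become independent because they draw on disjoint independent entries, so the small-ball probabilities tensorize over the $N/2$ coordinates in $J$ at the cost of only a constant factor in the exponent $c$. Carrying this conditioning through the LCD level-set decomposition, and checking that the small-ball exponents still dominate the net cardinalities after discarding half of the coordinates, is the technically delicate part of the argument.
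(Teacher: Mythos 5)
Your high-level decomposition matches the paper's: split $\bbS^{N-1}\setminus\lar(N,\rho)$ into compressible vectors and incompressible vectors of small regularized LCD, control each region by small-ball estimates plus nets, and union bound over dyadic LCD level sets. But the route is quite different: the paper does not prove this machinery at all. Its proof of \Cref{lem:compress} consists of (i) one reduction lemma (\Cref{lem:connect}) and (ii) citations to the shift-uniform (``non-centered'') versions of the results of Luh and Vu \cite{luh2018sparse} -- Corollary 5.5 for the compressible region, and Propositions 5.2, 6.8, 8.1, 8.2 for the LCD level sets. You propose instead to re-derive that machinery from scratch (nets, tensorized L\'evy bounds, decoupling to handle the symmetric dependence), and it is exactly there that your sketch has a genuine gap.

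The gap is centering. The matrix $\matA$ has mean $p(\one\one^\T-\eye)$, so $\lV\matA-\lambda\eye\rV$ is of order $pN$ (the Perron eigenvalue), not $O(\sqrt{pN})$. Every net-transfer step in your sketch (``a high-probability bound on the relevant operator norm transfers the net estimate'') implicitly requires the $O(\sqrt{pN})$ bound, which holds only for the centered matrix $\matA-p(\one\one^\T-\eye)$. With the raw adjacency matrix, the net mesh must shrink by an extra factor of order $\sqrt{pN}$, and in the sparse regime this lemma is designed for ($p$ as small as $N^{-1}$, so $pN$ possibly polylogarithmic or bounded) the resulting net cardinality overwhelms the per-vector small-ball bound $\exp(-cpN)$, so the union bound does not close; your remark that ``$\rho$ is chosen precisely so that the net cardinality is absorbed'' is not available to you without the mean subtraction. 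This is precisely why the paper's \Cref{lem:connect} exists: it writes $\matA\vecx-\lambda\vecx=\ls\matA-p(\one\one^\T-\eye)\rs\vecx-(\lambda-p)\vecx+p(\one^\T\vecx)\one$, observes that the mean contributes only a shift $\kappa\one$ with $\lv\kappa\rv\leq pN$ ranging over a one-dimensional family, covers that family by a net of size $O\lb\exp\lb(pN)^{1/16}\rb\rb$, and therefore needs the centered small-ball estimates to hold uniformly over deterministic shifts $\vecy$ -- which is exactly why the Luh--Vu results are invoked in their shifted versions. Your per-coordinate L\'evy bounds are shift-invariant, but your nets and tensorized exponents are not, so without this step the argument fails for small $p$. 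A second, related omission: your estimates are parametrized by $p$ throughout and tacitly assume $p\leq 1/2$, while the lemma covers $p$ up to $1-N^{-1}$; the paper handles $p>1/2$ inside \Cref{lem:connect} via the complement-graph identity that $\matA-p(\one\one^\T-\eye)$ has the same distribution as $-\ls\tilde{\matA}-(1-p)(\one\one^\T-\eye)\rs$, where $\tilde{\matA}$ has edge probability $1-p$; some such symmetry step is needed, since for $p$ near $1$ it is $1-p$, not $p$, that governs the fluctuations. Your decoupling device for the symmetric dependence is the right instinct (it is what Vershynin and Luh--Vu do), but it does not by itself repair either of these two issues.
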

\begin{proof}
See \Cref{sec:compress}.
\end{proof}

The final result of this subsection bounds the infimum of $\lV\matA\vecx\rV$ over incompressible vectors for a general random matrix $\matA$.
\begin{lemma}\label{lem:distance}
Let $\matA\in\bbR^{N\times N}$ be any random matrix with iid columns. Let $\calH\subseteq \bbR^N$ denote the span of all columns of $\matA$ except the first column. Then,  for every $\epsilon >0$, it holds that
\begin{equation*}
\bbP\lc \underset{\vecx\in\incomp(N,\rho)}{\inf}\lV\matA\vecx\rV\leq \frac{\epsilon \rho}{\sqrt{N}}\rc\leq (pN)^{1/16}\bbP\lc\mathrm{dist}\lb\matA_1,\calH\rb\leq \epsilon \rc,
\end{equation*}
where $\incomp(N,\rho)$ is defined in \eqref{eq:incomp}.
\end{lemma}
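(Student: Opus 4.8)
The plan is to reduce the uniform lower bound on $\lV\matA\vecx\rV$ over incompressible vectors to a statement about how many columns of $\matA$ lie close to the span of the remaining columns, and then to turn this into the claimed probability bound through an averaging argument that exploits the iid column structure.

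First I would record an elementary pointwise inequality. Writing $\calH_k$ for the span of all columns of $\matA$ except the $k\nth$ column $\matA_k$ (so that $\calH=\calH_1$), for every $\vecx\in\bbR^N$ and every index $k$ we have $\lV\matA\vecx\rV\geq\lv\vecx_k\rv\,\mathrm{dist}(\matA_k,\calH_k)$, because $\matA\vecx=\vecx_k\matA_k+\vech$ with $\vech\triangleq\sum_{j\neq k}\vecx_j\matA_j\in\calH_k$, and since $\calH_k$ is a subspace, $\mathrm{dist}(\vecx_k\matA_k,\calH_k)=\lv\vecx_k\rv\,\mathrm{dist}(\matA_k,\calH_k)$ (the case $\vecx_k=0$ being trivial). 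Next I would prove the geometric counting property of incompressible vectors that actually uses the definition \eqref{eq:incomp}: if $\vecx\in\incomp(N,\rho)$, then the set of large coordinates $\calJ(\vecx)\triangleq\lc k:\lv\vecx_k\rv\geq\rho/\sqrt{N}\rc$ satisfies $\lv\calJ(\vecx)\rv\geq m$, where $m\triangleq N/(pN)^{1/16}$. This follows by contradiction: if $\lv\calJ(\vecx)\rv<m$, then zeroing out every coordinate of $\vecx$ outside $\calJ(\vecx)$ produces a vector with support of size at most $m$ whose squared $\ell_2$-distance to $\vecx$ is $\sum_{k\notin\calJ(\vecx)}\vecx_k^2<\lv\calJ(\vecx)^c\rv\,\rho^2/N\leq\rho^2$, exhibiting an admissible sparse $\vecy$ with $\lV\vecx-\vecy\rV\leq\rho$ and contradicting incompressibility.

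Combining these two facts gives the result. On the event $\calE\triangleq\lc\inf_{\vecx\in\incomp(N,\rho)}\lV\matA\vecx\rV\leq\epsilon\rho/\sqrt{N}\rc$, fix a witnessing incompressible vector $\vecx$; then for every $k\in\calJ(\vecx)$ the pointwise inequality yields $\mathrm{dist}(\matA_k,\calH_k)\leq(\epsilon\rho/\sqrt{N})/(\rho/\sqrt{N})=\epsilon$, so at least $m$ of the columns satisfy $\mathrm{dist}(\matA_k,\calH_k)\leq\epsilon$. Hence $\calE$ is contained in the event that the integer-valued count $\nu\triangleq\lv\lc k:\mathrm{dist}(\matA_k,\calH_k)\leq\epsilon\rc\rv$ is at least $m$. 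I would then finish with Markov's inequality, using that the iid column assumption makes $\bbP\lc\mathrm{dist}(\matA_k,\calH_k)\leq\epsilon\rc$ the same for every $k$ and equal to $\bbP\lc\mathrm{dist}(\matA_1,\calH)\leq\epsilon\rc$: this gives $\bbE\nu=N\,\bbP\lc\mathrm{dist}(\matA_1,\calH)\leq\epsilon\rc$, whence $\bbP\lc\calE\rc\leq\bbP\lc\nu\geq m\rc\leq\bbE\nu/m=(N/m)\bbP\lc\mathrm{dist}(\matA_1,\calH)\leq\epsilon\rc=(pN)^{1/16}\bbP\lc\mathrm{dist}(\matA_1,\calH)\leq\epsilon\rc$, which is exactly the claim.

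The main work is the counting step, which is the only place the incompressibility hypothesis is used: the threshold $\rho/\sqrt{N}$ defining the large coordinates and the sparsity level $m=N/(pN)^{1/16}$ appearing in \eqref{eq:incomp} are calibrated precisely so that the Markov factor comes out to exactly $N/m=(pN)^{1/16}$. The one genuinely efficient idea I would highlight is to avoid a naive union bound over the $N$ columns (which would be wasteful) and instead apply Markov's inequality to the \emph{count} $\nu$ of columns close to the span of the others; combined with the exchangeability supplied by the iid columns, this is what produces the clean factor $(pN)^{1/16}$ multiplying a single-column probability rather than an inflated bound.
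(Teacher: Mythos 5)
Your proof is correct and is essentially the paper's own route: the paper disposes of this lemma by citing the Rudelson--Vershynin invertibility-via-distance lemma \cite[Lemma 3.5]{rudelson2008littlewood} with compressibility parameter $(pN)^{-1/16}$ together with the iid-column symmetry, and what you wrote is exactly the standard proof of that cited lemma---incompressible vectors have at least $N/(pN)^{1/16}$ coordinates of magnitude at least $\rho/\sqrt{N}$, the pointwise bound $\lV\matA\vecx\rV\geq\lv\vecx_k\rv\,\mathrm{dist}(\matA_k,\calH_k)$, and Markov's inequality applied to the count of columns lying close to the span of the others. The only point to tighten is that $\incomp(N,\rho)$ is not compact, so the infimum need not be attained by a ``witnessing'' vector; one should instead take approximate minimizers and let the slack tend to zero, which is harmless here because the column count is integer-valued and the distance events are closed.
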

\begin{proof}
The result is obtained from~\cite[Lemma 3.5]{rudelson2008littlewood} by choosing the first parameter of the compressible set as $(pN)^{-1/16}$ and the fact that columns of $\matA$ are iid.
\end{proof}

Having presented the mathematical tools,  in the next subsection, we formally prove \Cref{thm:rankundirected}.
\subsection{Proof of \Cref{thm:rankundirected}}
We obtain the probability with which $\matA$ is invertible by computing the probability with which the smallest singular value of $\matA$ is positive. Using the union bound and with $\bbS^{N-1}\subset\bbR^{N}$ denoting the unit Euclidean sphere, we have
\begin{equation}
\bbP\lc \matA \text{ is singular}\rc\leq
\bbP\lc\underset{\vecx\in\lar(N,\rho)}{\inf}\lV \matA \vecx\rV = 0\rc+
\bbP\lc\underset{\vecx\in \bbS^{N-1}\setminus \lar(N,\rho)}{\inf}\lV \matA \vecx\rV = 0\rc,\label{eq:boundterms}
\end{equation}
where  $\lar(N,\rho)$ and $\rho$ are given by \eqref{eq:lar} and \eqref{lem:compress}, respectively. In what follows, we upper bound the two terms in \eqref{eq:boundterms}.

Using \Cref{lem:compress}, there exists a constant $c_1>0$ such that
\begin{equation}
\bbP\lc\underset{\vecx\in\lb\bbS^{N-1}\setminus \lar(N,\rho)\rb}{\inf}\lV \matA \vecx\rV = 0\rc\leq \exp(-c_1pN).\label{eq:boundbarA_1}
\end{equation} 
 
Next, we bound the first term in the right hand side of \eqref{eq:boundterms} using \Cref{lem:distance}. To this end, we use \eqref{eq:lar} to get $\lar(N,\rho)\subset\incomp(N,\rho)$  which is defined in \eqref{eq:incomp}. Thus, we deduce that 
\begin{equation*}
\bbP\lc \underset{\vecx\in\lar(N,\rho)}{\inf}\lV\matA\vecx\rV=0\rc\leq\bbP\lc \underset{\vecx\in\incomp(N,\rho)}{\inf}\lV\matA\vecx\rV=0\rc.
\end{equation*}
Further, we write the symmetric matrix
$\matA=\begin{bmatrix}
0 & \veca^\T\in\bbR^{1\times N-1}\\
\veca\in\bbR^{N-1\times 1} &\matA_{\mathrm{sub}}\in\bbR^{N-1\times N-1}
\end{bmatrix}$,
and apply \Cref{lem:distance} to obtain
\begin{equation}
\bbP\lc \underset{\vecx\in\lar(N,\rho)}{\inf}\lV\matA\vecx\rV=0\rc\leq(pN)^{1/16}\bbP\lc\mathrm{dist}\lb\begin{bmatrix} 0 \\\veca\end{bmatrix},\CS{\begin{bmatrix}
\veca^\T\\
\matA_{\mathrm{sub}}
\end{bmatrix}}\rb
=0\rc,\label{eq:largebound}
\end{equation}
where $\CS{\cdot}$ denote the column space of a matrix. The distance term on the right hand side simplifies as follows:
\begin{equation*}
\mathrm{dist}\lb\begin{bmatrix} 0 \\\veca\end{bmatrix},\CS{\begin{bmatrix}
\veca^\T\\
\matA_{\mathrm{sub}}
\end{bmatrix}}\rb \!\geq  \mathrm{dist}\lb\veca,\CS{
\matA_{\mathrm{sub}}}\rb\!=\underset{\vecz\in\CS{
\matA_{\mathrm{sub}}
}}{\min}\lV\veca-\vecz\rV\!= \underset{\substack{\vecz\in\bbS^{N-2}\\\matA_{\mathrm{sub}}\vecz=\zero}}{\max}\vecz^\T\veca.
\end{equation*}
Therefore, from \eqref{eq:largebound}, we have
\begin{align}
\bbP\lc \underset{\vecx\in\lar(N,\rho)  }{\inf}\lV\matA\vecx\rV=0\rc&\leq (pN)^{1/16}\bbP\lc \underset{\substack{\vecz\in\bbS^{N-2}\\\matA_{\mathrm{sub}}\vecz=\zero}}{\max}\vecz^\T\veca \leq 0 \rc\notag\\
&\leq (pN)^{1/16}\bbP\lc \exists \vecz\in\bbS^{N-2}: \matA_{\mathrm{sub}}\vecz=\zero \text{ and } \vecz^\T\veca = 0\rc\notag\\
&\leq
(pN)^{1/16}\bbP\lc\exists \vecz\in\bbS^{N-2}\setminus\!\lar(N-1,\rho')\!\!: \matA_{\mathrm{sub}}\vecz=\zero\rc \notag\\
&\hspace{1.4cm}+(pN)^{1/16}\bbP\lc\exists \vecz\in\lar(N-1,\rho'): \vecz^\T\veca = 0\rc,\label{eq:inter_1}
\end{align}
where $\rho'\triangleq C^{-\left\lfloor\frac{\log1/(8p)}{\log\sqrt{p(N-1)}}\right\rfloor}$ wherein the constant $C$ is same as the constant in \eqref{eq:boundbarA_1}. Next, we use \Cref{lem:compress,lem:delta0} to simplify the two probability terms in \eqref{eq:inter_1}.

Since the entries of $\matA_{\mathrm{sub}}\in\bbR^{N-1\times N-1}$ have the same distribution as that of $\matA$, we again apply \Cref{lem:compress} to get
\begin{equation}\label{eq:splittwo1}
\bbP\lc \exists \vecz\in\bbS^{N-2}\setminus\lar(N-1,\rho'): \matA_{\mathrm{sub}}\vecz=\zero\rc \leq \exp(-c_1p(N-1)), 
\end{equation}

The second term in \eqref{eq:inter_1} can be simplified as follows:
\begin{multline*}
\bbP\lc\exists \vecz\in\lar(N-1,\rho'): \vecz^\T\veca = 0 \rc
\leq\underset{\vecz\in \lar(N-1,\rho')}{\sup}\;\bbP\lc\lv\vecz^\T\veca\rv= 0 \rc \\ \leq\underset{\vecz\in \lar(N-1,\rho')}{\sup}\; \underset{z\in\bbR}{\sup}\;\bbP\lc\lv\vecz^\T\veca-z\rv=0\rc \leq\underset{\vecz\in \lar(N-1,\rho')}{\sup}\; \calL\lb \vecz^\T\veca,0\rb.
\end{multline*}
Further, we note that the entries of $\veca$ have the same distribution as $\zeta\xi$, where $\zeta,\xi\in\lc0,1\rc$ are Bernoulli random variables with probabilities of being 1 as $1/2$ and $2p$, respectively. Thus, \Cref{prop:small} implies that there exists a constant $C_1>0$ with
\begin{equation}
\bbP\lc\exists \vecz\in\lar(N-1,\rho'): \vecz^\T\veca= 0\rc\leq \underset{\vecz\in \lar(N-1,\rho')}{\sup}\;\frac{C_1}{\sqrt{2p} D(\vecz)}\leq \frac{C_1}{\sqrt{2p}\, e^{((N-1)p)^{\frac{1}{32}}}},\label{eq:splittwo2}
\end{equation} 
where the last step follows from the definition of $\lar(N-1,\rho')$ and the fact that $\hat{D}(\vecx,\alpha)\leq D(\vecx)$, for any $\vecx\in\bbS^{N-1}$ and $0<\alpha<1$. Combining \eqref{eq:inter_1},  \eqref{eq:splittwo1} and \eqref{eq:splittwo2}, we get that
\begin{align*}
\bbP\lc \underset{\vecx\in\lar(N,\rho)  }{\inf}\lV\matA\vecx\rV=0\rc& \leq
(pN)^{1/16}\!\exp(-c_1p(N-1))\!+\!\frac{C_1}{\sqrt{2p}\exp\lb((N-1)p)^{1/32}\rb}\\
&\leq C_2\exp\lb-c_2(pN)^{1/32}\rb,
\end{align*}
for some constants $C_2,c_2>0$. Finally, combining the above equation with \eqref{eq:boundterms} and \eqref{eq:boundbarA_1}, we conclude that
$\bbP\lc \underset{\vecx\in\bbS^{N-1}  }{\inf}\lV\matA\vecx\rV> 0\rc
\geq 1- C_3\exp\lb-c_3(pN)^{1/32}\rb$,
for some constants $C_3,c_3>0$. Thus, the proof is complete.

\hfill\qed

\subsection{Proof of \Cref{lem:compress}}\label{sec:compress}
The proof is adapted from \cite[Theorem 2.2]{luh2018sparse}, which relies on the following lemma:
\begin{lemma}\label{lem:connect}
 Let $\calX\subset\bbS^{N-1}\subset\bbR^N$. We fix parameters  $\epsilon>0$, $0\leq\alpha<1/2$, $1/16\leq \beta$ and $\lambda\in\bbR$. Suppose for all $N^{-1}<p\leq 1/2$ and for any $\vecy\in\bbR^N$, there exist constants $C_1,c_1>0$ such that
\begin{equation}\label{eq:condi}
\bbP\lc\exists \vecx\in \calX : 
\lV\ls\matA-p(\one\one^\T-\eye)\rs\vecx-\lambda\vecx\rV\leq \epsilon(pN)^{\alpha}\rc\leq C_1\exp(-c_1(pN)^{\beta}).
\end{equation}
where $\matA\in\bbR^{N\times N}$ and $p$ are defined in \Cref{thm:rankundirected}. Then, there exist constants $C_2,c_2>0$ such that for any  $N^{-1}<p\leq 1-N^{-1}$ and $\lambda\in\bbR$,
\begin{equation}\label{eq:connect}
\bbP\lc\underset{ \vecx\in \calX}{\inf} \; 
\lV\matA\vecx-\lambda\vecx\rV=0\rc
\leq C_2 \exp(-c_2(pN)^{\beta}).
\end{equation}
\end{lemma}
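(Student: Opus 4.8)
The plan is to transfer the small-ball estimate (eq:condi) for the centred matrix $\matB:=\matA-p(\one\one^\T-\eye)$ into the exact-kernel estimate (eq:connect) for $\matA$ itself. The whole difficulty is the rank-one piece $p\one\one^\T$ of the mean $p(\one\one^\T-\eye)$; the $-p\eye$ part is harmless because it only shifts the spectral parameter. Concretely, writing $\matA=\matB+p\one\one^\T-p\eye$, I would first record that if $\vecx\in\calX$ obeys $\matA\vecx=\lambda\vecx$, then
\begin{equation*}
(\matB-(\lambda+p)\eye)\vecx=-p(\one^\T\vecx)\,\one ,
\end{equation*}
so the image $(\matB-(\lambda+p)\eye)\vecx$ is forced to lie on the fixed line $\mathrm{span}(\one)$, with the scalar $t:=\one^\T\vecx\in[-\sqrt N,\sqrt N]$ the only remaining $\vecx$-dependence.

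Second, I would discretise this single scalar. Choosing a net $\calN\subset[-\sqrt N,\sqrt N]$ of spacing $\delta=\epsilon(pN)^{\alpha}/(p\sqrt N)$, any $\vecx$ with $t=\one^\T\vecx$ has some $t_0\in\calN$ with $|t-t_0|\le\delta$, whence
\begin{equation*}
\lV(\matB-(\lambda+p)\eye)\vecx-(-p t_0\one)\rV=p\,|t-t_0|\,\sqrt N\le \epsilon(pN)^{\alpha}.
\end{equation*}
Now $-p t_0\one$ is a deterministic vector, so for each net point this is exactly the small-ball event governed by (eq:condi), read in the uniform-over-shifts form indicated by its \textit{for any} $\vecy\in\bbR^N$ quantifier and applied at spectral parameter $\lambda+p$ with $\vecy=-p t_0\one$. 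The net has cardinality $|\calN|=O_{\epsilon}((pN)^{1-\alpha})$, which is only polynomial in $pN$ because $\alpha<1/2<1$.

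Third, a union bound over $\calN$ gives
\begin{equation*}
\bbP\lc\exists\vecx\in\calX:\matA\vecx=\lambda\vecx\rc\le |\calN|\,C_1\exp(-c_1(pN)^{\beta})\le C_2\exp(-c_2(pN)^{\beta}),
\end{equation*}
the last inequality absorbing the polynomial prefactor into the exponential (possible since $\beta>0$ and $pN>1$ on the range $p>N^{-1}$). This proves the claim for $p\le1/2$, the range on which (eq:condi) is assumed. For $1/2<p\le1-N^{-1}$ I would pass to the complement graph: $\matA^c:=\one\one^\T-\eye-\matA$ is the adjacency matrix of an undirected \ER graph with parameter $1-p\in[N^{-1},1/2)$, its centred version equals $-\matB$, and $\matA\vecx=\lambda\vecx$ rearranges to $(\matA^c-\mu\eye)\vecx=(\one^\T\vecx)\one$ with $\mu=-(1+\lambda)$, i.e.\ the same rank-one structure. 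Hence the identical netting argument applies to $\matA^c$ at parameter $1-p$; the symmetry $p\leftrightarrow1-p$ reproduces the stated bound on the full range (with $(1-p)N$ in place of $pN$ for $p>1/2$, matching the claimed form on $[N^{-1},1-N^{-1}]$).

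The main obstacle is precisely the rank-one mean term $p(\one^\T\vecx)\one$: it blocks a direct appeal to an origin small-ball bound, since $\one^\T\vecx$ may be as large as $\sqrt N$. The netting-over-$t$ device resolves this, but only because (eq:condi) is available uniformly over additive shifts $\vecy$ and because the discretisation cost is polynomial in $pN$ while the gain per net point is stretched-exponential in $pN$ (guaranteed by $\beta>0$); balancing these two is the quantitative heart of the argument. A minor point to verify is that the spectral shifts $\lambda\mapsto\lambda+p$ and $\lambda\mapsto\mu$ stay within the \textit{for any} $\lambda\in\bbR$ scope of the hypothesis, which they do.
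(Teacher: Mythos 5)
Your proposal is correct and follows essentially the same route as the paper's proof: there too the mean is split into the harmless spectral shift $-p\eye$ plus the rank-one piece $p\one\one^\T$ whose image lies on $\mathrm{span}(\one)$, that line is discretized (the paper nets the segment of vectors $\calY=\lc\kappa\one:\kappa\in[-pN,pN]\rc$ rather than the scalar $\one^\T\vecx$, with the same net cardinality $2pN/(\epsilon(pN)^{\alpha})$), a union bound absorbs the polynomial factor into the stretched exponential, and the range $1/2<p\leq 1-N^{-1}$ is handled by passing to the complement graph exactly as you do. The only point to tighten is that the paper bounds $\bbP\lc \inf_{\vecx\in\calX}\lV\matA\vecx-\lambda\vecx\rV\leq\epsilon(pN)^{\alpha}\rc$, which dominates the probability that the infimum equals zero, whereas your write-up starts from an exact eigen-equation $\matA\vecx=\lambda\vecx$; since $\calX$ need not be closed the infimum may vanish without being attained, so you should leave slack (e.g.\ net spacing $\delta=\epsilon(pN)^{\alpha}/(2p\sqrt{N})$) so that vectors with $\lV\matA\vecx-\lambda\vecx\rV$ merely small, rather than zero, are also captured by the union over the net.
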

\begin{proof}
We first consider the case where $p<1/2$ and note that for any $\vecx\in\calX$,
\begin{equation*}
p\one\one^\T\vecx=p(\one^\T\vecx)\one\in \calY\triangleq \lc\kappa\one:\kappa\in[-pN,pN]\rc.
\end{equation*}
Then, for any $\lambda\in\bbR$, we have
\begin{equation*}
\underset{\vecy\in\calY}{\inf}\; \underset{\vecx\in \calX}{\inf}\lV\ls\matA-p\one\one^\T+p\eye\rs\vecx-\lambda\vecx-\vecy\rV
\leq \underset{\vecx\in \calX}{\inf} \lV \matA\vecx-(\lambda-p)\vecx\rV.
\end{equation*}
This relation leads to
\begin{equation*}
\bbP\lc \underset{\vecx\in \calX}{\inf} \lV \matA\vecx-\lambda\vecx\rV \leq \epsilon(pN)^\alpha\rc 
\!\leq\!
\bbP\lc \underset{\vecy\in\calY}{\inf}\; \underset{\vecx\in \calX}{\inf}\lV\ls\matA-p\one\one^\T\rs\vecx-\lambda\vecx-\vecy\rV\leq \epsilon(pN)^\alpha\rc\!.
\end{equation*}
Let $\calY_{\mathrm{net}}$ be an $\epsilon(pN)^\alpha-$net of $\calY$ and 
$\lv\calY_{\mathrm{net}}\rv\leq \frac{2pN}{\epsilon(pN)^\alpha}\leq c\exp\lb(pN)^{1/16}\rb$,
for some constant $c>0$.  We then deduce from the triangle inequality that for every $\vecy\in\calY$, there exists $\bar{\vecy}\in\calY_{\mathrm{net}}$ such that
\begin{equation*}
\lV\ls\matA-p\one\one^\T\rs\vecx-\lambda\vecx-\vecy\rV\geq\lV\ls\matA-p\one\one^\T\rs\vecx-\lambda\vecx-\bar{\vecy}\rV-\lV\vecy-\bar{\vecy}\rV.
\end{equation*}
Therefore, taking infimum over $\vecx\in \calX$,$\vecy\in\calY$ and $\bar{\vecy}\in\calY_{\mathrm{net}}$,
\begin{equation*}
\underset{\vecy\in\calY}{\inf}\;\underset{\vecx\in \calX}{\inf}\lV\ls\matA-p\one\one^\T\rs\vecx-\lambda\vecx-\vecy\rV\geq\underset{\vecy\in\calY_{\mathrm{net}}}{\inf} \underset{\vecx\in \calX}{\inf}\lV\ls\matA-p\one\one^\T\rs\vecx-\lambda\vecx-\bar{\vecy}\rV-\epsilon(pN)^\alpha.
\end{equation*}
Consequently, we derive
\begin{equation*}
\bbP\lc \!\underset{\vecx\in \calX}{\inf} \lV \matA\vecx-\lambda\vecx\rV \!\leq\!\epsilon(pN)^\alpha \!\rc 
\!\leq\!
\bbP\lc\! \underset{\vecy\in\calY_{\mathrm{net}}}{\bigcup}\!\underset{\vecx\in \calX}{\inf} \lV\ls\matA-p\one\one^\T\rs\vecx-\lambda\vecx-\vecy\rc\!\leq\! 2\epsilon(pN)^\alpha\!\rc\!.
\end{equation*}
Finally, using the union bound and \eqref{eq:condi}, we arrive at the desired result for $p\leq 1/2$. 

Next, to handle the case where $p>1/2$, we notice that the distribution of $\matA-p(\one\one^\T-\eye)$ is the same as that of $(1-p)(\one\one^\T-\eye)-\tilde{\matA}$ where $\tilde{\matA}$ is the adjacency matrix of an \ER graph with edge probability $1-p\leq 1/2$. Therefore, the distribution of $\lV\ls \matA-p(\one\one^\T-\eye)\rs\vecx-\lambda\vecx\rV$ is the same as that of $\lV\ls \tilde{\matA}-(1-p)(\one\one^\T-\eye)\rs\vecx+\lambda\vecx\rV$. Thus, 
we can use similar arguments as for the case $p\leq 1/2$ to prove that \eqref{eq:connect} holds for all $p>1/2$. Hence, the proof of \Cref{lem:connect} is complete.
\end{proof}

To prove \Cref{lem:compress}, we invoke the union bound to get
\begin{multline}\label{eq:union_bnd}
\bbP\lc\underset{\vecx\in\bbS^{N-1}\setminus\lar(N,\rho)}{\inf}\lV \matA \vecx\rV = 0\rc  \leq 
\bbP\lc\underset{\vecx\in\bbS^{N-1}\setminus\incomp(N,\rho)}{\inf}\lV \matA \vecx\rV = 0\rc \\+
\bbP\lc\underset{\vecx\in\change{\smal(N,\rho)}}{\inf}\lV \matA \vecx\rV = 0\rc,
\end{multline}
where \change{$\smal(N,\rho) \triangleq \change{\smal(N,\rho)} $} we use the fact that $\lar(N,\rho)\subseteq \incomp(N,\rho)$ which in turn, implies that 
$\bbS^{N-1}\setminus\lar(N,\rho)\change{\subseteq}\ls \bbS^{N-1}\setminus\incomp(N,\rho)\rs\bigcup\change{\smal(N,\rho)}$.
 In the following, we show that for each of these two sets, there exist $\epsilon,\beta,\alpha$ satisfying the conditions of \Cref{lem:connect} such that for any given $\lambda\in\bbR$ and $\vecy\in\bbR^N$ and for all $N^{-1}<p<1/2$, \eqref{eq:condi} holds. Thus, using \Cref{lem:connect} with $\lambda=0$ in \eqref{eq:connect} and \eqref{eq:union_bnd}, our proof is complete.

We start by proving that the set $\bbS^{N-1}\setminus\incomp(N,\rho)$ satisfies the condition \eqref{eq:condi} of \Cref{lem:connect}. We use the non-centered version of \cite[Corollary 5.5]{luh2018sparse} given in \cite[Appendix B]{luh2018sparse}. The corollary states that there exist constants $C,c,\tilde{c}>0$ such that if  $2\leq p^{-1}<M$ and for any $\lambda\in \bbR$ and $\vecy\in\bbR^N$,
\begin{align*}
\bbP\Bigg\{\underset{ \vecx\in \comp(N,M,\rho) }{\inf}
\lV\matA-p(\one\one^\T-\eye)\vecx-\vecy\rV\leq \tilde{c}\rho \sqrt{pN}\Bigg\}
&\leq \exp(-cpN)\\
\comp(N,M,\rho)\triangleq \Big\{\vecx\in\bbS^{N-1}: \exists\;\vecy\in\bbR^N \text{such that}
&\ld   \lV\vecy\rV_0\leq  M \text{and} \lV\vecx-\vecy\rV\leq \rho\rc.
\end{align*}
In our case, since $N^{-1}<p<1/2$, we get $p^{-1}<\frac{N}{(pN)^{1/16}}$, and with $M=\frac{N}{(pN)^{1/16}}$, 
\begin{equation*}
\bbP\lc\underset{ \vecx\in \bbS^{N-1}\setminus \incomp(N,\rho)}{\inf}  
\lV\ls\matA-p\one\one^\T-p\eye\rs\vecx-\lambda\vecx\rV\leq \tilde{c}\rho \sqrt{pN}\rc
\leq \exp(-c_1pN),
\end{equation*}
for some constant $c_1>0$ and for any $\vecy\in\bbR$. Therefore, $\bbS^{N-1}\setminus\incomp$ satisfies the condition \eqref{eq:condi} of \Cref{lem:connect}.

Finally, we complete the proof by establishing that $\change{\smal(N,\rho)}$ also satisfies the condition \eqref{eq:condi} of \Cref{lem:connect}. For this, we rely on the following related results:
\begin{enumerate}[label=(\roman*),leftmargin=0.5cm]
\item From the non-centered version of  \cite[Proposition 5.2]{luh2018sparse} (explicitly stated as the equation above Proposition 8.1), we get that  there exist constants $K,c_2$ such that
\begin{equation*}
\bbP\lc\lV\matA-p(\one\one^\T-\eye)\rV>K\sqrt{pN}\rc\leq \exp(c_2pN).
\end{equation*}
\item From \cite[Proposition 6.8, Equation (3)]{luh2018sparse}, for any $\vecx\in\incomp(N,\rho)$, we know that
\begin{equation}\label{eq:setD}
 \hat{D}\lb \vecx,(pN)^{-1/16}\rb\geq \frac{\rho^2\sqrt{\change{N}}}{4(pN)^{3/32}}.
\end{equation}
\item \cite[Proposition 8.1]{luh2018sparse} establishes the following: For any $N^{-1}\leq p\leq 1/2$,  $\lambda\in[-K\sqrt{pN},K\sqrt{pN}]$ and $\vecy\in\bbR^N$, there exist constants $C_3,c_3,\tilde{c}>0$ ,
\begin{equation}\label{eq:mid_low1}
\bbP\Bigg\{\underset{ \vecx\in \hat{\bbS}_D }{\inf}
\lV\matA-p(\one\one^\T-\eye)\vecx-\vecy\rV\leq C_3\epsilon (pN)^{7/16}\Bigg\}
\leq \exp(-c_3pN),
\end{equation}
where $\tilde{c}\frac{\sqrt{\change{N}}}{(pN)^{1/32}}\leq D\leq \exp\lb(pN)^{1/32}\rb$ and we define $\epsilon=\min\lc \frac{\sqrt{N}}{D},\frac{\rho}{4}(pN)^{\frac{1}{8}}\rc$ and
\begin{align}
\hat{\bbS}_D = \lc\vecx\in\incomp\!:\! D\leq\hat{D}\lb \vecx,(pN)^{-\frac{1}{16}}\rb\leq 2D \rc. \label{eq:S_defn}
\end{align}

\item \cite[Proposition 8.2]{luh2018sparse} proves that for any given $\lambda\in[-K\sqrt{pN},K\sqrt{pN}]$, the relation \eqref{eq:mid_low1} holds if $\frac{\rho^2\sqrt{\change{N}}}{4(pN)^{3/32}}\leq D\leq\tilde{c}\frac{\sqrt{\change{N}}}{(pN)^{1/32}}$.
\end{enumerate}
Combining these arguments, we obtain the following result: There exist constants $C_4,c_4>0$ such that if  for any $ N^{-1}<p\leq 1/2$, $\lambda\in\bbR$ and $\vecy\in\bbR^N$,
\begin{equation}\label{eq:mid_low}
\bbP\Bigg\{\underset{ \vecx\in \hat{\bbS}_D }{\inf}
\lV\matA-p(\one\one^\T-\eye)\vecx-\vecy\rV\leq C_4\epsilon (pN)^{7/16}\Bigg\}
\leq \exp(-c_4pN),
\end{equation}
where $\frac{\rho^2\sqrt{\change{N}}}{4(pN)^{3/32}}\leq D\leq \exp\lb(pN)^{1/32}\rb$. Also, from \eqref{eq:setD},  we deduce that
\begin{equation*}
\change{\smal(N,\rho)} \!=\! \lc\! \vecx\in\incomp(N,\rho)\!: \!
\frac{\rho^2\sqrt{n}}{4(pN)^{3/32}}\leq \hat{D}\lb \vecx,(pN)^{-1/16}\rb\!\leq\! \exp\lb(pN)^{1/32}\rb\!\rc\!.
\end{equation*}
Next, we use the covering set-based arguments to prove that $\change{\smal(N,\rho)}$ satisfies Condition \eqref{eq:condi} of \Cref{lem:connect}. We have $\change{\smal(N,\rho)}=\bigcup_{k=1}^K \hat{\bbS}_{2^{-k}\exp\lb(pN)^{1/32}\rb}$ where
\begin{equation*}
K=\min\lc k\in\nat: 2^{-k}\exp\lb(pN)^{1/32}\rb\leq \frac{\rho^2\sqrt{\change{N}}}{4(pN)^{3/32}}\rc\leq (pN)^{1/16},
\end{equation*}
 and $\hat{\bbS}$ is as defined in \eqref{eq:S_defn}. Using \eqref{eq:mid_low} and the union bound, we conclude that
\begin{multline*}
\bbP\lc\!\underset{ \vecx\in \change{\smal(N,\rho)}}{\inf}
\lV\matA-p(\one\one^\T-\eye)\vecx-\vecy\rV\!\leq\! C_4\epsilon (pN)^{\frac{7}{16}}\!\rc
\!\leq\!
(pN)^{\frac{1}{16}}e^{-c_4pN}
\!\leq\! e^{-c_5pN},
\end{multline*}
for some constant $c_5>0$. Thus, the proof is complete.

\hfill\qed

\noindent\textbf{Acknowledgment: } We thank Dr. Rajasekhar Anguluri  from the School of Electrical, Computer and Energy Engineering at Arizona State University for suggesting a shorter proof for \Cref{lem:Hadamard}.
\bibliographystyle{siamplain}

\begin{thebibliography}{10}

\bibitem{alipourfard2020friendship}
{\sc N.~Alipourfard, B.~Nettasinghe, A.~Abeliuk, V.~Krishnamurthy, and
  K.~Lerman}, {\em Friendship paradox biases perceptions in directed networks},
  Nat. Commun., 11 (2020), pp.~1--9.

\bibitem{askey1975orthogonal}
{\sc R.~Askey}, {\em Orthogonal polynomials and special functions}, {SIAM},
  1975.

\bibitem{aydougdu2017interaction}
{\sc A.~Aydo{\u{g}}du, M.~Caponigro, S.~McQuade, B.~Piccoli, N.~P. Duteil,
  F.~Rossi, and E.~Tr{\'e}lat}, {\em Interaction network, state space, and
  control in social dynamics}, in Active Particles, Volume 1, Springer, 2017,
  pp.~99--140.

\bibitem{banerjee2019naive}
{\sc A.~Banerjee, E.~Breza, A.~G. Chandrasekhar, and M.~Mobius}, {\em Naive
  learning with uninformed agents}, tech. report, National Bureau of Economic
  Research, 2019.

\bibitem{barabasi1999emergence}
{\sc A.-L. Barab{\'a}si and R.~Albert}, {\em Emergence of scaling in random
  networks}, Science, 286 (1999), pp.~509--512.

\bibitem{basak2017invertibility}
{\sc A.~Basak and M.~Rudelson}, {\em Invertibility of sparse non-hermitian
  matrices}, Adv. Math., 310 (2017), pp.~426--483.

\bibitem{bearman2004chains}
{\sc P.~S. Bearman, J.~Moody, and K.~Stovel}, {\em Chains of affection: The
  structure of adolescent romantic and sexual networks}, Am. J. Sociol., 110
  (2004), pp.~44--91.

\bibitem{bollobas1984evolution}
{\sc B.~Bollob{\'a}s}, {\em The evolution of random graphs}, T. Am. Math. Soc.,
  286 (1984), pp.~257--274.

\bibitem{bollobas2001random}
{\sc B.~Bollob{\'a}s and B.~B{\'e}la}, {\em Random graphs}, no.~73, Cambridge
  university press, 2001.
  
\bibitem{cohen2000resilience}
{\sc R.~Cohen, K.~Erez, D.~Ben-Avraham, and S.~Havlin}, {\em Resilience of the
  internet to random breakdowns}, Phys. Rev. Lett., 85 (2000), p.~4626.

\bibitem{cremonini2017controllability}
{\sc M.~Cremonini and F.~Casamassima}, {\em Controllability of social networks
  and the strategic use of random information}, Comput. Soc. Netw., 4 (2017),
  p.~10.

\bibitem{degroot1974reaching}
{\sc M.~H. DeGroot}, {\em Reaching a consensus}, {J. Am. Stat. Assoc.}, 69
  (1974), pp.~118--121.

\bibitem{dhal2015vulnerability}
{\sc R.~Dhal and S.~Roy}, {\em Vulnerability of network synchronization
  processes: A minimum energy perspective}, {IEEE} Trans. Autom. Control, 61
  (2015), pp.~2525--2530.

\bibitem{dietrich2016opinion}
{\sc F.~Dietrich, S.~Martin, and M.~Jungers}, {\em Opinion dynamics control by
  leadership with bounded influence}, in Proc.\ CDC, 2016, pp.~1--6.

\bibitem{erdos1959on}
{\sc P.~Erd{\H{o}}s and A.~R{\'e}nyi}, {\em On random graphs}, Publ. Math.
  Debrecen, 6 (1959), pp.~290--297.


\bibitem{foucart2013mathematical}
{\sc S.~Foucart and H.~Rauhut}, {\em A Mathematical Introduction to Compressive
  Sensing}, Birkh\"{a}user, 2013.
  
\bibitem{friedland2018linear}
{\sc S.~Friedland and M.~Aliabadi}, {\em Linear algebra and matrices}, SIAM,
  2018.

\bibitem{ghaderi2013opinion}
{\sc J.~Ghaderi and R.~Srikant}, {\em Opinion dynamics in social networks: A
  local interaction game with stubborn agents}, in Proc.\ Am.\ Control
  Conf.~(ACC), 2013, pp.~1982--1987.

\bibitem{golub2012homophily}
{\sc B.~Golub and M.~O. Jackson}, {\em How homophily affects the speed of
  learning and best-response dynamics}, Q. J. Econ., 127 (2012),
  pp.~1287--1338.

\bibitem{hautus1970stabilization}
{\sc M.~Hautus}, {\em Stabilization, controllability and observability of
  linear autonomous systems}, in Indag. Math., vol.~73, 1970, pp.~448--455.

\bibitem{jackson2010social}
{\sc M.~O. Jackson}, {\em Social and economic networks}, Princeton university
  press, 2010.

\bibitem{joseph2020controllability}
{\sc G.~Joseph and C.~Murthy}, {\em Controllability of linear dynamical systems
  under input sparsity constraints}, Accepted, {IEEE} Trans. Autom. Control,
  (2020).

\bibitem{kafashan2016relating}
{\sc M.~Kafashan, A.~Nandi, and S.~Ching}, {\em Relating observability and
  compressed sensing of time-varying signals in recurrent linear networks},
  Neural Netw., 83 (2016), pp.~11--20.

\bibitem{kailath1980linear}
{\sc T.~Kailath}, {\em Linear systems}, Prentice-Hall Englewood Cliffs, NJ,
  1980.

\bibitem{kalman1959general}
{\sc R.~Kalman}, {\em On the general theory of control systems}, IRE Trans.
  Autom. Control, 4 (1959), pp.~110--110.

\bibitem{lerman2016}
{\sc K.~Lerman, X.~Yan, and X.-Z. Wu}, {\em The ``majority illusion" in social
  networks}, PloS one, 11 (2016).

\bibitem{li2016piecewise}
{\sc K.~Li, C.~R. Rojas, T.~Yang, H.~Hjalmarsson, K.~H. Johansson, and
  S.~Cong}, {\em Piecewise sparse signal recovery via piecewise orthogonal
  matching pursuit}, in Proc.\ {ICASSP}, 2016, pp.~4608--4612.

\bibitem{liu2014control}
{\sc Z.~Liu, J.~Ma, Y.~Zeng, L.~Yang, Q.~Huang, and H.~Wu}, {\em On the control
  of opinion dynamics in social networks}, Physica A., 409 (2014),
  pp.~183--198.

\bibitem{luh2018sparse}
{\sc K.~Luh and V.~Vu}, {\em Sparse random matrices have simple spectrum},
  Accepted, {Annales de Annales de l’Institut Henri Poincar{\'e}
  Probabilit{\'e}s et Statistiques},  (2018),
  \url{https://arxiv.org/pdf/1802.03662.pdf}.

\bibitem{mitzenmacher2004brief}
{\sc M.~Mitzenmacher}, {\em A brief history of generative models for power law
  and lognormal distributions}, {Internet Math.}, 1 (2004), pp.~226--251.

\bibitem{nabi2014social}
{\sc M.~Nabi-Abdolyousefi}, {\em Social control and optimal marketing}, in
  Controllability, Identification, and Randomness in Distributed Systems,
  Springer, 2014, pp.~137--146.

\bibitem{nettasinghe2018your}
{\sc B.~Nettasinghe and V.~Krishnamurthy}, {\em ``{W}hat do your friends
  think?": Efficient polling methods for networks using friendship paradox},
  IEEE Trans. Knowl. Data Eng.,  (2019).

\bibitem{newman2003structure}
{\sc M.~E. Newman}, {\em The structure and function of complex networks}, SIAM
  review, 45 (2003), pp.~167--256.

\bibitem{newman2005power}
{\sc M.~E. Newman}, {\em Power laws, {P}areto distributions and {Z}ipf's law},
  {Contemp. Phys.}, 46 (2005), pp.~323--351.

\bibitem{pasqualetti2014controllability}
{\sc F.~Pasqualetti, S.~Zampieri, and F.~Bullo}, {\em Controllability metrics,
  limitations and algorithms for complex networks}, IEEE Trans. Control Netw.
  Syst., 1 (2014), pp.~40--52.

\bibitem{pastor2015epidemic}
{\sc R.~Pastor-Satorras, C.~Castellano, P.~Van~Mieghem, and A.~Vespignani},
  {\em Epidemic processes in complex networks}, Rev. Mod. Phys., 87 (2015),
  p.~925.

\bibitem{rahmani2009controllability}
{\sc A.~Rahmani, M.~Ji, M.~Mesbahi, and M.~Egerstedt}, {\em Controllability of
  multi-agent systems from a graph-theoretic perspective}, SIAM J. Control
  Optim., 48 (2009), pp.~162--186.

\bibitem{rudelson2008littlewood}
{\sc M.~Rudelson and R.~Vershynin}, {\em The {Littlewood--Offord} problem and
  invertibility of random matrices}, Adv. Math., 218 (2008), pp.~600--633.

\bibitem{rudelson2009smallest}
{\sc M.~Rudelson and R.~Vershynin}, {\em Smallest singular value of a random
  rectangular matrix}, Commun. Pure Appl. Math., 62 (2009), pp.~1707--1739.

\bibitem{sefati2015linear}
{\sc S.~Sefati, N.~J. Cowan, and R.~Vidal}, {\em Linear systems with sparse
  inputs: {O}bservability and input recovery}, in Proc.\ Am.\ Control
  Conf.~(ACC), 2015, pp.~5251--5257.

\bibitem{shu2019studying}
{\sc K.~Shu, H.~R. Bernard, and H.~Liu}, {\em Studying fake news via network
  analysis: Detection and mitigation}, in Emerging Research Challenges and
  Opportunities in Computational Social Network Analysis and Mining, Springer,
  2019, pp.~43--65.

\bibitem{sriram2020control}
{\sc C.~Sriram, G.~Joseph, and C.~R. Murthy}, {\em Control of linear dynamical
  systems using sparse inputs}, in Proc.\ {ICASSP}, 2020, pp.~5765--5769.

\bibitem{vershynin2014invertibility}
{\sc R.~Vershynin}, {\em Invertibility of symmetric random matrices}, Random
  Struct. Algor., 44 (2014), pp.~135--182.

\bibitem{vershynin2018high}
{\sc R.~Vershynin}, {\em High-dimensional probability: An introduction with
  applications in data science}, vol.~47, Cambridge university press, 2018.

\bibitem{wai2016active}
{\sc H.-T. Wai, A.~Scaglione, and A.~Leshem}, {\em Active sensing of social
  networks}, IEEE Trans. Signal Inf. Process. Netw., 2 (2016), pp.~406--419.

\bibitem{wendt2019control}
{\sc N.~Wendt, C.~Dhal, and S.~Roy}, {\em Control of network opinion dynamics
  by a selfish agent with limited visibility}, IFAC-PapersOnLine, 52 (2019),
  pp.~37--42.

\bibitem{zhao2018understanding}
{\sc Y.~Zhao, G.~Kou, Y.~Peng, and Y.~Chen}, {\em Understanding influence power
  of opinion leaders in e-commerce networks: An opinion dynamics theory
  perspective}, Inf. Sci., 426 (2018), pp.~131--147.

\bibitem{zhao2016bounded}
{\sc Y.~Zhao, L.~Zhang, M.~Tang, and G.~Kou}, {\em Bounded confidence opinion
  dynamics with opinion leaders and environmental noises}, Comput. Oper. Res.,
  74 (2016), pp.~205--213.

\end{thebibliography}

\end{document}